\newtheorem{lemma}{Lemma}
\newtheorem{theorem}{Theorem}
\newtheorem{proposition}{Proposition}
\theoremstyle{definition}
\newcommand{\todo}[1]{}
\newcommand{\later}[1]{}
\newcommand{\OPT}{\mathrm{OPT}}
\newcommand{\ALG}{\mathrm{ALG}}
\newcommand{\Ex}[2][]{\mbox{\rm\bf E}_{#1}\left[#2\right]}
\renewcommand{\Pr}[2][]{\mbox{\rm\bf Pr}_{#1}\left[#2\right]}
\newcommand{\e}{\mathrm{e}}
\newcommand{\Msell}{M_{\textnormal{SELL}}}
\newcommand{\Mbuy}{M_{\textnormal{BUY}}}
\newcommand{\A}{\mathcal{A}}
\newcommand{\SW}{\textnormal{SW}}
\title{Truthful Mechanisms for Two-Sided Markets via Prophet Inequalities}
\author{Alexander Braun \footnote{ alexander.braun@uni-bonn.de} \qquad Thomas Kesselheim \footnote{ thomas.kesselheim@uni-bonn.de} \\{\small Institute of Computer Science, University of Bonn, Germany}}
\begin{document}
	\maketitle
	\begin{abstract}
		We design novel mechanisms for welfare-maximization in two-sided markets. That is, there are buyers willing to purchase items and sellers holding items initially, both acting rationally and strategically in order to maximize utility. Our mechanisms are designed based on a powerful correspondence between two-sided markets and prophet inequalities. They satisfy individual rationality, dominant-strategy incentive compatibility, budget-balance constraints and give constant-factor approximations to the optimal social welfare.

We improve previous results in several settings: Our main focus is on matroid double auctions, where the set of buyers who obtain an item needs to be independent in a matroid. We construct two mechanisms, the first being a $1/3$-approximation of the optimal social welfare satisfying strong budget-balance and requiring the agents to trade in a customized order, the second being a $1/2$-approximation, weakly budget-balanced and able to deal with online arrival determined by an adversary. In addition, we construct constant-factor approximations in two-sided markets when buyers need to fulfill a knapsack constraint. Also, in combinatorial double auctions, where buyers have valuation functions over item bundles instead of being interested in only one item, using similar techniques, we design a mechanism which is a $1/2$-approximation of the optimal social welfare, strongly budget-balanced and can deal with online arrival of agents in an adversarial order.
	\end{abstract}
	\newpage
	\section{Introduction}

Mechanisms for allocation problems in one-sided markets have been studied for decades. The goal is to allocate items to agents in order to maximize either the revenue of the auctioneer or the social welfare. In this setting, a fundamental assumption is that all items are initially held by the auctioneer who does not have any value for any of them. Various different auctions and allocation procedures for one-sided markets have been developed, such as VCG \citep{RePEc:ecm:emetrp:v:41:y:1973:i:4:p:617-31, RePEc:bla:jfinan:v:16:y:1961:i:1:p:8-37, RePEc:kap:pubcho:v:11:y:1971:i:1:p:17-33}, posted-prices mechanisms (as e.g. in \citet{DBLP:conf/stoc/ChawlaHMS10, DBLP:conf/focs/DuettingFKL17}) and many other auction formats.

In a related but different setting, items are held by strategic sellers initially.
That is, each seller has a valuation over her bundle of items and acts rationally 
with the goal to maximize utility. Examples are widely spread, as to mention stock exchanges, ad auctions or online marketplaces such as ebay. In the context of mechanism design, this imposes the following task: construct a mechanism which specifies trades between buyers and sellers and determine suitable prices for each trade with the objective of maximizing the overall social welfare.

Standard requirements for mechanisms are \emph{individual rationality} (IR) and \emph{dominant strategy incentive compatibility} (DSIC). The former means that it cannot be harmful for any agent to participate in the mechanism---the latter that reporting preferences truthfully is a dominant strategy for any agent, no matter what other agents report. Furthermore, as one cannot assume that there is a superior authority funding beneficial trades in two-sided markets, an additional natural requirement is \emph{budget balance}. Its stronger version, \emph{strong budget balance} (SBB), means that the mechanism can neither subsidize trades nor is allowed to extract money from trades. In other words, this requires that all money which is spent by buyers is transferred to sellers. The weaker form, \emph{weak budget balance} (WBB), only requires the first property, namely that subsidizing trades is prohibited, but the mechanism is allowed to extract money from trades.

Unfortunately, in their seminal work from 1983, \citet{5048d8a1d97645ff80d34f037fa4971a} showed that no mechanism can simultaneously be individually rational, incentive compatible, budget balanced and optimize social welfare\footnote{The original result from Myerson and Satterthwaite is for bilateral trade instances, i.e. one seller holding one item and one buyer. They show that even individual rationality and (Bayesian) incentive compatibility cannot be combined with achieving the optimal ex-post social welfare.}. This result is a sharp contrast to one-sided markets where optimal results are possible \citep{RePEc:bla:jfinan:v:16:y:1961:i:1:p:8-37, DBLP:journals/mor/Myerson81}. As a consequence, \emph{approximating} the optimal social welfare becomes a natural challenge. Even further, trying to approximate the optimal social welfare with a rather simple mechanism which can be easily understood by all participants may be an even more desirable goal.

Probably the most fundamental problem is \emph{bilateral trade}  (see e.g. \citet{5048d8a1d97645ff80d34f037fa4971a, DBLP:journals/corr/BlumrosenD16, Kang2018StrategyProofAO}). There is one seller holding one indivisible item and one buyer. In more general \emph{double auctions}, there might be multiple buyers, multiple sellers, multiple items, and complex combinatorial constraints. In \emph{matroid double auctions} (see e.g. \citet{10.1145/2600057.2602854, 10.5555/2884435.2884533}), each seller initially holds one of $m$ identical items, each buyer wants to purchase at most one of them and the set of buyers who receive an item needs to be an independent set in a matroid. In \emph{combinatorial double auctions}, there are $k$ sellers holding $m$ heterogeneous items and the agents have combinatorial valuation functions over item bundles (see e.g. \citet{10.1145/2600057.2602843, 10.1145/3381523}). In \emph{knapsack double auctions}, the matroid constraint over the set of buyers is replaced by a knapsack constraint (see e.g. \citet{10.1145/2600057.2602854}). That is, each buyer has a weight and we need to select buyers in a way such that the sum of weights does not exceed a certain capacity. In any of the settings, agents are assumed to maximize their (quasi-linear) utilities. It is a standard assumption in this context to assume a Bayesian setting: All agents have privately known valuation functions over item bundles. These valuation functions are drawn independently from (possibly different) publicly known distributions.

\subsection{Bilateral Trade via Prophet Inequalities}
For bilateral trade, there is a very simple mechanism template: Let $v_s$ denote the seller's value and $v_b$ denote the buyer's value for the item. Both are drawn independently from some probability distributions. Fix a price $p$ and trade the item if and only if $v_b \geq p \geq v_s$. Among others, \citet{10.1145/2600057.2602843, DBLP:journals/corr/BlumrosenD16} and \citet{Gerstgrasser_Goldberg_deKeijzer_Lazos_Skopalik_2019} set $p$ to be the median of the seller's distribution which recovers an expected welfare of at least $\frac{1}{2} \cdot \Ex[]{\max \{ v_s, v_b \}}$, so it is a $\frac{1}{2}$-approximation.

This simple mechanism can also be interpreted as a sequential posted-prices mechanism with price $p$, the mechanism first asks seller $s$ if she would like to keep or try selling the item for price $p$. Afterwards, buyer $b$ may purchase the item for price $p$ if the seller accepted a trade. This setting is conceptually similar to posting a price $p$ in a one-sided market with two buyers. The only difference is that if both values are below $p$ the seller keeps the item whereas in the one-sided market it is assumed to be discarded. 

Due to this correspondence, we can easily lower-bound the social welfare via \emph{prophet inequalities}, originally introduced by \citet{krengel1977semiamarts, krengel1978semiamarts} and \citet{Samuel84}. In particular, one can also recover half of the optimal social welfare as follows: Instead of applying a pricing strategy via quantiles, one could also use a \emph{balanced price} $p = \frac{1}{2} \cdot \Ex[]{\max \{ v_s, v_b \}}$. A proof of the approximation guarantee directly follows by the above considerations (see Appendix~\ref{appendix:bilateral_trade} for details) via standard prophet inequality results \citep{10.1145/2213977.2213991, DBLP:conf/soda/FeldmanGL15} for $n = 2$. 

The observation immediately raises the question whether all problems in two-sided markets can be solved via posted-prices mechanisms and prophet inequalities in such a straightforward way. We might interpret sellers as buyers, consider them first and ask which items they would like to keep, afterwards offer the remaining items to buyers. Unfortunately, this brings about a number of issues. First and foremost, budget balance is not guaranteed because the payments by the buyers will usually not match what we promise the sellers.

In this paper, we demonstrate that nonetheless the pricing and proof strategies from prophet inequalities give us powerful tools to design mechanisms in two-sided markets. With these, we are able to design mechanisms which obtain improved approximation guarantees concerning social welfare.

\subsection{Our Results}

Based on the technique of balanced prices \citep{10.1145/2213977.2213991,DBLP:conf/focs/DuettingFKL17}, we design mechanisms for two-sided markets. All our mechanisms are DSIC and IR for all agents; they fulfill different variants of budget balance.

Our main results are two mechanisms for double auctions with a matroid constraint over the set of buyers. The first mechanism (Section~\ref{section:matroide_sbb}) is strongly budget-balanced and a $1/3$-approximation with respect to the optimal social welfare. It relies on balanced prices used to obtain matroid prophet inequalities \citep{10.1145/2213977.2213991}. As these prices change based on previous decisions, our mechanism has to carefully choose the order in which agents are offered trades.

The second mechanism (Section~\ref{section:matroide_wbb}) is only weakly budget-balanced but the approximation guarantee improves to a $1/2$-fraction of the optimal social welfare. Another advantage is that the order can be arbitrary. That is, this mechanism can also deal with online arrival of agents where the order can be chosen by an adversary. The adversary may even adapt the order depending on the agents and realizations before. The best results for this setting so far were approximation ratios of $1/16$ by \citet{10.5555/2884435.2884533} and $1/(3 + \sqrt{3})$ by \citet{duetting2020efficient}\footnote{We note that the setting in \citet{duetting2020efficient} is different from ours as they construct mechanisms with limited sample-based knowledge of the distributions. Still, their results were the best known so far, also for the setting with complete knowledge of the distributions.}.

Furthermore, we develop a mechanism for combinatorial double auctions (Section~\ref{section:xos_sbb}), which is derived from the result by \citet{DBLP:conf/soda/FeldmanGL15} in one-sided markets. Our mechanism is truthful in two different settings. On the one hand, we assume that each seller holds one item initially and that buyers have fractionally subadditive valuation functions over item bundles. On the other hand, we can allow sellers to hold multiple heterogeneous items and all agents to have additive valuation functions. The mechanism is DSIC, IR, strongly budget-balanced and can handle online adversarial arrivals of buyers. The mechanism's approximation ratio is $1/2$ with respect to the optimal social welfare. By this, we improve the factor of $1/6$ in \citet{10.1145/3381523} and the bounds obtained by \citet{duetting2020efficient}, who are able to show a bound of $1/3$ for unit-supply sellers and buyers with fractionally subadditive valuation functions. In contrast to our results, they require agents to be allowed to trade in any order whereas our results for combinatorial double auctions still hold when agents arrive online and the order is determined by an adversary.

Finally, we can design two mechanisms for knapsack double auctions where the set of buyers who receive an item needs to be feasible with respect to a knapsack constraint, sellers bring identical items to the market and buyers have unit-demand valuations. We obtain a strongly budget-balanced mechanism in Section \ref{section:knapsack_sbb}, which considers agents online in a customized order leading to a competitive ratio of $1/10$. On the other hand, we improve this guarantee to $1/7$ by a weakly budget-balanced mechanism for online adversarial arrival order in Section \ref{section:knapsack_wbb}.

\begin{table*}[h]
		\begin{tabular}{|ll|l|l|l||l|}
			\hline
			&   & \textbf{Budget-Bal.} & \textbf{Approx.} & \textbf{Trading Order} & \textbf{Previous Best} \\
			\hline 	
			\multicolumn{2}{|l|}{\emph{Matroid DA: }}  & Strong & $1/3$  & Offline & $1/16$\textsuperscript{ a} and  \\
			& &  Weak & $1/2$  & Online Adv. &  $1/(3+\sqrt{3})$\textsuperscript{ b}\\
			\hline
			\multicolumn{2}{|l|}{\emph{Combinatorial DA: }} & & & & \\
			& XOS + Unit-Supply & Strong &  $1/2$ & Online Adv. & $1/6$\textsuperscript{ c} and  \\
			& Additive + Additive  &  Strong &  $1/2$ &Online Adv. & $1/3$\textsuperscript{ b} \\
			\hline
			\multicolumn{2}{|l|}{\emph{Knapsack DA: }} & Strong & $1/10$ &  Online Custom. & \\
			&  & Weak & $1/7$ & Online Adv.&  \\
			\hline
		\end{tabular} 
		\label{Table:Results}
		\caption{Our state-of-the-art approximation guarantees for mechanisms in matroid, combinatorial and knapsack double auctions. Concerning the previous best results, 'a' can be found in \citet{10.5555/2884435.2884533}, 'b' in \citet{duetting2020efficient} and 'c' in \citet{10.1145/3381523}.}
\end{table*}

Observe that for the settings with online adversarial arrival order of buyers, the approximation ratio of $1/2$ (as in matroid double auctions and combinatorial double auctions) matches a tight upper bound as soon as there are at least two buyers, which corresponds to the commonly known instance for prophet inequalities (see Section~\ref{section:upper_bounds}). This, of course, does not apply to bilateral trades, for which deriving the optimal approximation ratio is still an open problem.

As a side remark, when only having sample-based access instead of full knowledge of the distribution, in one-sided environments it is known that $Poly(n,m,\epsilon^{-1})$ samples from every distribution suffice to lose only an additive $O(n\epsilon)$-term \citep{DBLP:conf/focs/DuettingFKL17}. These guarantees for balanced pricing carry over to the two-sided market setting. For details concerning the techniques on sample-based access, we refer to the respective sections in \citet{DBLP:conf/soda/FeldmanGL15} and \citet{DBLP:conf/focs/DuettingFKL17}.

\subsection{Our Techniques}

The idea behind balanced prices in one-sided markets \citep{10.1145/2213977.2213991, DBLP:conf/soda/FeldmanGL15, DBLP:conf/focs/DuettingFKL17} is that they are low enough so that the agents can afford the items they are allocated in the social optimum (which means they have high utility). At the same time, they should be high enough so that the revenue covers the loss in social welfare due to allocations not in line with the social optimum.

In two-sided markets, in order to propose trades we use prices that are again low enough and high enough. On the one hand, agents may have values exceeding the prices and hence, either keep items (as sellers) or purchase items (as buyers). On the other hand, prices should be high enough so that once an agent keeps or buys an item, we can ensure that her value is sufficiently large to cover the loss in social welfare by allocating the item.

Our proofs concerning the approximation guarantees mimic the spirit of revenue and utility based-ones in one-sided markets: we split the contribution to welfare of each agent into the base value, defined by the price of the proposed trade, and surplus, which is the amount by how much the agent's value exceeds the price. 
Afterwards, we bound each quantity separately. As a matter of fact, it does not play a key role which agent purchases or keeps which item---since any irrevocably allocated item ensures a sufficient contribution to welfare via its price. This is a sharp contrast to mechanisms in which the output allocation plays a key role in order to obtain a specific approximation guarantee.

Sequential posted-prices mechanisms in one-sided markets are DSIC by design as we only offer any agent the possibility to purchase items at most once. When extending these concepts to two-sided markets, an additional major challenge is to fulfill the budget-balance constraint. As illustrated above, a straightforward generalization of the prophet inequality techniques might not be possible as it may lack money in trades. To overcome these problems, one would like to consider agents multiple times for trades. Nonetheless, truthfulness constraints may get violated if we offer one agent more than one trade. 

In order to obtain both, budget-balance and dominant strategy incentive compatibility, our mechanisms carefully choose when deciding to propose which trade between which agents at which price. In particular, if we offer different prices to one seller, they should not increase over time because otherwise the seller might strategically wait for a higher price and by this, reject earlier trades.

\subsection{Paper Organization}

The paper is structured as follows: In Section \ref{Section:Preliminaries}, we give basic definitions and notation. In Section \ref{section:matroide_sbb}, we state our mechanism for matroid double auctions which satisfies strong budget balance with a proof for the approximation guarantee in the simplified full information setting. The proof for the general incomplete information setting is given in Appendix~\ref{appendix:matroids_sbb}. In Section~\ref{section:matroide_wbb}, we give our mechanism for matroid double auctions which satisfies weak budget balance for any online adversarial arrival order of agents. The proof for the competitive ratio can be found in Appendix~\ref{appendix:matroid_wbb}. In Section \ref{section:xos_sbb}, we consider the case of combinatorial double auctions, state our mechanism and give a high-level reduction to prophet inequalities. The formal proof of the competitive ratio which is highly related to the proof of the corresponding prophet inequality for XOS-valuation functions can be found in Appendix \ref{appendix:xos_sbb}. In Section \ref{section:knapsack_sbb}, we consider knapsack double auctions with strong budget balance and give a proof for the approximation guarantee. In Section \ref{section:knapsack_wbb}, we state our mechanism for knapsack double auctions which admits online adversarial arrival of agents and is weakly budget-balanced. The proof for the competitive ratio can be found in Appendix \ref{appendix:knapsack_wbb}. Final remarks and questions for future research can be found in Section~\ref{Section:Conclusion}. 
 
\subsection{Related Work}

Two-sided markets have been studied for a long time, including the mentioned impossibility result by \citet{5048d8a1d97645ff80d34f037fa4971a} and pioneering work on trade-reduction mechanisms and their generalizations as e.g. considered in \citet{MCAFEE1992434, 10.1145/2600057.2602854, 10.1145/779928.779937, 10.5555/1622467.1622485}. Only much more recently, worst-case approximation ratios have been considered. There has been a lot of progress on improving the guarantees for bilateral trade \citep[among others:][]{10.1145/2600057.2602843, DBLP:journals/corr/BlumrosenD16, Kang2018StrategyProofAO, Gerstgrasser_Goldberg_deKeijzer_Lazos_Skopalik_2019}. However, determining the optimal guarantee is still an open problem.

Most relevant to our work are the ones of \citet{10.5555/2884435.2884533} and \citet{10.1145/3381523}, which derive mechanisms for matroid and combinatorial double auctions in Bayesian settings. \citet{10.5555/2884435.2884533} focus on matroid double auctions, designing mechanisms with pricing strategies based on quantiles, whereas our approach uses balanced prices.
\citet{10.1145/3381523} consider combinatorial double auctions using very similar prices as ours. However, the analysis is different as their proofs rely on case distinctions where our proofs use charging arguments from balanced prices. Another important contribution of \citet{10.1145/3381523} is the introduction and discussion of direct-trade budget-balance, which we also adopt in this paper.
\citet{duetting2020efficient} consider the same constraints. Besides giving improved approximation guarantees, they change the fundamental assumption of the Bayesian setting: They design mechanisms given only sample-based access to the underlying distribution.

There is also a line of work using different objective functions in two-sided markets, most prominently \emph{gain from trade} \citep{10.1007/978-3-662-54110-4_28, 10.1145/3033274.3085148, colini-baldeschi2017a, 10.1145/3219166.3219203, Segal_Halevi_2016, 10.1007/978-3-319-99660-8_15, DBLP:journals/corr/Segal-HaleviHA16, cai2020multidimensional}. In this setting, only the \emph{increase} in welfare by transferring items from sellers to buyers is measured. An $\alpha$-approximation with respect to gain from trade is also an $\alpha$-approximation with respect to social welfare but not vice versa. Indeed, \citet{DBLP:journals/corr/BlumrosenD16} and \citet{10.1007/978-3-662-54110-4_28} show that approximating the gain from trade is harder than social welfare: There is no DISC, IR and SBB mechanism which can achieve a constant factor approximation to the optimal gain from trade. \citet{DBLP:conf/soda/BabaioffGG20} tackle the question by how many buyers and sellers the size of the two-sided market needs to be increased in order to recover the optimal gain from trade from the original market, mirroring the seminal work of \citet{RePEc:aea:aecrev:v:86:y:1996:i:1:p:180-94}. Another interesting objective function is the profit of the sellers in two-sided markets as considered by \citet{DBLP:journals/corr/abs-1906-09305}.

Prophet inequalities date back even to the 1970s \citep{krengel1977semiamarts, krengel1978semiamarts,Samuel84}. \citet{DBLP:conf/aaai/HajiaghayiKS07} and \citet{DBLP:conf/stoc/ChawlaHMS10} introduced their use in algorithmic mechanism design for one-sided markets. They mainly have two applications: On the one hand, they can be interpreted as posted-price mechanisms for welfare maximization with multiple buyers. On the other hand, they provide a useful tool to approximate revenue for one buyer with multiple items. Most relevant to our approach is the concept of balanced prices as applied by \citet{10.1145/2213977.2213991} and \citet{DBLP:conf/esa/DuttingK15} for settings with matroid constraints, in \citet{DBLP:conf/soda/FeldmanGL15} for combinatorial auctions and in \citet{DBLP:conf/focs/DuettingFKL17} as a generalized version of both. For a more detailed overview on prophet inequalities in the context of posted-prices mechanisms, we refer to Lucier's excellent survey \citep{DBLP:journals/sigecom/Lucier17}.

	\section{Preliminaries}
\label{Section:Preliminaries}

We consider the following setup for two-sided markets: There is a set of $n$ buyers $B$, a set of $k$ sellers $S$ and a set of $m$ items $M$. We assume that $B \cap S = \emptyset$, so any agent can either act as a buyer or a seller. Before running any (reallocation) mechanism, the set of items is initially held by the sellers. We denote by $I_l$ the set of items which is hold by seller $l$ initially and call the vector $\left( I_1,\dots, I_{k} \right)$ the \emph{initial allocation}. Note that the sets $I_l$ are pairwise disjoint, i.e. for any two sellers $l\neq l'$ we have $I_l \cap I_{l'} = \emptyset$, and further all items are allocated to some seller before running our mechanism, i.e. $\bigcup_{l \in S} I_l = M$. \\ 
Any agent $i \in B \cup S$ has a privately known valuation function $v_i : 2^M \rightarrow \mathbb{R}_{\geq 0}$. For $T \subseteq M$, we denote by $v_i(T)$ the value of agent $i$ for being allocated item bundle $T$. Any seller $l$ is assumed to have only positive value for items in her initial bundle $I_l$, i.e. for any seller $l \in S$ and $T \subseteq M$ it holds that $v_l(T) = v_l(T \cap I_l)$. Valuation functions are always non-negative and bounded for any bundle as well as monotone and normalized, i.e. $v_i(T) \leq v_i(T')$ for $T \subseteq T' \subseteq M$ and $v_i(\emptyset) = 0$. 
We consider a Bayesian setting where each agent $i$'s valuation function is drawn independently from a publicly known, not necessarily identical probability distribution $\mathcal{D}_i$, that is, $\mathcal{D}_i$ is a probability distribution over the \textit{space of valuation functions} $V_i$. We denote by $\mathcal{D} = \times_{i \in B \cup S} \mathcal{D}_i$ the joint probability distribution of the space of all agents' valuation functions $\mathbf{V} = \times_{i \in B \cup S} V_i$ and refer to $\mathbf{v}$ as a \textit{valuation profile} which consists of one valuation function per agent. \\
An \textit{allocation} $\mathbf{X} = (X_i)_{i \in B \cup S}$ is a vector of item bundles such that agent $i$ is allocated bundle $X_i$ and for two agents $i \neq i'$, we have $X_i \cap X_{i'} = \emptyset$. The \textit{social welfare} of an allocation $\mathbf{X}$ given valuation profile $\mathbf{v}$ is defined as $\mathbf{v}(\mathbf{X}) \coloneqq \sum_{i \in B \cup S} v_i(X_i)$. Concerning \textit{feasibility}, as said, any seller $l \in S$ can only receive items in her initial allocation, i.e. $X_l \subseteq I_l$ for any $l \in S$.

\subsection*{Mechanisms and their properties}

A (direct revelation) \textit{mechanism} takes as input a vector of valuation functions which are reported by agents. Agents can report any possible valuation in their space of valuation functions $V_i$, not necessarily their true one. A mechanism outputs an allocation of items to agents $\mathbf{X}$ as well as payments $\mathbf{P}$. For buyers, payments are negative meaning that they pay money to the mechanism whereas for sellers, payments are positive as they receive money. 

Agents are assumed to maximize \textit{utility}. Fixing a valuation profile $\mathbf{v}$, an allocation $\mathbf{X}$ and payments $\mathbf{P}$, the (quasi-linear) utility of buyer $i$ for being allocated bundle $X_i \subseteq M$ is given by $u_i(X_i) = v_i(X_i) - P_i$ whereas the utility for seller $l$ who remains with bundle $X_l \subseteq I_l$ is given by $u_l(X_l) = v_l(X_l) + P_l$.

Our mechanisms are designed to fulfill the following desirable constraints:
\begin{itemize}
	\item \emph{Dominant Strategy Incentive Compatibility} (DSIC): It is a dominant strategy for every agent to report her true valuation independent of the other agents' behavior. 
	\item \emph{Individual Rationality} (IR): When playing this dominant strategy, no agent decreases her utility by participating in the mechanism. So, for buyers $v_i(X_i) - P_i \geq 0$ and for sellers $v_l(X_l) + P_l \geq v_l(I_l)$.
	\item \emph{Weak/Strong Budget Balance} (WBB/SBB): The money received by sellers is at most/equals the payments made by buyers, i.e. $\sum_{i \in B} P_i \stackrel{(=)}{\geq}  \sum_{l \in S} P_l$.
\end{itemize}
Concerning budget-balance, \citet{10.1145/3381523} showed a weakness in the original definition of strong budget-balance as cross subsidizing trades with already received money is not prohibited as long as the sum of payments is equal for buyers and sellers\footnote{\citet{10.1145/3381523} argue that turning a WBB into an SBB mechanism is rather easy with a small loss in the approximation guarantee as one can simply draw one seller uniformly at random and give all the surplus money in the WBB mechanism to this seller.}. The stronger notion of \textit{direct-trade weak/strong budget balance} (DWBB/DSBB) requires that the outcome of the mechanism can be obtained by a composition of bilateral trades where in each trade an item is reallocated from seller $l$ to buyer $i$, payments are transferred from some buyer $i$ to some seller $l$ and each item may only be traded at most once \citep{10.1145/3381523}. If the buyer's payment exceeds the seller's receiving, the mechanism is DWBB, if payments in each of these bilateral trades are equal for buyers and sellers, we refer to DSBB. \\
A truthful mechanism outputting allocation $\mathbf{X}$ is an \textit{$\alpha$-approximation} to the optimal social welfare if $\Ex[\mathbf{v}]{\mathbf{v}(\mathbf{X}) } \geq \alpha \cdot \Ex[\mathbf{v}]{\max_{\mathbf{X}^\ast} \mathbf{v}(\mathbf{X}^\ast)}$. In the case of online arrival of agents, we may use $\alpha$-approximation and \emph{$\alpha$-competitive} interchangeably.
	\section{Matroid Double Auctions and Strong Budget-Balance}
\label{section:matroide_sbb}

Our first mechanism is for double auctions where the set of $n$ buyers $B$ is equipped with a matroid\footnote{A \emph{matroid} $\mathcal{M} = \left( \Lambda, \mathcal{I}\right)$ over ground set $\Lambda$ with non-empty set system $\mathcal{I} \subseteq 2^\Lambda$ is defined via the following properties. For two subsets $ X \subseteq Y$ of $\Lambda$ with $Y \in \mathcal{I}$, also $X \in \mathcal{I}$. And for $X,Y \in \mathcal{I}$ with $|X| < |Y|$ there is a $y \in Y \setminus X$ such that $X \cup \{y\} \in \mathcal{I}$. We call sets in $\mathcal{I}$ \emph{independent}.} constraint. That is, there is a matroid $\mathcal{M}_B = \left( B, \mathcal{I}_B \right)$ and the set of buyers who receive an item in the mechanism needs to be an independent set in the matroid $\mathcal{M}_B$. For this section, we assume buyers to be unit-demand\footnote{A valuation function is called \textit{unit-demand} if $v_i(T) = \max_{j \in T} v_i(\{j\})$.} and sellers to be unit-supply, i.e. every seller initially holds a single, indivisible item and hence $k = m$. Items are identical, meaning that $v_i(T)$ only depends on the size of $T$, so agents only care if they get an item or not. Our mechanism requires an offline setting in which buyers and sellers can trade in any order which will be determined during the mechanism. In particular, we assume that we can pick one buyer and one seller in any step out of the remaining ones and offer a trade at some price to both agents. Further, we simplify notation in this section. As the valuation function of any agent boils down to a single value that the agent has for being allocated an item, a valuation profile $\mathbf{v}$ can now be interpreted as a $|B \cup S|$-dimensional vector over the non-negative real numbers in which each entry corresponds to the value of an agent for being allocated an item. We denote this value by $v_i$. Further, as sellers are unit-supply, there is a one-to-one correspondence between sellers and items allowing to denote a seller as well as the corresponding item by $j$.

\subsection*{The Mechanism}

Our mechanism is stated in Algorithm \ref{Matroid_mechanism_sbb} and formally described below. The intuition behind is as follows: We consider a relaxation of the expected optimal social welfare to $\Ex[\mathbf{v}]{ \mathbf{v} \left( \OPT_B(\mathbf{v} ) \right)} + \Ex[\mathbf{v}]{  \sum_{j \in S } v_{j} }$, where $\OPT_B(\mathbf{v})$ is the optimal choice when restricting to the set of buyers. That is, in our relaxation, each item can be counted twice: It will contribute to the first term by being assigned to a buyer while in the second term it is assumed that the seller keeps it. Now, the pricing for a trade between buyer $i$ and seller $j$ needs to ensure that the mechanism recovers a suitable fraction in both terms. In particular, once a trade occurs, the price for this trade covers the loss of both, the seller and the buyer, in the relaxed optimal social welfare. In addition, the remaining share of the social welfare is covered by the surplus. By the choice of the order in which trades are offered, one can ensure that prices are monotone for a fixed seller which is crucial concerning truthfulness and budget-balance.

\begin{algorithm}
	\SetAlgoNoLine
	\DontPrintSemicolon
	\KwResult{Set $A$ of agents to get an item with $A \cap B \in \mathcal{I}_B$ and $|A| = |S|$}
	$A_B \longleftarrow \emptyset$ \qquad 	$A_S \longleftarrow \emptyset$ \qquad $r \longleftarrow |S|$ \qquad $\Msell \longleftarrow S$ \qquad $\Mbuy \longleftarrow B$  \\
	\While{$\Mbuy \neq \emptyset$ \textnormal{and} $\Msell \neq \emptyset$}{
		recompute the thresholds $p_{i}(A_B, r)$ and $p_j(A_B,r)$ with respect to current $A_B$, $r$, $\Msell$ and $\Mbuy$ \\
		$j \in \arg\min_{ j' \in \Msell} p_{j'}(A_B,r)$;  \qquad  $i \in \arg\max_{ i' \in \Mbuy} p_{i'}(A_B,r)$  \\ 
		\If{$A_B \cup \{ i \} \notin \mathcal{I}_B$ \textnormal{or} $| A_B \cup \{i\} | > r$}{
			$\Mbuy \longleftarrow \Mbuy \setminus \{i\}$  \\
			\textbf{go to next iteration}
			}
		$p \longleftarrow p_{i,j}\left( A_B, r \right)$ \\
		\If{$v_j > p$}{
			$A_S \longleftarrow A_S\cup \{j\}$; \qquad $\Msell \longleftarrow \Msell \setminus \{j\}$; \qquad $r \longleftarrow r-1$ \\
			}
		\If{$v_j \leq p$}{
			$\Mbuy \longleftarrow \Mbuy \setminus \{i\}$ \\
			\If{$v_i > p$}{
				$A_B \longleftarrow A_B \cup \{i\}$; \qquad $\Msell \longleftarrow \Msell \setminus \{j\}$ \\
				}
		}
		}
		
		\Return $A := A_B \cup A_S \cup \Msell$ 
		\caption{Mechanism for Matroid Double Auctions with Strong Budget Balance}
		\label{Matroid_mechanism_sbb}
\end{algorithm}

Throughout the algorithm, we maintain a set of agents $A = A_B \cup A_S$ who are irrevocably allocated an item. The set $A_B$ contains all buyers who receive an item, so we require $A_B \in \mathcal{I}_B$. The set $A_S$ contains all sellers who irrevocably keep their item. Additionally, in the set $\Msell$ we store all sellers who may still be considered for a possible trade, meaning that we have neither decided to trade their item nor that they keep it. Analogously, the set $\Mbuy$ denotes the set of buyers who have not been considered for a trade yet. In other words, any agent can be listed in one of three different stages throughout our mechanism: all agents in $\Mbuy$ and $\Msell$ are \emph{pending}, meaning that each of these agent can be considered for a trade. As we offer trades to agents, agents may either remain pending, we may \emph{irrevocably allocate} an item to the agent or the agent may be \emph{irrevocably discarded} for holding an item after the mechanism. \\

We maintain buyer-specific thresholds $p_i$ and seller-specific thresholds $p_j$, which roughly speaking represent how much the two terms in the relaxation of the optimum, $\Ex[\mathbf{v}]{ \mathbf{v} \left( \OPT_B(\mathbf{v} ) \right)}$ and $\Ex[\mathbf{v}]{  \sum_{j \in S } v_{j} }$, are harmed by a trade between $j$ and $i$. The price for a trade between seller $j$ and buyer $i$ will then be defined as $p_{i,j} = \textnormal{constant} \cdot ( p_i + p_j) $. By this, we ensure that once an item is irrevocably allocated, the value of the agent who gets the item is high enough to cover the welfare loss in both terms of the relaxed optimal social welfare. In every iteration, among all buyers $i \in \Mbuy$ that can still be added, we consider the one with the largest threshold $p_i$. We try to match her to the seller $j \in \Msell$ with the smallest threshold\footnote{Break ties arbitrarily, but always in the same way.} $p_j$. To this end, we first ask seller $j$ if she wants to sell or keep her item for a price of $p_{i,j}$. If she wants to keep her item, we remove seller $j$ from the set of available sellers. Otherwise, i.e. if seller $j$ considers selling her item, we ask buyer $i$ if she wants to buy the item for price $p_{i,j}$. If buyer $i$ agrees, the item is transferred from $j$ to $i$, both are removed from the set of available agents, $i$ is irrevocably allocated an item, $j$ is irrevocably discarded for holding an item and $i$ pays $p_{i,j}$ to seller $j$. Else, buyer $i$ is removed from the set of available buyers and irrevocably discarded. Then we move to the next iteration, in which we consider a different pair for trading.
	
\subsection*{The Pricing}

	First, by construction, our mechanism never offers trades to buyers who cannot be feasibly added to $A_B$. Hence, the mechanism ensures that the set of buyers $A_B$ who receive an item in our mechanism is an independent set in the matroid, i.e. $A_B \in \mathcal{I}_B$. Additionally, we do not promise items to agents once all items are irrevocably allocated. The price for any feasible trade is calculated in an agent-specific way extending the method of balanced thresholds by \citet{10.1145/2213977.2213991} and balanced prices by \citet{DBLP:conf/focs/DuettingFKL17} to two-sided markets.

	Recall that $A_B$ contains all buyers who receive an item and $A_S$ contains all sellers who irrevocably keep their item. By $r$ we denote the number of items which may  be allocated to buyers in total, i.e. which are not irrevocably kept by a seller, so $r = |S| - |A_S|$. Observe that $r$ is decreasing in our mechanism every time a seller decides to irrevocably keep her item. Given the matroid over the set of buyers, we need to ensure that we do not pick more than $r$ buyers in our mechanism. 

	Fixing a valuation profile $\mathbf{v}$, we let $\OPT_B(\mathbf{v} | A_B, r) \in \arg\max_{B' \subseteq B \setminus A_B, B' \cup A_B \in \mathcal{I}_B, \lvert B' \cup A_B \rvert \leq r} \sum_{i \in B'} v_i$. That is, $\OPT_B(\mathbf{v} | A_B, r)$ denotes the following allocation. Assume that we are only allowed to assign items to buyers (not to sellers) and we have already allocated items to buyers in $A_B$ and at most $r$ items can be allocated to buyers in total. Then $\OPT_B(\mathbf{v} | A_B, r)$ is the allocation that maximizes the welfare increase.
	The value of this partial allocation is denoted by $\mathbf{v} \left( \OPT_B(\mathbf{v} | A_B, r) \right)$. Further, we define $\OPT_B(\mathbf{v}) = \OPT_B(\mathbf{v} | \emptyset, |S|)$ to be the optimal allocation of items to buyers.
	
	The threshold of buyer $i$ is defined with respect to the current state of $A_B$ and the number of items $r$. 
	For a fixed valuation profile $\mathbf{v}$, let \[ p_{i}( A_B, r,\mathbf{v}) = \mathbf{v} \left( \OPT_B(\mathbf{v} | A_B, r) \right) - \mathbf{v} \left( \OPT_B(\mathbf{v} | A_B \cup \{ i \}, r ) \right)  \]
	if $A_B \cup \{ i \} \in \mathcal{I}_B$ and $| A_B \cup \{i\} | \leq r$. So, $p_{i}( A_B, r,\mathbf{v})$ is the difference in welfare which we can achieve by allocating $r$ items to buyers given we have already allocated items to buyers in $A_B$ and $A_B \cup \{i\}$ respectively. To simplify notation, we define $p_{i}( A_B, r,\mathbf{v}) = \infty$ if $A_B \cup \{ i \} \not\in \mathcal{I}_B$ or $| A_B \cup \{i\} | > r$.
	
	Based on this, define buyer $i$'s threshold as \[ p_{i}( A_B, r) = \Ex[\mathbf{\widetilde{v}} \sim \mathcal{D}]{p_{i}( A_B, r, \mathbf{\widetilde{v}})} \enspace. \]
	For a seller $j$, we set the seller-specific threshold to \[ p_j = \Ex[\widetilde{v}_j \sim \mathcal{D}_j]{\widetilde{v}_j} \] which is simply the expected value of the distribution of seller $j$'s value for an item. Now, fix a buyer-seller-pair $(i,j)$ which is available for trading and denote the price for a trade between $i$ and $j$ by \[ p_{i,j}(A_B, r) \coloneqq \frac{1}{3} \left( p_{i}(A_B, r) + p_j \right) \coloneqq \frac{1}{3} \left(  \Ex[\mathbf{\widetilde{v}} \sim \mathcal{D}]{ p_{i}( A_B, r, \mathbf{\widetilde{v}}) } + \Ex[\widetilde{v}_j \sim \mathcal{D}_j]{\widetilde{v}_j} \right) \enspace. \] 
			
\subsection*{Properties of Our Mechanism}

	Note that our mechanism ensures that the final allocation is a feasible solution with respect to the matroid constraint on the buyers' side as we discard any buyer who cannot be added feasibly to the set of allocated agents. Further, we do not allocate more than $|S|$ items in total among all agents in our allocation process.

\begin{theorem} \label{Theorem:Matroid_sbb}
	The mechanism for matroid double auctions is DSBB, DSIC and IR for all buyers and sellers and a $\frac{1}{3}$-approximation to the optimal social welfare.
\end{theorem}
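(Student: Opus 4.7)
The plan is to tackle the four properties separately, dispatching these three from the construction and focusing most of the effort on the $1/3$-approximation. Every completed trade moves one item from seller $j$ to buyer $i$ with buyer-to-seller payment exactly $p_{i,j}$ and no item is ever traded twice, giving DSBB; a trade fires only when $v_j \leq p_{i,j} < v_i$ and non-traders retain their initial allocation, giving IR. For buyer-side DSIC, I would observe that the line ``$\Mbuy \leftarrow \Mbuy \setminus \{i\}$'' removes buyer $i$ the instant she is actually queried (as soon as the seller has accepted), independently of her answer, so she faces at most one take-it-or-leave-it offer whose price depends only on distributions and the other agents' reports; truthful acceptance is then trivially optimal.

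\textbf{Seller-side DSIC via price monotonicity.} The subtle case is that a seller $j$ can be queried in multiple iterations because a buyer's refusal does not remove her from $\Msell$. I would prove by induction that the prices $p_{i,j}(A_B, r) = \tfrac{1}{3}(p_i(A_B, r) + p_j)$ seen by seller $j$ across her queries form a weakly decreasing sequence. Between two consecutive queries of $j$ only buyer-refused or buyer-infeasible steps can occur, and each leaves $A_B$, $r$, and $\Msell$ unchanged while only shrinking $\Mbuy$; since $p_j$ is fixed and the maximum of $p_{i'}(A_B, r)$ over $\Mbuy$ cannot increase when $\Mbuy$ shrinks, $p_{i,j}(A_B, r)$ weakly decreases. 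Given monotone non-increasing prices and a single report, the usual take-it-or-leave-it argument shows truthfulness is dominant for $j$.

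\textbf{The $1/3$-approximation via balanced prices.} In the full-information setting $p_i(A_B, r) = p_i(A_B, r, \mathbf{v})$ and $p_j = v_j$, so $p_{i,j} = \tfrac{1}{3}(p_i(A_B, r, \mathbf{v}) + v_j)$. I would begin with the pointwise relaxation
\[
\mathbf{v}(\OPT) \;\leq\; \mathbf{v}\bigl(\OPT_B(\mathbf{v})\bigr) \;+\; \sum_{j \in S} v_j,
\]
valid because any feasible allocation splits each item between a buyer (total value bounded by the matroid-restricted optimum on $B$) and an original seller (total value bounded by $\sum_j v_j$). On the mechanism side, $\ALG$ sums $v_i$ over $i \in A_B$ plus $v_j$ over sellers retaining their item ($A_S$ together with the residual $\Msell$ at termination). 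I would classify each iteration by outcome---seller keeps ($v_j > p_i/2$), trade completes ($v_i > (p_i + v_j)/3$), or buyer refuses---and charge the welfare produced against both $p_i$ and $v_j$ using the $1/3$ coefficient in $p_{i,j}$. Summing over iterations, the buyer-side thresholds telescope through the standard matroid exchange---every buyer in $\OPT_B(\mathbf{v})$ missing from the final $A_B$ is dominated at the moment of her discard by the threshold of a buyer that was or could have been added---into $\mathbf{v}(\OPT_B(\mathbf{v}))$, while the seller values collected from $A_S$, from the residual $\Msell$, and from prices of completed trades cover $\sum_j v_j$. Combining yields $\ALG \geq \tfrac{1}{3}\bigl(\mathbf{v}(\OPT_B(\mathbf{v})) + \sum_{j} v_j\bigr) \geq \tfrac{1}{3}\mathbf{v}(\OPT)$.

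\textbf{Main obstacle.} The trickiest case to handle is the buyer-refusal iteration, where both $v_j \leq p_{i,j}$ and $v_i \leq p_{i,j}$: no trade fires and no price is collected, yet the discarded buyer may well belong to $\OPT_B(\mathbf{v})$. The two inequalities combine to force $v_i \leq p_i/2$, so her contribution to the relaxation is bounded, but the matroid-exchange bookkeeping must still route this loss through either a future successful trade or the seller term $\sum_j v_j$. Getting the three case decompositions, the matroid exchange, and the two halves of the relaxation to telescope cleanly with constants that add up to exactly $1/3$ is the technical heart of the proof and is what fixes the coefficient $1/3$ in the definition of $p_{i,j}$.
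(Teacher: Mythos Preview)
Your DSBB, IR, and DSIC arguments are correct and match the paper's (Lemma~\ref{Lemma:Matroid_DISC_IR_SBB}); the seller-side price-monotonicity reasoning is exactly what the paper uses.

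For the approximation, your iteration-by-iteration charging plan has a real gap, and your ``main obstacle'' paragraph already exposes it: in a buyer-refusal iteration no welfare and no price is collected, so there is nothing to route the discarded buyer's loss through. Your bound $v_i \le p_i/2$ does not help---in full information, if $i$ lies in the current residual $\OPT_B(\mathbf v\mid A_B,r)$ then $p_i(A_B,r,\mathbf v)=v_i$, so your inequality forces $v_i=0$; for any positive-value refused buyer it only says she is outside the \emph{current} residual, which says nothing about the global $\OPT_B(\mathbf v)$ you are trying to cover. The paper avoids tracking refused buyers individually. It splits welfare into base value $\sum_{i\in A}P_i$ and surplus $\sum_{i\in A}(v_i-P_i)$: the base value telescopes to $\tfrac13\bigl[\mathbf v(\OPT_B(\mathbf v)) - \mathbf v(\OPT_B(\mathbf v\mid A_B,r))\bigr] + \tfrac13\sum_{j\in S\setminus\Msell}v_j$, using a lemma that an $r\mapsto r-1$ step loses at most the current buyer threshold $p_i(A_B,r,\mathbf v)$ in the residual buyer optimum. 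The leftover $\mathbf v(\OPT_B(\mathbf v\mid A_B,r))$ is then recovered in one shot through the surplus via Proposition~2 of Kleinberg--Weinberg, namely $\sum_{i\in V}p_i(A_B,r,\mathbf v)\le \mathbf v(\OPT_B(\mathbf v\mid A_B,r))$ for any feasible $V$ disjoint from $A_B$, combined with buyer-price monotonicity over time and the count $|\OPT_B(\mathbf v\mid A_B,r)|\le|\Msell|$ to control the seller-threshold contribution. These three matroid-specific ingredients are what make the constants close at exactly $1/3$, and none of them appears in your outline.
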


By construction, Mechanism \ref{Matroid_mechanism_sbb} consists of several bilateral trades, where an item is transfered from seller $j$ to buyer $i$ and a price of $p_{i,j}$ is paid by buyer $i$, received by seller $j$, so the mechanism satisfies DSBB. \\ We offer any buyer the possibility to participate in a trade at most once, so DSIC and IR for buyers follows directly. Also IR for sellers is rather simple as we ask seller $j$ every time if she would like to participate in a trade for a given price. In order to show DSIC for sellers, we have to exploit the order in which trades are offered. By this choice, prices offered to a fixed seller are only non-increasing as the mechanism evolves. As a consequence, selling the item as early as possible is only beneficial for a seller (if she would like to sell the item at all). Truthfulness follows as misreporting the value for an item might allow or block unfavorable trades.

In order to illustrate the proof concerning the approximation guarantee, we give a proof for the simplified full information setting. That is, the value $v_i$ of an agent is not a random variable anymore, but rather deterministic. The general case with incomplete information can be found in Appendix \ref{appendix:matroids_sbb}. In the full information setting, the price for a feasible trade between buyer $i$ and seller $j$ simplifies to $\frac{1}{3} \left( \mathbf{v}(\OPT_B(\mathbf{v} | A_{B,ij}, r_{ij} )) - \mathbf{v}(\OPT_B(\mathbf{v} | A_{B,ij} \cup \{i\}, r_{ij} )) + v_j \right)$, where $A_{B,ij}$ and $r_{ij}$ are the states of $A_B$ and $r$ as we consider buyer $i$ and seller $j$ for a trade. First, note that any agent who keeps or purchases an item has a value exceeding some price. So for any agent $i \in A$, there is a price $P_i$ which agent $i$'s value did exceed when we added $i$ to $A$. For sellers to which we did not offer any trade in our mechanism, we set $P_i$ to zero as they keep their items anyway; for buyers who cannot be feasibly added to our set of chosen agents, we set $P_i$ to infinity. We split the social welfare achieved by our mechanism in two parts, calling them \emph{base value} and \emph{surplus}: \[ \sum_{i \in A} v_i = \sum_{i \in A} P_i + \sum_{i \in A} \left( v_i - P_i \right) \] Now, we bound each of these quantities separately. \\ When irrevocably allocating an item during the offer of a trade to buyer-seller-pair $(i,j)$, either the seller keeps the item or the buyer purchases it. In the first case, we reduce $r$ by one, in the second, we add $i$ to $A_B$. In order to bound the loss incurred by a seller keeping her item, observe that 
\begin{align*}
	\mathbf{v}(\OPT_B(\mathbf{v} | A_{B,ij}, r_{ij} )) - \mathbf{v}(\OPT_B(\mathbf{v} | A_{B,ij} \cup \{i\}, r_{ij} )) \\ \geq \mathbf{v}(\OPT_B(\mathbf{v} | A_{B,ij}, r_{ij} )) - \mathbf{v}(\OPT_B(\mathbf{v} | A_{B,ij}, r_{ij}-1 )) \enspace.
\end{align*}

As prices in the next iteration are computed with respect to $r_{ij}-1$ and $A_{B,ij}$, the loss in the buyers' optimal welfare when allocating an item to a seller is bounded by the buyer's contribution to the price. Summing the prices which we offered to agents in $A_B \cup A_S$ combined with this bound leads to a telescopic sum over the buyers' thresholds in the prices. Therefore, we can derive a bound of  \begin{equation}
\label{eq:basevalue} \sum_{i \in A} P_i \geq \frac{1}{3} \mathbf{v}(\OPT_B(\mathbf{v})) - \frac{1}{3} \mathbf{v}(\OPT_B(\mathbf{v} | A_B , r )) + \frac{1}{3} \sum_{j \in S \setminus \Msell} v_j \end{equation} for the base value.

Concerning the surplus, we consider buyers and sellers separately. For the sellers, note that any seller who remains in $\Msell$ after the mechanism keeps her item. Therefore, the contribution to the surplus is $v_j$ for any $j \in \Msell$. In the incomplete information setting, this turns out to be much more involved and a more sophisticated argument needs to be applied. As a consequence, we are only able to bound the sellers' surplus from below via \begin{equation}\label{eq:sellersurplus} \sum_{i \in A_S \cup \Msell} \left( v_i - P_i \right) \geq \frac{2}{3} \sum_{j \in \Msell} v_j - \frac{1}{3} \mathbf{v}(\OPT_B(\mathbf{v} | A_B , r )) \enspace. \end{equation}
For the buyers, we note that the prices for a fixed buyer are only non-decreasing as the allocation process evolves (see Lemma \ref{lemma:increasing_prices_sbb_buyers} which is a generalized version of a lemma in \citet[][Lemma 3]{10.1145/2213977.2213991}). Further, any buyer to which we offer a trade gets an item if her value exceeds her price. Using this, we can bound the surplus of any buyer $i$ to which we proposed a trade via \[ (v_i - P_i)^+ = \left( v_i - p_{i,j_i}(A_{B,ij_i},r_{ij_i})  \right)^+ \geq \left( v_i - p_{i,j_i}(A_{B},r)  \right)^+ \geq  \left( v_i - \min_{j \in \Msell} p_{i,j}(A_{B},r)  \right)^+  \] where we denote by $j_i$ the seller which is matched to buyer $i$. Now, we consider all buyers which are in $\OPT_B(\mathbf{v} | A_B,r)$. Any of these buyers could have purchased an item if her value had exceeded the price. To see this, note that $A_B \cup \OPT_B(\mathbf{v} |A_B,r)$ needs to be independent. Further, if $\OPT_B(\mathbf{v} |A_B,r) \neq \emptyset$, we have that $r > 0$ and so there are still items available after running the mechanism. As a consequence, any agent $i \in \OPT_B(\mathbf{v} |A_B,r)$ has a surplus of $\left( v_i - P_i \right)^+$ which is positive only if $i \in A_B$ after running the mechanism. For a buyer who does not exceed her price, this is zero as is her contribution to the surplus. Summing the surplus of all these buyers implies a lower bound on the overall buyers' surplus of \[ \sum_{i \in A_B } \left(v_i - P_i \right) \geq \sum_{i \in \OPT_B(\mathbf{v} | A_B,r)} (v_i - P_i)^+ \geq \sum_{i \in \OPT_B(\mathbf{v} | A_B,r)} v_i - \sum_{i \in \OPT_B(\mathbf{v} | A_B,r)} \min_{j \in \Msell} p_{i,j}(A_{B},r) \enspace.  \] 
Having a closer look at the sum of prices, we can apply a proposition from \citet[][Proposition 2]{10.1145/2213977.2213991} on the buyers' contribution in order to derive a suitable bound:
\begin{align*}
	\sum_{i \in \OPT_B(\mathbf{v} | A_B,r)} \min_{j \in \Msell} p_{i,j}(A_{B},r) & \leq \frac{1}{3} \left(  \mathbf{v}(\OPT_B(\mathbf{v} | A_B , r )) + |\Msell| \cdot \min_{j\in \Msell} v_j \right) \\ & \leq \frac{1}{3} \left(  \mathbf{v}(\OPT_B(\mathbf{v} | A_B , r )) + \sum_{j\in \Msell} v_j \right) \enspace.
\end{align*}
And so we get
\[
\sum_{i \in A_B } \left(v_i - P_i \right) \geq \frac{2}{3} \mathbf{v}(\OPT_B(\mathbf{v} | A_B , r )) - \frac{1}{3} \sum_{j\in \Msell} v_j \enspace.
\]
Hence, in combination with \eqref{eq:sellersurplus}, we can lower-bound the overall surplus of all agents via
\begin{align}
	\sum_{i \in A} \left( v_i - P_i \right) &\geq  \frac{2}{3} \sum_{j \in \Msell} v_j - \frac{1}{3} \mathbf{v}(\OPT_B(\mathbf{v} | A_B , r )) + \frac{2}{3}  \mathbf{v}(\OPT_B(\mathbf{v} | A_B , r )) - \frac{1}{3} \sum_{j\in \Msell} v_j\notag\\&  = \frac{1}{3} \sum_{j \in \Msell} v_j + \frac{1}{3} \mathbf{v}(\OPT_B(\mathbf{v} | A_B , r )) \enspace. \label{eq:surplus}
\end{align}
Adding base value \eqref{eq:basevalue} and surplus of all buyers and sellers \eqref{eq:surplus} proves the claim as $ \mathbf{v}(\OPT_B(\mathbf{v})) + \sum_{j \in S} v_j \geq  \mathbf{v}(\OPT(\mathbf{v}))$.

	\section{Matroid Double Auctions with Weak Budget-Balance and Online Arrival}
\label{section:matroide_wbb}
We consider matroid constraints like in Section \ref{section:matroide_sbb}. So, we have a set of $n$ buyers $B$ and the buyers who receive an item need to form an independent set in $\mathcal{M}_B = \left( B, \mathcal{I}_B \right)$, buyers have unit-demand valuation functions and sellers are unit-supply, each initially equipped with exactly one identical item, hence $k = m$. In contrast to Section \ref{section:matroide_sbb}, our mechanism allows buyers to arrive online with an adversary specifying the order. The adversary may even adapt the choices depending on the set of already considered agents and their valuations. Again, we simplify notation in this section by interpreting $\mathbf{v}$ as the $|B \cup S|$-dimensional vector over the non-negative real numbers in which each entry corresponds to the value of an agent for being allocated an item denoted by $v_i$. Also, we denote the seller as well as the corresponding item by $j$.

\subsection*{The Mechanism}

\begin{algorithm}[h]
\SetAlgoNoLine
\DontPrintSemicolon
\KwResult{Set $A = A_B \cup A_S$ of agents to get an item with $A_B \subseteq B$, $A_B \in \mathcal{I}_B$, $A_S \subseteq S$ and $|A| = |S|$}
$A \longleftarrow \emptyset$ ; \qquad $A' \longleftarrow \emptyset$ ; \qquad  $M_{\textnormal{SELL}} \longleftarrow \emptyset$; \qquad $T = \left( 0, \dots, 0 \right)$, where $T$ is a vector of zeros of size $|S|$ \\
  \For{$j \in S$}{
	  \If{$v_j \geq p_j$}{ $A \longleftarrow A \cup \{ j \}$; \qquad $A' \longleftarrow A' \cup \{ j \}$ }
	  \If{$v_j < p_j$}{$M_{\textnormal{SELL}} \longleftarrow M_{\textnormal{SELL}} \cup \{ j \}$; \qquad $ T_j \longleftarrow p_j$}
 
    }
  \For{$i \in B$}{
  	\If{$M_{\textnormal{SELL}} \neq \emptyset$}{
  	\text{\textbf{select}} $j \in M_{\textnormal{SELL}}$ \text{\textbf{arbitrarily}} \\
	  	\If{$p_i \geq T_j$}{
	  		\If{$v_i \geq p_i$}{
	  			$A \longleftarrow A \cup \{ i \}$; \qquad $A' \longleftarrow A' \cup \{ i \}$; \qquad $M_{\textnormal{SELL}} \longleftarrow M_{\textnormal{SELL}} \setminus \{ j \}$ \\
	  			\text{buyer $i$ pays $p_i$ to the mechanism and seller $j$ receives $T_j$} \\
	  			}
	  		}
	  	\If{$p_i < T_j$}{
	  		\If{$v_j \geq p_i$}{
	  			$A \longleftarrow A \cup \{ j \}$; \qquad $A' \longleftarrow A' \cup \{ i \}$; \qquad $M_{\textnormal{SELL}} \longleftarrow M_{\textnormal{SELL}} \setminus \{ j \}$ \\
	  			}
	  		\If{$v_j < p_i$}{ $T_j \longleftarrow p_i$ \\
	  			\If{$v_i \geq p_i$}{
	  				$A \longleftarrow A \cup \{ i \}$; \qquad $A' \longleftarrow A' \cup \{ i \}$; \qquad $M_{\textnormal{SELL}} \longleftarrow M_{\textnormal{SELL}} \setminus \{ j \}$ \\
	  				\text{buyer $i$ pays $p_i$ to the mechanism and seller $j$ receives $T_j$} \\
	  				}
	  			}
	  		}
    } }
  \Return $A \cup M_{\textnormal{SELL}}$ 
\caption{Mechanism for Matroid Double Auctions with Online Arrival}
\label{Matroid_mechanism}
\end{algorithm}

Let $A_B$ be the set of buyers who receive an item, hence $A_B \in \mathcal{I}_B$. Further, $A_S$ denotes the set of sellers who decide to keep the item irrevocably. In addition, we define a set $A' = A_B' \cup A_S'$, which is the set of agents that we eventually used to set the prices.
We first go through all the sellers asking whether seller $j$ wants to irrevocably keep her item or try selling it knowing that she will receive at most an amount of $p_j$ (in case we sell the item). Afterwards, we go through all buyers in any order. When buyer $i$ arrives, we match $i$ to an arbitrary seller who still tries selling her item (if available). 

For this buyer-seller pair, we propose the following trade: Buyer $i$ pays the specific price $p_i$ but seller $j$ only receives $\min \left\{ p_i, T_j\right\}$, where $T_j$ is the lowest price that we have ever offered to seller $j$ up to this point. If seller $j$ does not agree, she irrevocably keeps the item from this point onwards; $j$ is added to $A$ but $i$ is added to $A'$. Otherwise, if buyer $i$ does not agree, she is irrevocably discarded. Seller $j$ might get matched again but the price offered to her can only decrease.

Note that this mechanism does not require any specified order in which we process the agents---even further, the matching which we consider for possible trades can be arbitrary, even determined by an adversary. This is a sharp contrast to Section~\ref{section:matroide_sbb}, where we consider buyer-seller pairs in a tailored way.

\subsection*{The Pricing}
\label{Subsection:Matroid_Pricing}

The key to setting the prices is the set $A' = A_B' \cup A_S'$ with $A_B' \subseteq B$ and $A_S' \subseteq S$, which is maintained in addition to the sets $A_B$ and $A_S$. The idea is that for agents in $A'$ the respective agent-specific price can be charged to someone in our mechanism. For a buyer $i \in A'$, this can mean that the buyer herself received an item and paid for it or that the corresponding seller decided to keep the item.
We calculate prices with respect to the set $A'$ in the spirit of the pricing schemes of matroid prophet inequalities by \citet{10.1145/2213977.2213991} and \citet{DBLP:conf/focs/DuettingFKL17}.

In more detail, concerning the buyers, we set prices to infinity if there are no items available anymore or if buyer $i$ cannot be added to $A_B'$, i.e. $p_i(A') = \infty$ if $A_B' \cup \{ i \} \notin \mathcal{I}_B$ or $M_{\textnormal{SELL}} = \emptyset$. In particular, this pricing only affects the buyers and will never occur as long as we go through the sellers (in the first for-loop) in our mechanism. Hence, for sellers, there will always be a finite seller-specific price.

Now, for any agent who can be feasibly added to $A'$ (i.e. all sellers and all buyers in cases different to the ones mentioned above), we compute prices in the following way. Fix a valuation profile $\mathbf{v}$ and denote by $\OPT(\mathbf{v} | X)$ the set of agents who receive an item in an optimal allocation given that we have already irrevocably allocated items to agents in $X$. In contrast to the mechanism in Section \ref{section:matroide_sbb}, the optimum is now computed over all agents, not only over the set of buyers. The value of this partial allocation is denoted by $\mathbf{v} \left( \OPT(\mathbf{v} | X) \right)$, that is, the sum over the value $v_i$ of all agents $i$ who receive an item. Further, define $\OPT(\mathbf{v}) = \OPT(\mathbf{v} | \emptyset)$. \\ Denote by $A_{i}$ and $A_i'$ the state of set $A$ and $A'$ after processing agent $i$ and let \[ p_i(A_{i-1}',\mathbf{v}) =   \mathbf{v} \left( \OPT(\mathbf{v} | A_{i-1}') \right) - \mathbf{v} \left( \OPT(\mathbf{v} | A_{i-1}' \cup \{ i \} ) \right) \enspace.  \] For any seller and all buyers such that $A_{B; i-1}' \cup \{i\} \in \mathcal{I}_B$, as long as there are items remaining, the price for agent $i$ is computed as \[ p_i(A_{i-1}') = \frac{1}{2} \Ex[\mathbf{\widetilde{v}} \sim \mathcal{D}]{ p_i(A_{i-1}',\mathbf{\widetilde{v}}) } \enspace. \]

This way of setting prices also ensures feasibility with respect to the matroid constraint, meaning that $A_B \in \mathcal{I}_B$. The reason is that $A_B' \supseteq A_B$ at all times as every time we add a buyer to $A_B$, the buyer is also added to $A_B'$. We even have $A_B' \in \mathcal{I}_B$ because buyers have infinite prices as soon as they cannot be feasibly added to $A'$.

\subsection*{Properties of Our Mechanism}

\begin{theorem} \label{Theorem:Matroid}
	The mechanism for matroid double auctions is DWBB, DSIC and IR for all buyers and sellers and $\frac{1}{2}$-competitive with respect to the optimal social welfare.
\end{theorem}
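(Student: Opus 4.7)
The properties DWBB, DSIC, and IR can be established by direct inspection of the mechanism. For DWBB, every bilateral trade charges buyer $i$ the amount $p_i$ while transferring only $T_j \leq p_i$ to seller $j$: in the branch $p_i \geq T_j$ this holds by the branch condition, and in the branch $p_i < T_j$ the variable $T_j$ is reset to $p_i$ before any payment. DSIC for buyers is immediate since each buyer is approached at most once at a fixed price. The key observation for sellers is that the sequence of prices $T_j$ ever offered to a fixed seller $j$ is non-increasing in time: it starts at $p_j$ in the first loop and is only overwritten by values $p_i < T_j$ in the second loop. Hence no seller benefits from strategically rejecting an earlier offer. IR for buyers is immediate, and IR for sellers follows because $v_j < T_j$ holds throughout the period $j \in \Msell$, so whenever the mechanism actually transfers her item she receives at least her own value.

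For the $1/2$-approximation I would adapt the balanced-prices template used in the proof of Theorem~\ref{Theorem:Matroid_sbb}. Fix a realization $\mathbf{v}$, let $A = A_B \cup A_S$ and $A' = A_B' \cup A_S'$ be the sets at termination, and let $\Msell^{\textnormal{end}}$ denote the sellers still in $\Msell$ when the mechanism stops. The first step is to construct a bijection $\phi : A' \to A$ with $v_{\phi(i)} \geq p_i(A_{i-1}')$ in every realization: set $\phi(i) = i$ for $i \in A_S' \cup (A_B \cap A_B')$, and send each $i \in A_B' \setminus A_B$ to the matched seller $j_i$ who refused in the branch $p_i < T_{j_i}$ and was thereby added to $A_S$ (so $v_{j_i} \geq p_i$ by the branch condition). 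Since the final set of agents holding an item is $A \cup \Msell^{\textnormal{end}}$, this gives the pointwise bound
\begin{equation*}
\sum_{i' \in A \cup \Msell^{\textnormal{end}}} v_{i'} \;\geq\; \sum_{i \in A'} p_i(A_{i-1}') + \sum_{j \in \Msell^{\textnormal{end}}} v_j \enspace.
\end{equation*}

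Next, using the definition $p_i(A_{i-1}') = \tfrac{1}{2}\Ex[\tilde{\mathbf{v}} \sim \mathcal{D}]{p_i(A_{i-1}', \tilde{\mathbf{v}})}$ and telescoping over the order in which $A'$ is built, I would derive the base-value bound
\begin{equation*}
\Ex[\mathbf{v}]{\sum_{i \in A'} p_i(A_{i-1}')} \;\geq\; \tfrac{1}{2}\Ex[\mathbf{v}]{\mathbf{v}(\OPT(\mathbf{v}))} \;-\; \tfrac{1}{2}\Ex[\mathbf{v},\tilde{\mathbf{v}}]{\tilde{\mathbf{v}}(\OPT(\tilde{\mathbf{v}} \mid A'(\mathbf{v})))} \enspace.
\end{equation*}
The remaining task is the matching surplus bound $\Ex[\mathbf{v}]{\sum_{j \in \Msell^{\textnormal{end}}} v_j} \geq \tfrac{1}{2}\Ex[\mathbf{v},\tilde{\mathbf{v}}]{\tilde{\mathbf{v}}(\OPT(\tilde{\mathbf{v}} \mid A'(\mathbf{v})))}$, which I plan to prove by a matroid-exchange argument in the spirit of Kleinberg-Weinberg: for each agent $i^\ast$ appearing in $\OPT(\tilde{\mathbf{v}} \mid A'(\mathbf{v}))$, feasibility of $A' \cup \{i^\ast\}$ is used to identify a seller remaining in $\Msell^{\textnormal{end}}$ whose realized value pays for half of $\tilde{v}_{i^\ast}$. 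Adding the base-value and surplus bounds then yields the claimed $\tfrac{1}{2}\Ex[\mathbf{v}]{\mathbf{v}(\OPT(\mathbf{v}))}$.

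The main obstacle will be this surplus argument. Unlike in the one-sided matroid prophet inequality, there are two intertwined sources of slack: buyers whom the mechanism approached but who declined (and whose matched seller sold in a later iteration), and sellers whose items were never considered for any trade. The charging must route the residual optimum $\tilde{\mathbf{v}}(\OPT(\tilde{\mathbf{v}} \mid A'))$ entirely into $\Msell^{\textnormal{end}}$ despite the adversarial buyer-to-seller matching and the time-varying thresholds $T_j$. I plan to handle this with a case split on whether $i^\ast$ is a buyer or a seller, leveraging monotonicity of the $T_j$'s together with matroid exchange on $A_B'$ to avoid double counting of any $\Msell^{\textnormal{end}}$ slot.
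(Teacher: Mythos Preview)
Your treatment of DWBB, DSIC, IR, and the base-value telescoping matches the paper's argument; the bijection $\phi:A'\to A$ is a clean way to justify $\sum_{i\in A}v_i\ge\sum_{i\in A'}p_i(A_{i-1}')$.

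The gap is the surplus step. Your proposed bound
\[
\Ex[\mathbf{v}]{\textstyle\sum_{j\in \Msell^{\mathrm{end}}}v_j}\;\ge\;\tfrac12\,\Ex[\mathbf{v},\tilde{\mathbf{v}}]{\tilde{\mathbf{v}}\bigl(\OPT(\tilde{\mathbf{v}}\mid A'(\mathbf{v}))\bigr)}
\]
is false. Take one seller and one buyer, each with value uniform on $[0,1]$; then $p_i=p_j=\tfrac13$, and $\Msell^{\mathrm{end}}=\{j\}$ exactly when $v_j<\tfrac13$ and $v_i<\tfrac13$, in which case $A'=\emptyset$. The left side equals $\tfrac19\cdot\tfrac16=\tfrac{1}{54}$, while the right side equals $\tfrac12\cdot\tfrac19\cdot\Ex{\max(\tilde v_i,\tilde v_j)}=\tfrac{1}{27}$. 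Worse, your entire lower bound on the welfare, $\sum_{i\in A'}p_i(A_{i-1}')+\sum_{j\in \Msell^{\mathrm{end}}}v_j$, has expectation $\tfrac{16}{54}+\tfrac{1}{54}=\tfrac{17}{54}<\tfrac13=\tfrac12\Ex{\OPT}$ here. So no charging scheme that routes the residual optimum solely into $\Msell^{\mathrm{end}}$ can work; the decomposition itself is too lossy because it discards the excess $v_{\phi(i)}-p_i(A_{i-1}')$ of agents already in $A$.

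The paper keeps exactly this excess as the surplus. The key observations are: (i) whether a buyer lands in $A'\setminus A$ is independent of her own value; (ii) any $i\in\OPT(\mathbf{v}'\mid A'(\mathbf{v}))$ automatically satisfies $i\notin A'\setminus A$; (iii) prices $p_i(\cdot)$ are monotone in $A'$. Together these give, after an $\mathbf{v}\leftrightarrow\mathbf{v}'$ swap,
\[
\Ex{\textstyle\sum_{i\in A}(v_i-P_i)}\;\ge\;\Ex[\mathbf{v},\mathbf{v}']{\mathbf{v}'\bigl(\OPT(\mathbf{v}'\mid A'(\mathbf{v}))\bigr)}-\Ex[\mathbf{v},\mathbf{v}']{\textstyle\sum_{i\in\OPT(\mathbf{v}'\mid A'(\mathbf{v}))}p_i(A'(\mathbf{v}))},
\]
and the subtracted term is at most $\tfrac12\Ex{\mathbf{v}'(\OPT(\mathbf{v}'\mid A'))}$ by Kleinberg--Weinberg balancedness, applied in the matroid obtained by adjoining $S$ freely to $\mathcal{M}_B$ and truncating to rank $|S|$. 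In short: route the residual optimum through the realized surpluses of agents in $A$, not through $\Msell^{\mathrm{end}}$.
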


The full proof is deferred to Appendix \ref{appendix:matroid_wbb}. To give a sketch, observe that DWBB is obtained via the price comparison of our mechanism: either a trade between some seller $j$ and some buyer $i$ happens at price $p_i$, or buyer $i$ pays $p_i \geq T_j$ to the mechanism whereas seller $j$ only receives $T_j$. The difference $p_i - T_j$ is extracted and never used again.

Satisfying DSIC and IR for buyers can be seen easily as we only offer a trade to any buyer at most once. Also IR for sellers follows naturally. In order to obtain DSIC for sellers, the key observation is that the amount of money which we may pay to seller $j$ is only non-increasing in the allocation process. Hence, as a seller, if you want to sell you item, you want to do so as early as possible.

In order to prove the desired competitive ratio, we split the contribution to the overall welfare into base value and surplus as follows. At the point in time when an agent is added to the set $A$, this agent is offered some price and her value exceeds this price. The part of the agent's value below this price is the base value, the part above the surplus. We bound each quantity separately and consider the sum over all agents afterwards.

The base value is a telescoping sum covering $\frac{1}{2} \Ex[]{ \mathbf{v} \left( \OPT(\mathbf{v}) \right)} - \frac{1}{2} \Ex[]{ \mathbf{v} \left( \OPT(\mathbf{v}| A') \right) }$. Concerning the surplus, we need a few observations: Note that any seller will keep her item when the value exceeds the initial price. Any buyer who is not in $A' \setminus A$ and whose value exceeds her price will purchase an item. Further, agent-specific prices are only non-decreasing: once offered a trade to agent $i$ at price $p_i$ implies that if we were to consider $i$ later in the mechanism, $p_i$ would be at least as high as before. Therefore, we can lower bound the surplus of any agent using an increased price: the price which we could offer to this agent after running the mechanism. By using properties of matroids to find a suitable upper bound on the prices, we bound the overall surplus via $\frac{1}{2} \Ex[]{ \mathbf{v} \left( \OPT(\mathbf{v}| A' \right) }$ from below. Adding the lower bounds for surplus and base value proves the result.

\subsection*{Upper bound on the Approximation Guarantee}
\label{section:upper_bounds}
Concerning upper bounds, we extend the commonly known instance from prophet inequalities\footnote{The instance for prophet inequalities shows that a competitive ratio of $\frac{1}{2}$ is optimal if the order of agents is determined adversarially.} to two-sided markets: We consider a single seller with one item whose value for the item is zero and two buyers where buyer $1$'s value for the item is $v_1 = 1$ and the second buyer has value $v_2$ which is equal to $\frac{1}{\epsilon}$ with probability $\epsilon$ and $0$ else. Applying the results for prophet inequalities in two-sided environments implies the following: If the mechanism for double auctions cannot control the order in which to process the buyers, a competitive ratio of $\frac{1}{2}$ is tight if there are at least two buyers. In particular, any mechanism for double auctions which allows buyers to arrive online with an adversary determining the order cannot achieve more than half of the expected optimal social welfare. By this considerations, also our competitive ratios of Section \ref{section:xos_sbb} are tight. We remark that this bound does only apply if there are at least two buyers, so the case of bilateral trade is excluded. 
	
	\section{Combinatorial Double Auctions with Strong Budget-Balance}
\label{section:xos_sbb}
In combinatorial double auctions, we assume that there is a set of $n$ buyers $B$, a set of $k$ sellers $S$ and a set of $m$ possibly heterogeneous items $M$. Our goal is to reallocate the items among the agents by a suitable \emph{sequential posted-prices mechanism}. To this end, we compute a static and anonymous\footnote{Prices which do not depend on the partial allocation are called \emph{static}. Prices are called \emph{anonymous} if they do not depend on the identity of the agent under consideration.} item price $p_j$ for every $j \in M$. Our mechanism is robust concerning the arrival order of buyers: We can assume buyers to arrive online with adversarial order. Even further, our mechanism can also handle an adaptive adversary who can in each round present one of the remaining agents depending on the allocation we have computed so far.

\subsection*{The Mechanism}

Our mechanism is stated in Algorithm \ref{Xos_mechanism_sbb} and works as follows: Given static and anonymous item prices $p_j$, we first ask any seller $l$ which of her items in $I_l$ she would like to keep if we may give her $p_j$ in exchange and which items she would like to try selling for a price of $p_j$. After this, we have a set of available items $\Msell$ which we try to sell to the buyers now. Therefore, we consider buyers sequentially. As a buyer arrives, we ask her which bundle of available items she would like to purchase. We give this bundle to buyer $i$ and buyer $i$ pays the respective prices to any seller from whom she gets an item. After running the mechanism, all items which are unallocated are returned to their corresponding sellers. 

Using static and anonymous item prices allows us to treat sellers as buyers in a one-sided allocation problem. Therefore, first, we go through the sellers asking which subset of items each would like to sell. In a one-sided market, this corresponds to buyers purchasing exactly the bundles of items which any seller would like to keep. Afterwards, we process the buyers one-by-one and ask which of the remaining items each would like to buy. Overall, by the use of static and anonymous item prices, the results concerning the approximation guarantee via prophet inequalities from \citet{DBLP:conf/soda/FeldmanGL15} directly carry over to combinatorial double auctions.

\begin{algorithm}
	\SetAlgoNoLine
	\DontPrintSemicolon
	\KwResult{Allocation $\mathbf{X} = (X_i)_{i \in B \cup S}$ of items to agents such that for any seller $l$ we have $X_l \subseteq I_l$ and $\bigcup_{i \in B \cup S} X_i = \bigcup_{l \in S} I_l = M$ \newline }
	$X_i \longleftarrow \emptyset$ for all $i \in B \cup S$ \qquad $\Msell \longleftarrow \emptyset$ \\
	\For{$l \in S$}{
		Show prices $p_j$ for each item $j \in I_l$ to seller $l$ \\
		Ask seller $l$ which items she wants to keep or try selling \\
		$X_l \longleftarrow \{ j \in I_l : \text{seller } l \text{ wants to keep item } j \}$ \\
		$\Msell \longleftarrow \Msell \cup \{ j \in I_l : \text{seller } l \text{ tries selling item } j \}$
		}	
	\For{$i \in B$}{
		Show prices $p_j$ for each item $j \in \Msell$ to buyer $i$ \\
		Ask buyer $i$ which items she wants to buy \\
		$X_i \longleftarrow \{ j \in \Msell : \text{buyer } i \text{ wants to buy item } j \}$ \\
		$\Msell \longleftarrow \Msell \setminus X_i$ \\
		Buyer $i$ pays $\sum_{j \in X_i} p_j$ \\ 
		Any seller with $j\in I_l$ for some $j\in X_i$ receives $p_j$ and item $j$ is traded to buyer $i$
		}
	\For{$l \in S$}{
		$X_l \longleftarrow X_l \cup \left( \Msell \cap I_l \right)$
		}
	\Return $\mathbf{X}$ 
	\caption{Mechanism for Combinatorial Double Auctions}
	\label{Xos_mechanism_sbb}
\end{algorithm}

\subsection*{The pricing}

We restrict the class of valuation functions for both, buyers and sellers, to valuations which can be represented by fractionally subadditive (also called XOS) functions.\footnote{A set function $a : 2^M \rightarrow \mathbb{R}_{\geq 0}$ is \textit{additive} if and only if there are numbers $c_1,\dots,c_{m} \in \mathbb{R}_{\geq 0}$ such that for any $T \subseteq M$ we have $a(T) = \sum_{j \in T} c_j$. A set function $v: 2^M \rightarrow \mathbb{R}_{\geq 0}$ is \textit{fractionally subadditive} (also called \textit{XOS}) if and only if there are additive set functions $a_1,\dots,a_t$ such that for every $T \subseteq M$ we have $v(T) = \max_{i \leq t } a_i(T)$. Note that the class of XOS valuation functions contains many other classes, such as submodular, gross-substitutes, unit-demand or additive valuation functions.} In order to compute suitable prices, we mimic the pricing approach from \citet{DBLP:conf/soda/FeldmanGL15} and \citet{DBLP:conf/focs/DuettingFKL17} and apply this to two-sided markets. Let $\ALG$ be an algorithm which allocates all items among the agents. We assume that $\ALG$ can only allocate items to sellers which are in their initial bundle. Fix a valuation profile $\mathbf{v}$ and denote by $\mathbf{Y} \coloneqq (Y_i)_{i \in B \cup S}$ the allocation of $\ALG$ under valuation profile $\mathbf{v}$. As any valuation function $v_i$ satisfies the XOS-property, there are additive functions $a_i$ for any $i \in B \cup S$ such that $v_i(T) \geq a_i(T)$ for any $T \subseteq M$ and $v_i(Y_i) = a_i(Y_i)$. For any $j\in Y_i$, denote by $\SW_{j}(\mathbf{v}) \coloneqq a_i(\{j\})$ which you can interpret as the contribution of item $j$ to the overall social welfare given valuation profile $\mathbf{v}$. In other words, for fixed valuation profile $\mathbf{v}$, we consider the allocation $\mathbf{Y}$, the additive set function $a_i$ which represents $v_i(Y_i)$ and evaluate $a_i$ only for a single item $j \in Y_i$. \\ Now, we compute the price for item $j$ as \[ p_j = \frac{1}{2} \Ex[\widetilde{\mathbf{v}} \sim \mathcal{D}]{\SW_j(\widetilde{\mathbf{v}})} \enspace. \] Observe that these prices are static and anonymous item prices for any item $j \in M$. Further, note that for $\ALG$ we have multiple choices: if we do not care about computational issues, we could e.g. use an algorithm $\OPT$ which computes an optimal allocation with respect to social welfare. 

\subsection*{Properties of Our Mechanism}

We consider two different settings for the chosen classes of valuation functions for buyers and sellers. First, we restrict to the case of unit-supply sellers, i.e. each seller bringing one non-identical item to the market. For buyers, we assume XOS-valuation functions. Note that the valuation functions for sellers can also be represented by fractionally subadditive functions and hence, in order to prove the competitive ratio, we can apply Lemma \ref{Lemma:XOS_Approximation_SBB} which allows to state the following theorem. 

\begin{theorem} \label{Theorem:unit_supply_xos_sbb}
	The mechanism for combinatorial double auctions with unit-supply sellers and buyers having XOS-valuation functions is DSBB, DSIC and IR for all buyers and sellers and $\frac{1}{2}$-competitive with respect to the optimal social welfare for any online adversarial arrival order of agents.
\end{theorem}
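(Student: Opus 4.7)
The plan is to prove the mechanism properties by direct inspection and then reduce the competitive-ratio analysis to the one-sided sequential posted-prices prophet inequality of \citet{DBLP:conf/soda/FeldmanGL15}, invoking Lemma~\ref{Lemma:XOS_Approximation_SBB}. For direct strong budget balance, I would argue by construction: whenever item $j$ is transferred from seller $l$ to buyer $i$, buyer $i$ pays exactly $p_j$ and seller $l$ receives exactly $p_j$, and each item participates in at most one such bilateral trade. DSIC and IR for buyers follow because every buyer $i$ is approached exactly once and faces the same anonymous item prices; she simply picks a utility-maximizing bundle from $\Msell$, and any misreport only restricts or distorts her choice. For sellers, unit-supply collapses the decision to a single item: seller $l$ keeps her item iff $v_l \geq p_j$, which is her dominant strategy against the fixed price, and IR holds because keeping is always an option.

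Next, for the $1/2$-competitive ratio I would set up a reduction to a one-sided sequential posted-prices mechanism on $|B|+|S|$ agents with the same static anonymous item prices $p_j$. The sellers are treated as buyers who arrive first, each offered exactly her initial item at price $p_j$; because each is unit-supply, her valuation is trivially XOS, and ``purchasing'' corresponds to opting to keep in the seller loop of Algorithm~\ref{Xos_mechanism_sbb}. The real buyers then arrive in the adversary's order and each takes her favorite remaining bundle. Since $v_l(T) = v_l(T \cap I_l)$ for every seller, the optimal welfare of this synthetic one-sided instance coincides with $\Ex[\mathbf{v}]{\mathbf{v}(\OPT(\mathbf{v}))}$, so it is enough to bound the welfare of the one-sided execution.

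For the one-sided execution I would then invoke the balanced-price argument of \citet{DBLP:conf/soda/FeldmanGL15}: with $p_j = \tfrac{1}{2}\Ex[\widetilde{\mathbf{v}}]{\SW_j(\widetilde{\mathbf{v}})}$, split the expected social welfare obtained by the mechanism into (i) the total price paid for sold items, and (ii) the utility accumulated on unsold items. Using the XOS representatives $a_i$ of the optimal allocation $\mathbf{Y}$ to charge each item $j$ a value of $a_i(\{j\}) = \SW_j(\mathbf{v})$, part (i) is at least $\tfrac{1}{2}\Ex[]{\sum_{j\text{ sold}} \SW_j(\mathbf{v})}$, and part (ii) dominates $\tfrac{1}{2}\Ex[]{\sum_{j\text{ unsold}} \SW_j(\mathbf{v})}$ because every agent (seller or buyer) faced these prices when her items were still available and always picks a utility-maximizing subset. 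Summing both terms yields $\tfrac{1}{2}\Ex[]{\mathbf{v}(\mathbf{Y})}$, and choosing $\mathbf{Y} = \OPT(\mathbf{v})$ gives the claimed $1/2$-competitive ratio. The main obstacle is verifying that this charging survives adaptive adversarial ordering of the buyers: this goes through because the revenue piece depends only on which items get sold (not on their identities of the buyers), and the utility piece uses only that each agent, when processed, picks an optimal bundle among currently available items, which is invariant under any interleaving the adversary may choose. This is precisely the content of Lemma~\ref{Lemma:XOS_Approximation_SBB}.
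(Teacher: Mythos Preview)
Your proposal is correct and follows essentially the same approach as the paper: the DSBB, DSIC, and IR arguments match Lemma~\ref{Lemma:unit_supply_xos_DISC_IR_SBB} verbatim, and the reduction ``treat unit-supply sellers as buyers who arrive first, then apply the one-sided FGL posted-prices analysis'' is exactly the paper's high-level argument, with the formal bound deferred to Lemma~\ref{Lemma:XOS_Approximation_SBB}.

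One minor imprecision in your sketch: the inequality you state for part~(i), namely that the revenue is at least $\tfrac{1}{2}\Ex[]{\sum_{j\text{ sold}} \SW_j(\mathbf{v})}$, is not the bound that actually holds (the ``sold'' set and $\SW_j(\mathbf{v})$ are correlated through $\mathbf{v}$). The paper's Lemma~\ref{Lemma:XOS_Approximation_SBB} instead shows base value $= \sum_j \Pr{j\in\A}\,p_j$ and surplus $\geq \sum_j \Pr{j\notin\A}\,p_j$, which together give $\sum_j p_j$. Since you explicitly invoke Lemma~\ref{Lemma:XOS_Approximation_SBB} for the formal statement, this does not affect the correctness of your proposal.
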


For our second result, we assume buyers and sellers to have additive valuation functions. Note that this allows sellers to bring more than one item to the market. Since any additive valuation function can trivially be represented by a fractionally subadditive one, we can again apply Lemma \ref{Lemma:XOS_Approximation_SBB} and hence, state the following theorem.

\begin{theorem} \label{Theorem:additive_sbb}
	The mechanism for combinatorial double auctions with buyers and sellers having additive valuation functions is DSBB, DSIC and IR for all buyers and sellers and $\frac{1}{2}$-competitive with respect to the optimal social welfare for any online adversarial arrival order of agents.
\end{theorem}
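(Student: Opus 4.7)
The plan is to mirror the proof of Theorem~\ref{Theorem:unit_supply_xos_sbb}, leveraging that every additive valuation is a special case of XOS, and then to reduce the competitive-ratio analysis directly to Lemma~\ref{Lemma:XOS_Approximation_SBB}. The only genuinely new ingredient compared to the unit-supply case is handling sellers that bring multiple items to the market, which I would resolve using the additivity of $v_l$ across $I_l$.

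First I would verify DSBB by tracing Algorithm~\ref{Xos_mechanism_sbb}: every item $j \in I_l$ either ends up returned to its original seller $l$ (no money moves) or is traded exactly once to some buyer $i$, in which case $i$ pays $p_j$ and $l$ receives $p_j$. Thus the output decomposes into bilateral trades with matching payments, yielding direct strong budget balance.

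Next I would treat incentive compatibility and individual rationality side by side. On the buyer side, the argument is identical to that of a one-sided sequential posted-price mechanism with static, anonymous prices: each buyer $i$ is queried exactly once on the current $\Msell$ and, being additive, demands exactly those $j$ with $v_i(\{j\}) > p_j$; truthful reporting is then utility-maximizing and the achieved utility is non-negative, giving IR. On the seller side, additivity across $I_l$ is the key lever: the mechanism offers each $j \in I_l$ independently at the fixed price $p_j$, and any item marked ``try to sell'' that remains unsold is returned unchanged in the final for-loop. Therefore, for each individual $j$, seller $l$'s utility contribution is determined by a single two-option decision (keep versus offer at price $p_j$), and the truthful choice ``keep iff $v_l(\{j\}) > p_j$'' dominates any alternative for that item, regardless of other items and other agents' reports. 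Summing these per-item dominances over $j \in I_l$ yields DSIC and IR for each seller.

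For the $1/2$-competitive ratio I would invoke Lemma~\ref{Lemma:XOS_Approximation_SBB}. Since every additive function is trivially XOS, represented by itself, the per-item contributions $\SW_j(\mathbf{v})$ coincide with the realized singleton values $v_i(\{j\})$ when $j$ lies in the reference allocation of $\ALG$, and the prices $p_j = \tfrac{1}{2}\Ex[\widetilde{\mathbf{v}} \sim \mathcal{D}]{\SW_j(\widetilde{\mathbf{v}})}$ are exactly the balanced prices the lemma demands. The only point that differs from Theorem~\ref{Theorem:unit_supply_xos_sbb} is the presence of multi-item sellers, and I expect this to be the main obstacle: one must check that bundling several items under one seller does not disrupt the ``base value plus surplus'' decomposition driving the proof of Lemma~\ref{Lemma:XOS_Approximation_SBB}. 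Here additivity of $v_l$ pays off, since it allows the seller's contribution to welfare to be split item-by-item and each $j \in I_l$ can be charged in the prophet-inequality argument exactly as though it were held by a distinct unit-supply seller. With this observation in hand, Lemma~\ref{Lemma:XOS_Approximation_SBB} applies verbatim and delivers $\Ex[]{\mathbf{v}(\mathbf{X})} \geq \tfrac{1}{2}\Ex[]{\mathbf{v}(\OPT)}$, completing the proof.
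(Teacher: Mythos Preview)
Your proposal is correct and follows essentially the same approach as the paper: establish DSBB, DSIC, and IR by direct inspection of Algorithm~\ref{Xos_mechanism_sbb} (exploiting seller additivity for the per-item keep/offer decomposition), then invoke Lemma~\ref{Lemma:XOS_Approximation_SBB} for the $\tfrac{1}{2}$-competitive ratio. Your seller-DSIC argument is in fact slightly cleaner than the paper's, since you rely only on the return-of-unsold-items step rather than on buyer additivity to conclude that ``keep iff $v_l(\{j\})>p_j$'' is dominant.
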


Concerning the proof of these theorems, observe the following: In the case of unit-supply sellers, seller $l$ is holding one item $j$ initially. Therefore, seller $l$ maximizes utility by keeping the item if $v_l(\{j\}) \geq p_j$ and trying to sell the item else. Hence, we can interpret seller $l$ as a buyer who buys item $j$ if her value exceeds the price. Also, when considering additive seller valuations combined with additive buyer valuations, we can argue in the following way: Any seller has a value $v_l(\{j\})$ for any $j \in I_l$ and hence, we can rewrite the utility of seller $l$ as $\sum_{j \in X_l} v_l(\{j\}) + \sum_{j \in I_l \setminus X_l} p_j$. Since all buyers also have additive valuations, some buyer $i$ will buy an available item $j$ if and only if $v_i(\{j\}) > p_j$. In the case that for all buyers $v_i(\{j\}) < p_j$, the item is returned to the seller anyway. Hence, it is a dominant strategy to try selling all item for which $v_l(\{j\}) \leq p_j$ and keeping the items with $v_l(\{j\}) > p_j$ in order to maximize utility. Therefore, the seller is deciding on her items in the same way as a buyer who is facing the items in set $I_l$. As a consequence, the competitive ratio directly follows by an application of the results from \citet{DBLP:conf/soda/FeldmanGL15}: Interpret the sellers as buyers which are considered first, offer each to keep any item in $X_l$ and sell the remaining afterwards to all buyers via a sequential posted-prices mechanism. Thus, the competitive ratio from \citet{DBLP:conf/soda/FeldmanGL15} directly carries over to our setting. For the sake of completeness, a formal proof of Theorems \ref{Theorem:unit_supply_xos_sbb} and \ref{Theorem:additive_sbb} is given in Appendix \ref{appendix:xos_sbb}.
	
	\section{Knapsack Double Auctions with Strong Budget-Balance and Online Customized Arrival}
\label{section:knapsack_sbb}

In contrast to Sections \ref{section:matroide_sbb} and \ref{section:matroide_wbb} where we considered a Matroid constraint on the set of buyers, we now work in a setting with a Knapsack constraint. That is, each of the $n$ buyers has a weight $w_i \in [0,1]$. The set of buyers $A_B$ who are allocated an item after our mechanism needs to satisfy $\sum_{i \in A_B} w_i \leq 1$. Again, we assume buyers to be unit-demand and sellers to be unit-supply each bringing exactly one identical item to the market, hence $k = m$. Notation is simplified by interpreting $\mathbf{v}$ as the $|B \cup S|$-dimensional vector with non-negative real entries in which each entry $v_i$ corresponds to the value of an agent for being allocated an item. Also, we denote by $j$ the seller as well as the corresponding item.

\subsection*{The Mechanism}

First of all, note that if $k = 1$, i.e. there is only one seller bringing one item to the market, we can simply run our mechanism from Section \ref{section:xos_sbb} in order to get a $1/2$-competitive mechanism which is DSBB, DSIC and IR. Hence, we will restrict to the case of $k \geq 2$ in the following. Further, we start by a restriction to the case of $w_i \leq \frac{1}{2}$ for all buyers $i \in B$. The general case will be discussed at the end of this section.  \\

\begin{algorithm}[h]
\SetAlgoNoLine
\DontPrintSemicolon
\KwResult{Set $A = A_B \cup A_S$ of agents to get an item with $A_B \subseteq B$, $\sum_{i \in A_B} w_i \leq 1$, $A_S \subseteq S$ and $|A| = |S|$}
$A \longleftarrow \emptyset$ ; \qquad $W \longleftarrow 0$; \qquad $ i \longleftarrow 1$; \qquad $j \longleftarrow 1$ \\
  \While{$i \leq n$ \textnormal{and} $j \leq k$}{
  	\If{$ W + w_i^\ast > 1$}{
  		$i \longleftarrow i+1$ \\
  		} 
  	\If{$ W + w_i^\ast \leq 1$}{
  		\If{$v_j \geq p_i$}{
  			$A \longleftarrow A \cup \{ j \}$; \qquad $W \longleftarrow W + w_i^\ast$; \qquad $j \longleftarrow j+1$\\
  			}
  		\If{$v_j < p_i$}{
  			\If{$v_i \geq p_i$}{
  				$A \longleftarrow A \cup \{ i \}$; \qquad $W \longleftarrow W + w_i^\ast$; \qquad $j \longleftarrow j+1$ \\
  				transfer item from seller $j$ to buyer $i$ for price $p_i$
  				}
  			$i \longleftarrow i+1$\\
  			}
  		}
  	}

  \Return $ A \cup \{ j' \in S : j \leq j' \leq k \}$ 
\caption{Mechanism for Knapsack Double Auctions with Strong Budget-Balance}
\label{mechanism_knapsack_sbb}
\end{algorithm}

We state our mechanism in Algorithm \ref{mechanism_knapsack_sbb} and give a quick description: We sort buyers in a way such that $w_1 \geq w_2 \geq \dots \geq w_n$, compute artificial weights $w_i^\ast$ for any buyer via $w_i^\ast \coloneqq \max\left( w_i; \frac{1}{k} \right)$ and let the buyer-specific price be \[ p_i \coloneqq \frac{2}{7}\cdot w_i^\ast \cdot \Ex[\widetilde{\mathbf{v}}]{\widetilde{\mathbf{v}}\left( \OPT (\widetilde{\mathbf{v}}) \right)} \enspace, \] where $\OPT(\widetilde{\mathbf{v}})$ denotes the optimal allocation of all items among all agents such that the set of selected buyers satisfies the knapsack constraint. We choose $\widetilde{\mathbf{v}}$ to be drawn independently from the same distribution as $\mathbf{v}$. Further, we initialize $W = 0$ which will be our variable controlling feasibility with respect to $w_i^\ast$. In particular, if for some buyer $i$ we have $W + w_i^\ast > 1$, we will not consider buyer $i$ for a trade. In the other case, i.e. that buyer $i$'s artificial weight $w_i^\ast$ can feasibly be added to $W$, we first ask the current seller $j$ if she wants to keep or try selling the item for price $p_i$. If she considers selling, we ask buyer $i$ if she wants to purchase the item.

\subsection*{Feasibility Considerations}

We need to compute a feasible allocation $A$, i.e. the set $A_B = A \cap B$ needs to be feasible with respect to the knapsack constraint. In our mechanism, we instead compute an allocation with respect to the artificial weights $w_i^\ast$. To see that this is also feasible with respect to the initial weights $w_i$, observe that we always ensure $W + w_i^\ast \leq 1$ for any buyer $i$ to which we propose a trade. Since $w_i \leq w_i^\ast$ for any buyer $i$ and we add $w_i^\ast$ to $W$ any time an item is irrevocably allocated, we ensure $\sum_{i \in A_B} w_i \leq \sum_{i \in A_B} w_i^\ast \leq 1$. Further, every time an item is allocated, we add some $w_i^\ast$ to $W$. Since any $w_i^\ast \geq \frac{1}{k}$, we do never allocate more than $k$ items in total.

\subsection*{Properties of Our Mechanism}

\begin{theorem} \label{Theorem:knapsack_sbb}
	The mechanism for knapsack double auctions is DSBB, DSIC and IR for all buyers and sellers and $\frac{1}{7}$-competitive with respect to the optimal social welfare if all buyers' weights are no larger than half of the total capacity.
\end{theorem}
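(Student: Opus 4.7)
The budget-balance and truthfulness properties follow directly from the mechanism's construction, while the $\tfrac{1}{7}$-approximation is proven by the base-value-plus-surplus decomposition used in the previous sections, combined with a case analysis on the total consumed weight at the end of the while loop.

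For DSBB, every active allocation is a bilateral trade in which buyer $i$ pays $p_i$ and seller $j$ receives exactly $p_i$, and each item is traded at most once. DSIC and IR for buyers are immediate since each buyer sees a single take-it-or-leave-it offer at price $p_i$. For sellers, the key observation is that because buyers are sorted so that $w_1 \geq \dots \geq w_n$, we also have $w_1^\ast \geq \dots \geq w_n^\ast$, and hence the prices $p_i = \tfrac{2}{7} \cdot w_i^\ast \cdot \Ex[\widetilde{\mathbf{v}}]{\widetilde{\mathbf{v}}(\OPT(\widetilde{\mathbf{v}}))}$ presented to any fixed seller across consecutive iterations are non-increasing. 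Hence no seller benefits by misreporting to wait for a better price; truthfully keeping iff $v_j \geq p_i$ is dominant and ensures utility at least $v_j$. Knapsack feasibility is ensured by $w_i \leq w_i^\ast$, and the total number of allocations is bounded by $k$ since $w_i^\ast \geq 1/k$.

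For the approximation, fix a realized profile $\mathbf{v}$, let $P = \Ex[\widetilde{\mathbf{v}}]{\widetilde{\mathbf{v}}(\OPT(\widetilde{\mathbf{v}}))}$, and let $W_{\mathrm{end}}$ denote the final value of the weight counter $W$ in the while loop. For each active iteration (one triggering an allocation) denote by $i(t)$ the associated buyer and by $w(t)$ the receiving agent (the seller keeping or the buyer trading). Then
\[ \ALG(\mathbf{v}) \geq \sum_{t \text{ active}} p_{i(t)} + \sum_{t \text{ active}} \bigl( v_{w(t)} - p_{i(t)} \bigr) + \sum_{j \text{ default-kept}} v_j, \]
where the first sum is the \emph{revenue} and the last two form the \emph{surplus}. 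I case-split on $W_{\mathrm{end}}$. If $W_{\mathrm{end}} \geq 1/2$, the revenue alone satisfies $\sum_t p_{i(t)} = \tfrac{2}{7} P \cdot W_{\mathrm{end}} \geq \tfrac{1}{7} P$, and we are done. If $W_{\mathrm{end}} < 1/2$, then because $w_i^\ast \leq 1/2$ (using $w_i \leq 1/2$ and $k \geq 2$), we have $W + w_i^\ast \leq 1$ throughout the loop, so no buyer is skipped due to capacity and the loop can only terminate by exhausting the buyers. Hence every buyer $i^\ast$ in the set $O_B$ of buyers receiving an item in $\OPT(\mathbf{v})$ is considered at some iteration; at her turn either (i) the current seller $j(i^\ast)$ keeps, contributing welfare $v_{j(i^\ast)} \geq p_{i^\ast}$; (ii) a trade occurs, contributing $v_{i^\ast}$ with non-negative surplus $v_{i^\ast} - p_{i^\ast}$; or (iii) no allocation, in which case $v_{i^\ast} < p_{i^\ast}$. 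Using these three cases together with the fact that the default-kept sellers contribute their full values, and the bound $\sum_{i^\ast \in O_B} p_{i^\ast} = \tfrac{2}{7} P \sum_{i^\ast \in O_B} w_{i^\ast}^\ast \leq \tfrac{4}{7} P$ (which follows from $\sum_{i^\ast \in O_B} w_{i^\ast} \leq 1$, $|O_B| \leq k$, and $w_i^\ast \leq w_i + 1/k$), one lower-bounds the surplus plus default-kept value by a constant fraction of $\mathbf{v}(\OPT(\mathbf{v}))$. Taking expectations across both cases yields the $\tfrac{1}{7}$-competitive ratio.

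The main obstacle lies in the $W_{\mathrm{end}} < 1/2$ case: unlike in the one-sided knapsack prophet inequality, the ``surplus'' of an OPT-buyer $i^\ast$ may be collected on the seller side (case (i)) rather than on the buyer side (case (ii)), and the default-kept sellers may happen to be precisely those that OPT would have traded away. The delicate step is to design a charging from $\mathbf{v}(\OPT(\mathbf{v}))$ to the ALG components that consistently handles these interactions while losing only a factor of $7$.
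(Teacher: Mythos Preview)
Your treatment of DSBB, DSIC, and IR is essentially identical to the paper's Lemma~\ref{Lemma:Knapsack_DISC_IR_SBB} and is correct.

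For the approximation guarantee, however, your proposal has two related gaps.

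\textbf{The Case~2 argument is not actually given.} You correctly observe that when $W_{\mathrm{end}}<1/2$ no buyer is ever skipped by capacity and every buyer is eventually offered a trade. But you then only sketch a trichotomy (seller keeps / trade / buyer rejects) and say ``one lower-bounds the surplus plus default-kept value by a constant fraction of $\mathbf{v}(\OPT(\mathbf{v}))$'', while explicitly flagging the charging as the delicate step you have not supplied. In particular, you do not account for the contribution of sellers in $\OPT(\mathbf{v})$ at all, nor for the fact that the seller who keeps in case~(i) need not be an OPT seller, so charging her $v_j\geq p_{i^\ast}$ does not directly recover any OPT value.

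\textbf{Even if Case~2 were completed pointwise, the two cases do not combine.} In Case~1 you bound $\ALG(\mathbf{v})\geq \tfrac{1}{7}P$ with $P=\Ex[\widetilde{\mathbf{v}}]{\widetilde{\mathbf{v}}(\OPT(\widetilde{\mathbf{v}}))}$, a deterministic quantity. In Case~2 you aim for $\ALG(\mathbf{v})\geq c\cdot \mathbf{v}(\OPT(\mathbf{v}))$, the \emph{realized} optimum. Adding these after taking expectations gives
\[
\Ex[\mathbf{v}]{\ALG(\mathbf{v})}\;\geq\;\tfrac{1}{7}P\cdot\Pr{W\geq\tfrac{1}{2}}+c\cdot\Ex[\mathbf{v}]{\mathbf{v}(\OPT(\mathbf{v}))\,\mathds{1}_{W<1/2}},
\]
and the second term is not $c\cdot P\cdot\Pr{W<1/2}$ in general, because the event $\{W(\mathbf{v})<1/2\}$ is correlated with $\mathbf{v}(\OPT(\mathbf{v}))$.

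The paper sidesteps both problems by \emph{not} splitting into cases pointwise. It keeps the base-value bound $\Ex{\sum_{i\in A}P_i}\geq \tfrac{1}{7}P\cdot\Pr{W\geq 1/2}$ and proves a matching surplus bound $\Ex{\text{Surplus}}\geq \tfrac{1}{7}P\cdot\Pr{W\leq 1/2}$ via a resampling trick: for each agent $i$ one introduces an independent copy $\mathbf{v}'$, replaces $v_i$ by $v_i'$ inside the indicator $\mathds{1}_{W\leq 1/2}$ and replaces $\mathbf{v}_{-i}$ by $\mathbf{v}_{-i}'$ inside the indicator $\mathds{1}_{i\in\OPT}$, then swaps the roles of $\mathbf{v}$ and $\mathbf{v}'$. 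This makes the two indicators independent and lets one pull out $\Pr{W\leq 1/2}$ as a factor. The price sum is then bounded by $\tfrac{2}{7}P$ times $\sum_{i\in\OPT(\mathbf{v}')\cap B}w_i^\ast+\sum_{j\in\OPT(\mathbf{v}')\cap S}w_{i_j}^\ast\leq 2+1=3$, where the $+1$ for sellers comes from the fact that the first buyer $i_j$ matched to each seller satisfies $\sum_j w_{i_j}^\ast\leq 1$. This decoupling step is precisely the ``charging'' you are missing, and it is what allows the two halves to add up to $\tfrac{1}{7}P$.
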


The generalized version without restrictions on the weights can be achieved in the same way with a small loss in the approximation guarantee.

\begin{theorem} \label{Theorem:knapsack_sbb_unrestricted}
	There is a mechanism for knapsack double auctions which is DSBB, DSIC and IR for all buyers and sellers and $\frac{1}{10}$-competitive with respect to the optimal social welfare.
\end{theorem}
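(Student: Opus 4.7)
The plan is to reduce the unrestricted problem to two already-solved sub-problems by randomizing between the mechanism of Theorem~\ref{Theorem:knapsack_sbb} (for small buyers) and the matroid mechanism of Theorem~\ref{Theorem:Matroid_sbb} (for large buyers). Split the buyer set as $B = B_S \cupdot B_L$ with $B_S = \{i \in B : w_i \leq 1/2\}$ and $B_L = B \setminus B_S$. Since every $i \in B_L$ has $w_i > 1/2$, at most one large buyer can be selected in any feasible allocation, so the knapsack constraint restricted to $B_L$ coincides with the rank-one uniform matroid $\mathcal{M}_L = (B_L, \{A \subseteq B_L : |A| \leq 1\})$.

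The first step is a pointwise welfare-decomposition: for every valuation profile $\mathbf{v}$,
\[
\mathbf{v}(\OPT_S(\mathbf{v})) + \mathbf{v}(\OPT_L(\mathbf{v})) \;\geq\; \mathbf{v}(\OPT(\mathbf{v})),
\]
where $\OPT_S(\mathbf{v})$ (resp.\ $\OPT_L(\mathbf{v})$) denotes an optimal allocation when items may only be transferred to buyers in $B_S$ (resp.\ $B_L$) with the remaining items kept by their original sellers. Write $V_L$ for the total welfare contribution of the large-buyer recipients in $\OPT(\mathbf{v})$. Routing the items they received back to their original owners produces a $B_S$-feasible allocation of welfare at least $\mathbf{v}(\OPT(\mathbf{v})) - V_L$, hence $\mathbf{v}(\OPT_S(\mathbf{v})) \geq \mathbf{v}(\OPT(\mathbf{v})) - V_L$. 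Conversely, retaining only the large-buyer part of $\OPT(\mathbf{v})$ and letting all other sellers keep their items yields $\mathbf{v}(\OPT_L(\mathbf{v})) \geq V_L$. Adding these and taking expectations gives the decomposition in expectation.

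The second step is the mechanism itself. Draw a random bit once, independently of the agents' reports; with probability $p = 7/10$ run the mechanism of Theorem~\ref{Theorem:knapsack_sbb} on the instance restricted to $B_S$ (its hypothesis $w_i \leq 1/2$ is satisfied), and with probability $1 - p = 3/10$ run the matroid mechanism of Theorem~\ref{Theorem:Matroid_sbb} on the instance with buyer set $B_L$ and matroid $\mathcal{M}_L$. Both sub-mechanisms are DSBB, DSIC, and IR per realization of the bit, so the randomized mechanism inherits all three properties. Writing $\SW$ for the social welfare produced by the randomized mechanism, and combining the $1/7$-guarantee of Theorem~\ref{Theorem:knapsack_sbb}, the $1/3$-guarantee of Theorem~\ref{Theorem:Matroid_sbb}, and the decomposition gives
\[
\Ex[\mathbf{v}]{\SW} \;\geq\; \frac{p}{7}\,\Ex[\mathbf{v}]{\mathbf{v}(\OPT_S(\mathbf{v}))} + \frac{1-p}{3}\,\Ex[\mathbf{v}]{\mathbf{v}(\OPT_L(\mathbf{v}))} \;=\; \frac{1}{10}\Bigl(\Ex[\mathbf{v}]{\mathbf{v}(\OPT_S(\mathbf{v}))} + \Ex[\mathbf{v}]{\mathbf{v}(\OPT_L(\mathbf{v}))}\Bigr) \;\geq\; \frac{1}{10}\,\Ex[\mathbf{v}]{\mathbf{v}(\OPT(\mathbf{v}))},
\]
the choice $p = 7/10$ being the one that equalizes $p/7 = (1-p)/3$.

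The main obstacle I expect is the pointwise decomposition lemma; the remainder is a direct appeal to the two previous theorems and a single line of arithmetic. The subtlety in the decomposition is that when undoing the large-buyer trades in $\OPT$ the freed items must be routed back to their specific original owners rather than to arbitrary sellers, but this is harmless because sellers have non-negative value for their own items.
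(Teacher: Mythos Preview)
Your proposal is correct and follows essentially the same route as the paper: split buyers into $B_S=\{w_i\le 1/2\}$ and $B_L=\{w_i>1/2\}$, apply Theorem~\ref{Theorem:knapsack_sbb} to the former and the matroid mechanism of Theorem~\ref{Theorem:Matroid_sbb} (with the $1$-uniform matroid) to the latter, and combine the $1/7$ and $1/3$ guarantees via the decomposition $\Ex{\mathbf{v}(\OPT_S)}+\Ex{\mathbf{v}(\OPT_L)}\ge\Ex{\mathbf{v}(\OPT)}$. The only cosmetic difference is that the paper (cf.\ the analogous argument in Appendix~\ref{appendix:knapsack_wbb}) \emph{deterministically selects} whichever of the two sub-mechanisms has higher expected welfare, whereas you randomize with probabilities $7/10$ and $3/10$; both yield the same $1/10$ bound by the standard $\max(a/7,b/3)\ge (a+b)/10$ inequality.
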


We split the proof of Theorem \ref{Theorem:knapsack_sbb} in the two following lemmas.

\begin{lemma} \label{Lemma:Knapsack_DISC_IR_SBB}
	The mechanism for knapsack double auctions where no buyers demands more than half of the total capacity satisfies DSBB. Further, it is DSIC and IR for all buyers and sellers when ordering buyers such that $w_1 \geq w_2 \geq \dots \geq w_n$.
\end{lemma}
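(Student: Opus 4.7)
The plan naturally splits into DSBB, IR and DSIC for buyers, IR for sellers, and DSIC for sellers; I would present the first four (all short) before tackling the main step.

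DSBB is immediate from the algorithm: whenever Mechanism~\ref{mechanism_knapsack_sbb} performs a transfer, buyer $i$ pays exactly $p_i$ while seller $j$ simultaneously receives $p_i$, with no other payments, so the outcome decomposes into balanced bilateral trades. IR and DSIC for buyers follow by the standard take-it-or-leave-it argument: buyer $i$ is asked at most once whether to pay $p_i$ for the current item, acquires it iff her reported value is at least $p_i$ (yielding nonnegative utility), and truthfully reporting $v_i$ maximizes her utility irrespective of other agents' reports. IR for sellers is equally direct: seller $j$'s final utility is either $v_j$ (she keeps her item) or $p_i$ with $p_i > v_j$ (since the branch leading to a trade is entered only when her reported $v_j < p_i$), and so her utility is never below $v_j$.

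The substantive step is DSIC for sellers, and the key intermediate claim I would prove first is that the sequence of prices offered to any fixed seller $j$ while she is pending is monotonically non-increasing. Inspecting Mechanism~\ref{mechanism_knapsack_sbb}, seller $j$ remains the current seller until she either agrees to keep her item or an item is traded away from her, and the buyer index $i$ is only incremented (never decremented) between two successive iterations in which $j$ is still pending. Combined with the ordering $w_1 \geq \ldots \geq w_n$, which gives $w_i^{*} = \max\{w_i,1/k\} \geq w_{i+1}^{*}$ and hence
\[
p_i \;=\; \tfrac{2}{7}\,w_i^{*} \cdot \Ex[\widetilde{\mathbf{v}}]{\widetilde{\mathbf{v}}(\OPT(\widetilde{\mathbf{v}}))} \;\geq\; p_{i+1},
\]
this yields the claimed monotonicity.

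Given monotonicity, I would finish DSIC for sellers via the standard descending take-it-or-leave-it argument. Fix all other agents' reports and realizations, so the mechanism becomes deterministic and the sequence of buyer acceptance/rejection decisions in the rounds involving seller $j$ is a fixed binary sequence. Reporting $v_j'$ is equivalent to committing to the threshold strategy \emph{"keep at the first round $i$ with $p_i \leq v_j'$, expose otherwise."} If the seller reports $v_j' > v_j$, she prematurely keeps at some round with $v_j < p_i \leq v_j'$, forgoing the chance to sell at a price strictly above $v_j$; if she reports $v_j' < v_j$, she may expose at a round with $p_i \leq v_j$ and, should the buyer accept, obtain payment $p_i < v_j$, strictly worse than keeping. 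Truthful reporting therefore weakly dominates every alternative. The main technical delicacy is that exposing does not necessarily terminate the seller's trajectory (if the buyer rejects, she remains pending at a weakly lower price); monotonicity of the offered sequence ensures that any continuation only presents options weakly dominated by the truthful threshold, so the pointwise comparison goes through.
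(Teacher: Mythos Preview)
Your proposal is correct and follows essentially the same approach as the paper: DSBB, IR, and DSIC for buyers are handled by the standard one-shot take-it-or-leave-it arguments, and DSIC for sellers rests on the observation that the weight ordering $w_1 \geq \dots \geq w_n$ makes the sequence of prices offered to any pending seller non-increasing, so truthful reporting (sell as early as a price exceeds your value, keep otherwise) weakly dominates. The paper's proof is considerably terser, but your more detailed treatment of the seller's over- and under-reporting cases is a faithful expansion of the same idea.
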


\begin{proof}
	Again, the arguments for DSBB, IR and DSIC for buyers follow in similar ways as in the previous sections. Concerning DSIC for sellers, note that we sorted buyers by non-increasing weight. Hence, the price for trades is non-increasing in the ongoing process. As a consequence, as a seller, you would like to sell your item as early as possible (if you want to sell it at all). Therefore, reporting a lower valuation might end in a trade at some price lower than your actual value. On the other hand, reporting a higher valuation may block a possibly beneficial trade. Overall, misreporting does not increase the seller's utility compared to truth-telling. 
\end{proof}

\begin{lemma}\label{Lemma:Knapsack_sbb_approximation}
	The mechanism for knapsack double auctions where no buyers demands more than half of the total capacity is $\frac{1}{7}$-competitive with respect to the optimal social welfare when ordering buyers such that $w_1 \geq w_2 \geq \dots \geq w_n$.
\end{lemma}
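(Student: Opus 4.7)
The plan is to follow the base-value/surplus template from Theorem~\ref{Theorem:Matroid_sbb}, adapted to the knapsack constraint and the weight-proportional price $p_i = \tfrac{2}{7} w_i^\ast \Ex[]{\widetilde{\mathbf{v}}(\OPT(\widetilde{\mathbf{v}}))}$. For a fixed realization $\mathbf{v}$, decompose
\[
\SW(\mathbf{v}) \;=\; \underbrace{\sum_{i \in A_B} p_i + \sum_{l \in A_S} p_{i(l)}}_{\text{base value}} \;+\; \underbrace{\sum_{i \in A_B}(v_i - p_i) + \sum_{l \in A_S}(v_l - p_{i(l)}) + \sum_{l \in \Msell} v_l}_{\text{surplus}},
\]
where $A_B$ is the set of buyers who bought, $A_S$ the set of sellers who kept their item via the $v_j \ge p_i$ branch, $\Msell$ the set of sellers never reached, and $i(l)$ the buyer current at the moment seller $l$ kept her item. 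The target $\Ex[]{\SW(\mathbf{v})} \ge \tfrac{1}{7}\Ex[]{\mathbf{v}(\OPT(\mathbf{v}))}$ will be obtained through the bound $\mathbf{v}(\OPT(\mathbf{v})) \le \mathbf{v}(\OPT_B(\mathbf{v})) + \sum_{j \in S} v_j$, where $\OPT_B(\mathbf{v})$ is the value-maximizing knapsack-feasible buyer set.

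For the base value, every event in which $j$ advances corresponds to a unique buyer $i$ and contributes $p_i$ while increasing $W$ by $w_i^\ast \ge 1/k$; so the base value equals $\tfrac{2}{7}\Ex[]{\widetilde{\mathbf{v}}(\OPT(\widetilde{\mathbf{v}}))} \cdot W_{\text{final}}$. I then case-split on $W_{\text{final}}$. If $W_{\text{final}} \ge \tfrac{1}{2}$ (which in particular covers the case where all items are allocated and $W_{\text{final}}=1$), the base value alone already yields at least $\tfrac{1}{7}\Ex[]{\widetilde{\mathbf{v}}(\OPT(\widetilde{\mathbf{v}}))}$. Otherwise $W_{\text{final}} < \tfrac{1}{2}$; since $w_i^\ast \le \tfrac{1}{2}$ for every buyer (using $w_i \le \tfrac{1}{2}$ and $k \ge 2$) and $W$ is non-decreasing, the capacity check $W + w_i^\ast \le 1$ is never violated, hence no buyer is ever skipped and every buyer $i$'s processing must end at some seller $j_i$ with $v_{j_i} < p_i$---i.e.\ buyer $i$ is actually offered her item at price $p_i$.

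In this latter case, each $i \in \OPT_B(\mathbf{v})$ was offered at price $p_i$, contributing at least $(v_i - p_i)^+$ to the buyer surplus, which gives $\mathbf{v}(\OPT_B(\mathbf{v})) - \sum_{i \in \OPT_B(\mathbf{v})} p_i$ as a lower bound on the buyer-surplus part. Combining $\sum_{i \in \OPT_B} w_i \le 1$ with $|\OPT_B| \le k$ and $w_i^\ast - w_i \le 1/k$ yields $\sum_{i \in \OPT_B} w_i^\ast \le 2$, and hence $\sum_{i \in \OPT_B} p_i \le \tfrac{4}{7}\Ex[]{\widetilde{\mathbf{v}}(\OPT(\widetilde{\mathbf{v}}))}$. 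For the seller side, every trade swaps a value $v_{j_i}$ for a strictly larger $v_i \ge p_i > v_{j_i}$, which yields the pointwise bound $\SW(\mathbf{v}) \ge \sum_{j \in S} v_j$. Combining these bounds with suitable convex weights across the two cases and taking expectations---using $\Ex[]{\mathbf{v}(\OPT_B(\mathbf{v}))} + \Ex[]{\sum_j v_j} \ge \Ex[]{\mathbf{v}(\OPT(\mathbf{v}))}$---produces the $\tfrac{1}{7}$-competitive guarantee.

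The main obstacle will be that the surplus bound in the case $W_{\text{final}} < \tfrac{1}{2}$ carries a negative term $-\tfrac{4}{7}\Ex[]{\mathbf{v}(\OPT(\mathbf{v}))}$ (from the prices of the $\OPT_B$ buyers), which must be compensated by the base-value bound in the other case via a correctly chosen convex combination; the multiplier $\tfrac{2}{7}$ in the price is calibrated precisely so that the combined pointwise lower bound averages to $\tfrac{1}{7}\Ex[]{\mathbf{v}(\OPT(\mathbf{v}))}$. A second subtlety is that the case-split indicator depends on $\mathbf{v}$ while $\Ex[]{\mathbf{v}(\OPT(\mathbf{v}))}$ is a deterministic constant, so the order in which expectations and conditioning are applied must be handled with care, analogous to the approach used in the matroid setting.
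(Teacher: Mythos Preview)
Your pointwise case-split on $W_{\text{final}}(\mathbf{v})$ does not close: the event $E_2=\{W_{\text{final}}(\mathbf{v})<\tfrac{1}{2}\}$ is correlated with the random quantities $\mathbf{v}(\OPT_B(\mathbf{v}))$ and $\sum_j v_j$ that appear in your surplus bounds, and no convex combination of your two pointwise lower bounds on $E_2$ recovers the missing mass. Concretely, take $k=2$ sellers with value $0$ and $n$ buyers each with $w_i=\tfrac12$ and $v_i=V$ with probability $1/n$ (else $0$). Then $E_2$ is exactly the event that every buyer drew $0$, which has probability $\approx e^{-1}$; on that event $\mathbf{v}(\OPT_B(\mathbf{v}))=\sum_j v_j=0$, so both of your $E_2$-bounds collapse to $\SW\ge 0$. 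Your argument therefore yields only $\Ex[]{\SW}\ge\tfrac{1}{7}\Ex[]{\OPT}\cdot\Pr[]{E_1}\approx\tfrac{1}{7}(1-e^{-1})\Ex[]{\OPT}$, strictly below the target. The ``subtlety'' you flag at the end is in fact the crux, not a detail that can be patched after the fact.

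The paper does not do a pointwise split. It bounds the \emph{expected} base value by $\tfrac{1}{7}\Ex[]{\OPT}\cdot\Pr[]{W\ge\tfrac12}$ and the \emph{expected} surplus by $\tfrac{1}{7}\Ex[]{\OPT}\cdot\Pr[]{W\le\tfrac12}$, then adds. The surplus bound is obtained agent-by-agent via the independent-copy trick: whether buyer $i$ is ever offered her item does not depend on $v_i$, so one writes $\text{surplus}_i\ge(v_i-p_i)^+\mathds{1}_{W((v_i',\mathbf v_{-i}))\le\frac12}\,\mathds{1}_{i\in\OPT((v_i,\mathbf v_{-i}'))}$ and swaps $v_i\leftrightarrow v_i'$ in expectation, which factors out $\Pr[]{W\le\tfrac12}$ cleanly against an \emph{unconditional} copy of $\OPT$. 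Sellers are handled the same way through the first buyer $i_j$ they are matched to (independent of $v_j$), using the bound $\sum_{j\in S} w_{i_j}^\ast\le 1$; together with $\sum_{i\in\OPT\cap B} w_i^\ast\le 2$ this produces the factor $1-3\cdot\tfrac27=\tfrac17$. Your alternative seller inequality $\SW(\mathbf v)\ge\sum_j v_j$ is correct pointwise, but it cannot substitute for this decoupled seller-surplus estimate inside the expectation argument.
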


\begin{proof}
	The set of agents who receive an item $A$ depends on $\mathbf{v}$, so we denote by $A(\mathbf{v})$ the set $A$ under valuation profile $\mathbf{v}$. Also $W$ depends on $\mathbf{v}$, so in the same way we denote by $W(\mathbf{v})$ the value of $W$ under valuation profile $\mathbf{v}$. We want to compare $\Ex[\mathbf{v}]{\mathbf{v}(A(\mathbf{v}))}$ to $\Ex[\mathbf{v}]{\mathbf{v}(\OPT(\mathbf{v}))}$. To this end, again, we split the welfare of our algorithm into two parts, the base value and the surplus, and bound each quantity separately. The base value is thereby defined as follows: let agent $i$ receive an item in our mechanism, i.e. $i \in A$. Any buyer who gets an item has paid her buyer-specific price for an item. Any seller who decided to keep her item was asked to keep it for some buyer-specific price. The part of any agent's value which is below this price is denoted the base value. The surplus is the part of any agent's value above this threshold if it exists, otherwise it is zero. \\
	
	\textbf{Base Value}: Our base part of the social welfare is defined via the prices. Summing over all agents in $A(\mathbf{v})$, we can compute the following. In particular, we sum over all prices for which either a buyer purchased an item or a seller irrevocably kept it.
	\begin{align*}
	\Ex[\mathbf{v}]{ \sum_{i \in A(\mathbf{v})} p_i   } = \frac{2}{7} \Ex[\widetilde{\mathbf{v}}]{\widetilde{\mathbf{v}}\left( \OPT (\widetilde{\mathbf{v}}) \right)} \cdot \Ex[\mathbf{v}]{W(\mathbf{v})} \geq \frac{2}{7} \Ex[\widetilde{\mathbf{v}}]{\widetilde{\mathbf{v}}\left( \OPT (\widetilde{\mathbf{v}}) \right)} \cdot \frac{1}{2} \Pr[\mathbf{v}]{W(\mathbf{v})\geq \frac{1}{2}} \enspace.
	\end{align*}
	
	\textbf{Surplus}: 
	We consider buyers and sellers separately and combine their respective contributions to the surplus afterwards. \\
	
	\textit{Sellers}: We observe that a seller $j$ might be matched to some buyer $i$ in the mechanism. Denote by $i_j$ the first buyer that seller $j$ is matched to and let $i_j = \perp$ and $w_{i_j}^\ast = 0$ if seller $j$ is never matched to a buyer. Note that this initial matching is independent of seller $j$'s value. Further, prices are only non-increasing in the ongoing process. Thus, we can bound the expected surplus of seller $j$ by the use of $\mathbf{v}$ and $\mathbf{v}'$ being independent and identically distributed combined with linearity of expectation to get
	\begin{align*}
	\Ex[\mathbf{v}]{\text{surplus}_j } & \geq \Ex[\mathbf{v}]{ \left( v_j - p_{i_j} \right)^+ } \geq \Ex[\mathbf{v}, \mathbf{v}']{ \left( v_j - p_{i_j} \right)^+ \cdot \mathds{1}_{W\left((v_j', \mathbf{v}_{-j})\right) \leq \frac{1}{2} } \cdot \mathds{1}_{j \in \OPT\left( (v_j , \mathbf{v}_{-j}') \right)} } \\ & =  \Ex[\mathbf{v}, \mathbf{v}']{ \left( v_j' - p_{i_j} \right)^+ \cdot \mathds{1}_{W\left(\mathbf{v}\right) \leq \frac{1}{2} } \cdot \mathds{1}_{j \in \OPT\left( \mathbf{v}' \right)} } \enspace.
	\end{align*}
	
	\textit{Buyers}: When considering the buyers, we first argue which circumstances need to be fulfilled such that buyer $i$ gets an item in our mechanism. First, buyer $i$'s value needs to exceed her price $p_i$. Second, there needs to be a time step $t$ such that $W_t + w_i^\ast \leq 1$, where $W_t$ denotes the value of $W$ at time step $t$. Third, there needs to exist a seller $j$ such that $v_j \leq p_i$ as otherwise, there will be no item available for buyer $i$. We make use of the following observation: Buyer $i$ is never asked to purchase an item until we either can offer her an item for price $p_i$ or buyer $i$ becomes infeasible with respect to $W$ and $w_i^\ast$. Therefore, everything happening before this event is independent of buyer $i$'s value. Hence, when considering the value of $W$ on a hallucinated valuation profile $v_i'$ drawn independently from the same distribution as $v_i$, we get the following: if $W\left( (v_i', v_{-i}) \right) \leq \frac{1}{2}$, i.e. the value of $W$ on valuation profile $(v_i', v_{-i})$ is at most $\frac{1}{2}$ after running the mechanism, then buyer $i$ could be feasibly added at the end of the mechanism. As $W$ is only non-decreasing, buyer $i$ could have also been feasibly added at time $t$. Using this, we can bound the surplus of buyer $i$ via 
	\begin{align*}
	\text{surplus}_i \geq \left( v_i - p_i \right)^+ \cdot \mathds{1}_{W\left((v_i', \mathbf{v}_{-i})\right) \leq \frac{1}{2} } \geq \left( v_i - p_i \right)^+ \cdot \mathds{1}_{W\left((v_i', \mathbf{v}_{-i})\right) \leq \frac{1}{2} } \cdot \mathds{1}_{i \in \OPT\left( (v_i , \mathbf{v}_{-i}') \right)}  \enspace.
	\end{align*}
	
	Again, using linearity of expectation as well as exploiting that $\mathbf{v}'$ and $\mathbf{v}$ are independent and identically distributed, we get \[ \Ex[\mathbf{v}]{\text{surplus}_i} \geq \Ex[\mathbf{v}, \mathbf{v}']{\left( v_i' - p_i \right)^+ \cdot \mathds{1}_{W\left(\mathbf{v}\right) \leq \frac{1}{2}} \cdot \mathds{1}_{i \in \OPT\left( \mathbf{v}'\right)} } \enspace. \]
	
	\textit{Combination}: Summing over all buyers and sellers, we can combine the two bounds. We denote by $q_{\mathbf{v}} \coloneqq \Pr[\mathbf{v}]{W\left(\mathbf{v}\right) \leq \frac{1}{2}}$.
	\begin{align*}
	\sum_{i \in B \cup S} \Ex[\mathbf{v}]{ \text{surplus}_i } & \geq \sum_{i \in B } \Ex[\mathbf{v}, \mathbf{v}']{\left( v_i' - p_i \right)^+ \cdot \mathds{1}_{W\left(\mathbf{v}\right) \leq \frac{1}{2} } \cdot \mathds{1}_{i \in \OPT\left( \mathbf{v}'\right)} } \\ & + \sum_{j \in S} \Ex[\mathbf{v}, \mathbf{v}']{\left( v_j' - p_{i_j} \right)^+ \cdot \mathds{1}_{W\left(\mathbf{v}\right) \leq \frac{1}{2} } \cdot \mathds{1}_{j \in \OPT\left( \mathbf{v}'\right)} }\\ & = q_{\mathbf{v}} \cdot \left( \Ex[\mathbf{v}']{ \sum_{i \in \OPT\left( \mathbf{v}'\right) \cap B} \left( v_i' - p_i \right)^+   } +  \Ex[\mathbf{v}']{ \sum_{j \in \OPT\left( \mathbf{v}'\right) \cap S} \left( v_j' - p_{i_j} \right)^+   } \right) \\ & \geq q_{\mathbf{v}} \cdot \left( \Ex[\mathbf{v}']{ \mathbf{v}' \left( \OPT(\mathbf{v}') \right)} - \Ex[\mathbf{v}']{ \sum_{i \in \OPT(\mathbf{v}') \cap B} p_i } - \Ex[\mathbf{v}']{ \sum_{j \in \OPT(\mathbf{v}') \cap S} p_{i_j} } \right)  
	\end{align*}	
	
	Again, for the first equality we use that $\mathbf{v}$ and $\mathbf{v}'$ are independent and the respective terms each only depend on one of the two. In order to bound the sums of prices, we note that also $\OPT$ is restricted with a total capacity of one as well as $\OPT$ can also not allocate more than $k$ items, so \[\sum_{i \in \OPT(\mathbf{v}') \cap B } w_i^\ast \leq \sum_{i \in \OPT(\mathbf{v}') \cap B } w_i + \sum_{i \in \OPT(\mathbf{v}') \cap B } \frac{1}{k} \leq 1+1 = 2 \enspace. \] Further, by the definition of $i_j$, we get that  \[ \sum_{j \in \OPT(\mathbf{v}') \cap S} w_{i_j}^\ast \leq \sum_{j \in S} w_{i_j}^\ast \leq 1  \] as we do not tentatively match a buyer and a seller if $W + w_i^\ast$ exceeds one. \\ Therefore, we can bound the overall surplus by \[ \Ex[\mathbf{v}]{\sum_{i \in B \cup S} \text{surplus}_i } \geq  \Pr[\mathbf{v}]{ W\left(\mathbf{v}\right) \leq \frac{1}{2} } \cdot \left( 1 - 3 \cdot \frac{2}{7} \right) \Ex[\widetilde{\mathbf{v}}]{ \widetilde{\mathbf{v}} \left( \OPT(\widetilde{\mathbf{v}}) \right)} \enspace. \]
	
	Summing the base value and the surplus proves our claim as we can exploit that $\mathbf{v}$, $\mathbf{v}'$ and $\widetilde{\mathbf{v}}$ are independent and identically distributed.
\end{proof}

In order to extend this to the general case in Theorem \ref{Theorem:knapsack_sbb_unrestricted} when $w_i \in [0,1]$ instead of $w_i \leq 1/2$, we split the set of buyers in high- and low-weighted ones and run our constructed mechanism on the latter - for the former, we now use the DSBB-mechanism for matroids from Section \ref{section:matroide_sbb}. High-weights buyers are the ones with $w_i > \frac{1}{2}$, low-weighted ones satisfy $w_i \leq \frac{1}{2}$. Observe that in an instance of high-weighted buyers, we can allocate at most one item which corresponds to a $1$-uniform matroid constraint over the set of buyers. Concerning the use of the mechanism for matroid double auctions, note that we do not need to insist on a offline order of buyers now. We can rather fix the arrival sequence of buyers beforehand as we consider the $1$-uniform matroid over all buyers. This implies that all buyers face the same take-it-or-leave-it buyer-respective contribution to the prices and hence allow easier arguments concerning the properties of the mechanism from Section \ref{section:matroide_sbb}. By this construction, we can formulate Theorem \ref{Theorem:knapsack_sbb_unrestricted}.

	\section{Knapsack Double Auctions with Weak Budget-Balance and Online Arrival}
\label{section:knapsack_wbb}

Again, we work in a setting with a Knapsack constraint, so each of the $n$ buyers has a weight $w_i \in [0,1]$ with the constraint that $\sum_{i \in A_B} w_i \leq 1$, buyers have unit-demand valuations and sellers are unit-supply each bringing exactly one identical item to the market, hence $k = m$. Our mechanism can handle online adversarial arrival orders of agents - even with an (adaptive) adversary specifying the order. Notation is simplified by interpreting $\mathbf{v}$ as the $|B \cup S|$-dimensional vector with non-negative real entries in which each entry $v_i$ corresponds to the value of an agent for being allocated an item. Also, we overload notation and denote the seller as well as the corresponding item by $j$.

\subsection*{The Mechanism}

As in Section \ref{section:knapsack_sbb}, note that if $k = 1$, i.e. there is only one seller bringing one item to the market, we can simply run our mechanism from Section \ref{section:xos_sbb} in order to get a $1/2$-competitive mechanism which is DSBB, DSIC and IR. Hence, we will only consider the case of $k \geq 2$ in the following. Further, we start by restricting weights to the case of $w_i \leq \frac{1}{2}$ for all buyers $i \in B$. The general case will be discussed in Appendix \ref{appendix:knapsack_wbb}.  \\

\begin{algorithm}[h]
\SetAlgoNoLine
\DontPrintSemicolon
\KwResult{Set $A = A_B \cup A_S$ of agents to get an item with $A_B \subseteq B$, $\sum_{i \in A_B} w_i \leq 1$, $A_S \subseteq S$ and $|A| = |S|$}
$A \longleftarrow \emptyset$ ; \qquad $M_{\textnormal{SELL}} \longleftarrow \emptyset$ \\
  \For{$j \in S$}{
	  \If{$v_j \geq p_j$}{ $A \longleftarrow A \cup \{ j \}$ }
	  \If{$v_j < p_j$}{$M_{\textnormal{SELL}} \longleftarrow M_{\textnormal{SELL}} \cup \{ j \}$}
    }
  \For{$i \in B$}{
  	\If{$\sum_{i' \in A} w_{i'}^\ast \leq 1 - w_i^\ast$}{
  		\If{$v_i \geq p_i$}{
  			$A \longleftarrow A \cup \{ i \}$; \\
  			\text{pick one $j \in M_{\textnormal{SELL}}$, transfer item from $j $ to $i$, $i$ pays $p_i$ to mechanism, $j$ receives $p_j$}; \\
  			$M_{\textnormal{SELL}} \longleftarrow M_{\textnormal{SELL}} \setminus \{ j \}$
  			}
    } }
  $A \longleftarrow A \cup M_{\textnormal{SELL}}$ 
\caption{Mechanism for Knapsack Double Auctions with Online Adversarial Arrival}
\label{mechanism_knapsack_wbb}
\end{algorithm}

We state our mechanism in Algorithm \ref{mechanism_knapsack_wbb} and give a quick description: We compute artificial weights $w_i^\ast$ for any buyer and seller in the following way: for all buyers, set $w_i^\ast \coloneqq \max\left( w_i; \frac{1}{k} \right)$ and for all sellers, let $w_j^\ast \coloneqq \frac{1}{k}$. For any agent, i.e. any buyer and seller, we set the agent-specific price to be \[ p_i \coloneqq \frac{2}{5}\cdot w_i^\ast \cdot \Ex[\widetilde{\mathbf{v}}]{\widetilde{\mathbf{v}}\left( \OPT (\widetilde{\mathbf{v}}) \right)} \enspace, \] where $\OPT(\widetilde{\mathbf{v}})$ denotes the optimal allocation of all items among all agents such that the set of selected buyers satisfies the knapsack constraint. We choose $\widetilde{\mathbf{v}}$ to be drawn independently from the same distribution as $\mathbf{v}$. Now, we first go through all sellers asking each if she wants to keep or try selling the item if we might pay an amount of $p_j$ to her later-on. Afterwards, we go through all buyers, asking each of them if she wants to purchase an item for price $p_i$ if buyer $i$ can be feasibly added to the chosen set of agents with respect to the artificial weights $w_i^\ast$. 

\subsection*{Feasibility Considerations}

Arguing about the feasibility of our solution, we can proceed similar to Section \ref{section:knapsack_sbb}, as we again compute a feasible allocation with respect to the artificial weights $w_i^\ast$. To see that this is also feasible with respect to the initial weights $w_i$, note that we ensure $1 \geq \sum_{i \in A} w_i^\ast \geq \sum_{i \in A_B} w_i^\ast \geq \sum_{i \in A_B} w_i$ throughout our mechanism. Further, we need to ensure that we do not allocate more than $k$ items in total. This is mirrored by the fact that $w_i^\ast \geq \frac{1}{k}$ for any agent $i$ and, as we do not allocate items if $\sum_{i \in A } w_i^\ast > 1$, we get that $|A| = k \cdot \sum_{i \in A} \frac{1}{k} \leq k \cdot \sum_{i \in A} w_i^\ast \leq k$. 

\subsection*{Properties of Our Mechanism}

\begin{theorem} \label{Theorem:knapsack_wbb}
	The mechanism for knapsack double auctions is DWBB, DSIC and IR for all buyers and sellers and $\frac{1}{5}$-competitive with respect to the optimal social welfare if all buyers' weights are no larger than half of the total capacity.
\end{theorem}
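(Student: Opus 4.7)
The plan is to establish DWBB, DSIC, and IR by inspection, and to adapt the base-plus-surplus analysis of Lemma~\ref{Lemma:Knapsack_sbb_approximation} for the competitive ratio. For DWBB, note that in any executed trade buyer $i$ pays $p_i = \frac{2}{5}\, w_i^\ast\, \Ex[\widetilde{\mathbf{v}}]{\widetilde{\mathbf{v}}(\OPT(\widetilde{\mathbf{v}}))}$ while seller $j$ receives $p_j = \frac{2}{5}\cdot \frac{1}{k}\, \Ex[\widetilde{\mathbf{v}}]{\widetilde{\mathbf{v}}(\OPT(\widetilde{\mathbf{v}}))}$; since $w_i^\ast \geq 1/k$, we have $p_i \geq p_j$ and the mechanism retains the non-negative difference without subsidising any trade. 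Each buyer faces a single take-it-or-leave-it offer at a price independent of her own report, which gives DSIC and IR immediately. Each seller likewise faces a single take-it-or-leave-it offer at the outset whose price is independent of her report; reporting truthfully amounts to choosing between \emph{keep the item} (utility $v_j$) and \emph{enter $\Msell$} (utility $p_j$ if later matched, else $v_j$), which is optimal exactly when $v_j < p_j$.

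For the competitive ratio, let $A^\ast$ collect the agents in $A$ that were priced explicitly, i.e., the initial keepers $\{j \in S : v_j \geq p_j\}$ together with the buyers who bought, and set $W(\mathbf{v}) \coloneqq \sum_{i \in A^\ast} w_i^\ast$; the feasibility check bounds $W(\mathbf{v})$ by $1$. Since $v_i \geq p_i$ for every $i \in A^\ast$, I split
\[ \mathbf{v}(A(\mathbf{v})) \geq \sum_{i \in A^\ast} p_i + \sum_{i \in A^\ast}(v_i - p_i), \]
and bound the base value and the surplus separately. The base value gives
\[ \Ex[\mathbf{v}]{\sum_{i \in A^\ast} p_i} = \frac{2}{5}\Ex[\widetilde{\mathbf{v}}]{\widetilde{\mathbf{v}}(\OPT(\widetilde{\mathbf{v}}))}\cdot\Ex[\mathbf{v}]{W(\mathbf{v})} \geq \frac{1}{5}\Ex[\widetilde{\mathbf{v}}]{\widetilde{\mathbf{v}}(\OPT(\widetilde{\mathbf{v}}))}\cdot\Pr[\mathbf{v}]{W(\mathbf{v}) \geq 1/2}. \]

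For the surplus I introduce an independent copy $\mathbf{v}'$ and follow the SBB template, while exploiting a structural simplification: each seller's keep-or-sell decision depends only on $(v_j, p_j)$, so her surplus contributes $(v_j - p_j)^+$ unconditionally and yields the per-seller lower bound $\Ex[\mathbf{v}']{(v_j' - p_j)^+ \mathds{1}_{j \in \OPT(\mathbf{v}')}}$ with no $\mathds{1}[W \leq 1/2]$ factor. For buyers, conditional on $W(\mathbf{v}) \leq 1/2$ every buyer is feasible at her turn (since $w_i^\ast \leq 1/2$, using $w_i \leq 1/2$ and $k \geq 2$) and $\Msell$ is always non-empty (since $\lvert A^\ast \rvert < k$), so a buyer buys whenever $v_i \geq p_i$; swapping $v_i \leftrightarrow v_i'$ as in the SBB proof gives the per-buyer bound $\Pr[\mathbf{v}]{W(\mathbf{v}) \leq 1/2}\cdot\Ex[\mathbf{v}']{(v_i' - p_i)^+ \mathds{1}_{i \in \OPT(\mathbf{v}')}}$. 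Summing over agents in $\OPT(\mathbf{v}')$ and using $\sum_{i \in \OPT(\mathbf{v}') \cap B} w_i^\ast \leq 2$ (capacity plus the $|\OPT \cap B|/k \leq 1$ overhead) and $\sum_{j \in \OPT(\mathbf{v}') \cap S} w_j^\ast \leq 1$ (since seller weights equal $1/k$), the scaling $\alpha = 2/5$ combines base and surplus into a uniform lower bound of $\frac{1}{5}\Ex[\widetilde{\mathbf{v}}]{\widetilde{\mathbf{v}}(\OPT(\widetilde{\mathbf{v}}))}$.

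The main obstacle is reconciling the three contributions---a $\Pr[\mathbf{v}]{W \geq 1/2}$-weighted base, a $\Pr[\mathbf{v}]{W \leq 1/2}$-weighted buyer surplus, and an \emph{unconditional} seller surplus---into the single $\frac{1}{5}$-bound. The constant $\alpha = 2/5$ is precisely the scaling that balances these three terms (smaller than the $2/7$ of the SBB variant exactly because the seller surplus here no longer pays a $\Pr[W \leq 1/2]$ tax), and carrying out the combination cleanly requires the hypothesis $w_i \leq 1/2$ to preserve feasibility of buyer $i$ across the hallucinated swap $v_i \leftrightarrow v_i'$, which is why the statement is restricted to this weight regime.
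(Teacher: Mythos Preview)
Your DWBB/DSIC/IR argument is correct and coincides with the paper's. The gap is in the competitive-ratio combination.

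Write $q = \Pr[\mathbf v]{W(\mathbf v)\le 1/2}$, $O=\Ex[\widetilde{\mathbf v}]{\widetilde{\mathbf v}(\OPT(\widetilde{\mathbf v}))}$, $B^\star=\Ex[\mathbf v']{\sum_{i\in\OPT(\mathbf v')\cap B} v_i'}$ and $S^\star=O-B^\star$. Your three stated bounds are
\[
\text{Base}\ \ge\ \tfrac15(1-q)\,O,\qquad
\text{Buyer surplus}\ \ge\ q\bigl(B^\star-\tfrac45 O\bigr),\qquad
\text{Seller surplus}\ \ge\ S^\star-\tfrac25 O,
\]
and you assert that $\alpha=2/5$ ``balances'' them to $\tfrac15 O$. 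But summing the three gives $-\tfrac15 O+(1-q)S^\star$, which is not bounded below by $\tfrac15 O$ (take $q=1$ or $S^\star=0$). An unconditional seller term and a $q$-weighted buyer term simply do not mesh this way; the side remark that $2/5$ is ``smaller than $2/7$'' is also backwards, which already hints that the balancing intuition is off.

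The paper fixes this with two moves. First, since $(v_j-p_j)^+\ge 0$, one may multiply the seller surplus by the same indicator $\mathds 1\{W(\mathbf v)\le 1/2\}$ before taking expectations; this is a valid weakening that puts sellers and buyers under a common factor $q$. Second, one uses the \emph{joint} bound $\sum_{i\in\OPT(\mathbf v')} w_i^\ast \le \sum_{i\in\OPT(\mathbf v')\cap B} w_i + |\OPT(\mathbf v')|/k \le 2$, rather than your separate ``$\le 2$ for buyers, $\le 1$ for sellers'' which totals $3$ and would only deliver a $1/7$ ratio even after the first fix. With both changes the total surplus is at least $q\bigl(O-\tfrac45 O\bigr)=\tfrac15 q\,O$, and adding the base value yields $\tfrac15 O$.
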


The proof can be found in Appendix \ref{appendix:knapsack_wbb} as well as a proof for the generalized version:

\begin{theorem} \label{Theorem:knapsack_wbb_unrestricted}
	There is a mechanism for knapsack double auctions which is DWBB, DSIC and IR for all buyers and sellers and $\frac{1}{7}$-competitive with respect to the optimal social welfare for any adversarial online arrival order of agents.
\end{theorem}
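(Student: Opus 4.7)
The plan is to partition the buyers into \emph{heavy} ones with $w_i > 1/2$ and \emph{light} ones with $w_i \leq 1/2$, call the two classes $B_H$ and $B_L$, and combine two specialized mechanisms. Let $\OPT_H(\mathbf{v})$ and $\OPT_L(\mathbf{v})$ denote the optimal social welfare on the sub-instances with buyer sets $B_H$ and $B_L$ respectively (keeping the full seller set $S$ and original capacity $1$). A structural observation is that $\Ex[\mathbf{v}]{\mathbf{v}(\OPT(\mathbf{v}))} \leq \Ex[\mathbf{v}]{\OPT_H(\mathbf{v})} + \Ex[\mathbf{v}]{\OPT_L(\mathbf{v})}$. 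To see this, fix $\mathbf{v}$, take the welfare-optimal allocation, and partition the buyers it serves into the heavy part $B^*_H$ and the light part $B^*_L$, paired with seller subsets $S^*_H$ and $S^*_L$. Matching $B^*_H$ to $S^*_H$ while letting every other seller keep her item is feasible for the heavy sub-instance, so $\OPT_H(\mathbf{v}) \geq \sum_{i \in B^*_H} v_i + \sum_{j \in S \setminus S^*_H} v_j$, and symmetrically for $\OPT_L(\mathbf{v})$. Summing the two bounds gives $\mathbf{v}(\OPT(\mathbf{v})) + \sum_{j \in S \setminus (S^*_H \cup S^*_L)} v_j$, which is at least $\mathbf{v}(\OPT(\mathbf{v}))$.

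Given this decomposition, I would let the final mechanism flip a biased coin independent of all valuations. With probability $2/7$ it runs the DWBB matroid double auction of Section~\ref{section:matroide_wbb} on the instance with buyer set $B_H$ equipped with the $1$-uniform matroid, which is legitimate since any two heavy buyers jointly exceed capacity. With probability $5/7$ it runs the restricted mechanism of Theorem~\ref{Theorem:knapsack_wbb} on the instance with buyer set $B_L$. Theorem~\ref{Theorem:Matroid} guarantees the first branch extracts at least $\tfrac{1}{2}\Ex[\mathbf{v}]{\OPT_H(\mathbf{v})}$ in expectation, while Theorem~\ref{Theorem:knapsack_wbb} gives at least $\tfrac{1}{5}\Ex[\mathbf{v}]{\OPT_L(\mathbf{v})}$ for the second. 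Averaging over the coin yields
\[
\tfrac{2}{7}\cdot\tfrac{1}{2}\Ex[\mathbf{v}]{\OPT_H(\mathbf{v})} + \tfrac{5}{7}\cdot\tfrac{1}{5}\Ex[\mathbf{v}]{\OPT_L(\mathbf{v})} = \tfrac{1}{7}\bigl(\Ex[\mathbf{v}]{\OPT_H(\mathbf{v})} + \Ex[\mathbf{v}]{\OPT_L(\mathbf{v})}\bigr) \geq \tfrac{1}{7}\Ex[\mathbf{v}]{\mathbf{v}(\OPT(\mathbf{v}))}.
\]
The weights $2/7$ and $5/7$ are chosen to equalize $p\cdot\tfrac{1}{2}$ with $(1-p)\cdot\tfrac{1}{5}$, which maximizes the guaranteed minimum of the two contributions.

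The remaining properties transfer branch-by-branch. Because the coin is independent of all reports, DSIC and IR hold under any realization of the coin: each agent is queried only by the active sub-mechanism, and both sub-mechanisms are DSIC and IR; sellers participate in both branches with truthful reporting dominant in each, while buyers of the inactive class are never asked. DWBB is preserved since each sub-mechanism is DWBB. Online adversarial arrival carries over because both sub-mechanisms handle adaptive adversaries, and arriving agents outside the active class are simply skipped.

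The main obstacle is making the decomposition $\OPT_H+\OPT_L \geq \OPT$ rigorous, which hinges on items being identical and sellers unit-supply so that restricting the optimal matching to a buyer subclass is automatically feasible. Once this bookkeeping is in place, the rest is immediate from the previously established sub-mechanism guarantees and the choice of mixing weights.
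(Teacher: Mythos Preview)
Your proposal is correct and follows essentially the same route as the paper: split buyers into heavy ($w_i>1/2$) and light ($w_i\le 1/2$), handle the heavy side via the $1$-uniform matroid DWBB mechanism of Section~\ref{section:matroide_wbb} and the light side via Theorem~\ref{Theorem:knapsack_wbb}, then combine using $\OPT_H+\OPT_L\ge\OPT$. The only difference is in the combination step---the paper deterministically runs whichever sub-mechanism has larger \emph{expected} welfare (computable since the distributions are known) and uses $\max(a,b)\ge \tfrac{2}{7}a+\tfrac{5}{7}b$, whereas you randomize with weights $2/7$ and $5/7$; both yield the same $1/7$ bound.
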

	\section{Conclusion and Open Problems} \label{Section:Conclusion}

As we have shown, understanding one-sided markets properly also allows to design truthful mechanisms in two-sided environments. Somewhat surprisingly, also guarantees on the competitive ratios for mechanisms carry over despite the fact that optimal social welfare cannot be attained in two-sided markets. These results lead to a couple of open questions for future research.

Our first mechanism is strongly budget-balanced, but requires agents to be carefully selected for trades (offline arrival). In contrast, relaxing the budget-balance constraint and allowing weakly budget-balanced mechanisms enables the construction of a mechanism for matroid double auctions which can deal with adversarial online arrival of buyers. It is an interesting goal to combine the best of these two results and design a (posted-prices) mechanism which is strongly budget-balanced but also a $1/2$-approximation and possibly also able to deal with adversarial online arrival of agents. In the online setting, no mechanism will be better than a $1/2$-approximation. How close can one get to this?

Our results concerning combinatorial double auctions hold for unit-supply sellers paired with buyers with fractionally subadditive valuation functions and in settings when all agents have additive valuation functions. Obtaining results with the same approximation guarantee, budget-balance and incentive compatibility constraints for buyers and sellers with XOS valuation functions would be a very desirable result. \citet{10.1145/3381523} already developed a mechanism which is Bayesian incentive-compatible when combining additive sellers and buyers with XOS valuations. But still, the question remains open if there is a mechanism which is DSIC.

One could also try to further improve beyond approximation factors of $1/2$ by considering agents in a suitable order. It is known that for bilateral trade instances we can obtain an approximation guarantee of $1-1/\e$ and even better \citep{DBLP:journals/corr/BlumrosenD16}. As a natural challenge, is a generalization of this guarantee to matroid and combinatorial double auctions possible? To this end, a useful approach could be to extend the posted pricing techniques from \citet{DBLP:conf/soda/EhsaniHKS18} or \citet{DBLP:conf/sigecom/CorreaFHOV17}, which assume that agents arrive in random order, to two-sided markets in the spirit of our extension of prophet inequalities to two-sided markets.

\subsection*{Acknowledgments}

We thank the anonymous reviewers of EC 2021 for helpful comments on improving the presentation of the paper.
	\bibliography{references,dblp}
	\newpage
	\appendix
	\section{Appendix: Bilateral Trade via balanced prices}
\label{appendix:bilateral_trade}

In order to give a better understanding on how to apply prophet inequality techniques in two-sided markets, we prove the approximation guarantee for bilateral trade instances by the use of balances prices.

In bilateral trades, there is one seller, initially equipped with one item and one buyer. Let $v_s$ denote the seller's value and $v_b$ denote the buyer's value for the item. Both are drawn independently from some (not necessarily identical) probability distributions.

Our mechanism works as follows: Fix price $p = \frac{1}{2} \cdot \Ex[]{\max \{ v_s, v_b \}}$ and trade the item if and only if $v_b \geq p \geq v_s$. We interpret this mechanism as a sequential posted-prices mechanism with price $p$ as follows: First, ask the seller $s$ if she would like to keep or try selling the item for price $p$. Afterwards, buyer $b$ may buy the item for price $p$ if the seller herself wanted to sell the item. Our mechanism is DSIC, IR and SBB by design. Concerning the approximation guarantee, we can state the following proposition.

\begin{proposition} \label{Proposition:bilateral_trade}
	The bilateral trade mechanism with $p = \frac{1}{2} \cdot \Ex[]{\max \{ v_s, v_b \}}$ is a $\frac{1}{2}$-approximation to the optimal social welfare.
\end{proposition}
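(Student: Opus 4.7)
My approach is to reduce the analysis to the standard single-item prophet inequality with two agents, leveraging the observation that the two-sided mechanism pointwise dominates its one-sided analog. First I would note that in bilateral trade the item must always end up in someone's hands—either the seller keeps it or the buyer receives it—so the optimal social welfare in expectation is $\Ex[]{\max\{v_s,v_b\}} = 2p$ by our choice of $p$.

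Next, I would carry out a simple case analysis of the mechanism's welfare $\ALG$ as a function of the realization $(v_s,v_b)$. If $v_s > p$, the seller refuses to sell and $\ALG = v_s$; if $v_s \le p \le v_b$, a trade occurs and $\ALG = v_b$; if $v_s \le p$ and $v_b < p$, the seller keeps the item and $\ALG = v_s$. Now consider the hypothetical one-sided sequential posted-price mechanism with threshold $p$ that offers the item to ``buyer'' $s$ first and then to buyer $b$, discarding the item if neither accepts, and denote its welfare by $\ALG'$. In the first two cases above $\ALG = \ALG'$, since the same agent is the one exceeding the threshold first; in the third case $\ALG = v_s \ge 0 = \ALG'$. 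Hence $\ALG \ge \ALG'$ pointwise, so $\Ex[]{\ALG} \ge \Ex[]{\ALG'}$.

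Finally I would invoke the standard balanced-price prophet inequality for one item and $n=2$ agents (as cited in the excerpt, \citet{10.1145/2213977.2213991, DBLP:conf/soda/FeldmanGL15}): with the balanced price $p = \tfrac{1}{2}\Ex[]{\max\{v_s,v_b\}}$, the sequential posted-price mechanism satisfies $\Ex[]{\ALG'} \ge \tfrac{1}{2}\Ex[]{\max\{v_s,v_b\}}$. Chaining the two inequalities yields the desired $\tfrac{1}{2}$-approximation.

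There is essentially no hard step: the only subtlety is recognizing that moving from the one-sided to the two-sided setting can only increase welfare (since an unsold item retains value $v_s \ge 0$ to the seller, rather than being discarded). If a self-contained argument were desired instead of citing the prophet inequality, the surplus-plus-base-value template from the introduction would give it directly: write $\Ex[]{\ALG'} \ge p\cdot\Pr[]{\max\{v_s,v_b\} \ge p} + \Pr[]{v_s < p}\,\Ex[]{(v_b - p)^+}$ and combine with $\Ex[]{\max\{v_s,v_b\}} \le p + \Ex[]{(\max\{v_s,v_b\} - p)^+} \le p + \Ex[]{(v_s - p)^+} + \Ex[]{(v_b - p)^+}$ to recover the factor $\tfrac{1}{2}$.
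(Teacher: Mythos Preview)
Your proof is correct. The pointwise domination argument---that the two-sided mechanism's welfare is at least the one-sided posted-price welfare because an unsold item still contributes $v_s\ge 0$---is clean and valid, and invoking the $n=2$ balanced-price prophet inequality then finishes the job.

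The paper's proof in Appendix~\ref{appendix:bilateral_trade} takes a different, self-contained route: rather than reducing to the one-sided result as a black box, it carries out the base-value-plus-surplus decomposition directly for the two-sided mechanism. It bounds the base value by $p\cdot\Pr[]{\max\{v_s,v_b\}\ge p}$, bounds the seller's surplus by $\Ex[]{(v_s-p)^+}$ and the buyer's by $\Ex[]{(v_b-p)^+}\cdot\Pr[]{v_s<p}$, weakens both surplus terms by multiplying with $\Pr[]{v_s<p,\,v_b<p}$, and then uses $(v_s-p)^+ + (v_b-p)^+ \ge \max\{v_s,v_b\}-p$ so that base value and surplus together telescope to exactly $p$. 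Your reduction is arguably more conceptual and matches the paper's own introductory narrative; the paper's written proof is more explicit and serves as a warm-up for the more involved decompositions later in the paper. One small remark on your parenthetical self-contained sketch: as written it drops the seller's surplus term $\Ex[]{(v_s-p)^+}$, which you would need to retain (and then weaken both surplus terms by the probability that neither agent exceeds $p$) for the two bounds to combine to~$p$.
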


We give a proof applying the ideas from prophet inequalities.

\begin{proof}
	We distinguish several cases: the mechanism extracts social welfare if the item is either allocated as $v_s > p$ (the seller initially keeps the item), $v_b \geq p \geq v_s$ (a trade occurs), or  $v_b < p$ and $v_s < p$ (both agents' values do not exceed the price, so the item remains at the seller). Observe that the social welfare achieved by the first two cases is clearly a lower bound on the overall social welfare achieved by the mechanism. In other words, we only consider contributions to social welfare if at least one of the agents exceeds price $p$.	
	
	We begin by splitting the social welfare achieved by the mechanism in base value and surplus. \\
	
	For the base value, observe that if there exists $i \in \{ b,s \}$ with $v_i \geq p$, we get a contribution to social welfare of $p$ (actually, we get $p + (v_i -p)$, but the second summand is considered in the surplus). Hence, 
	\begin{align*}
		\Ex[\mathbf{v}]{\text{Base Value}(\mathbf{v})} = \Pr[\mathbf{v}]{\text{There exists } i \in \{ b,s \} \text{ with } v_i \geq p} \cdot p \enspace.
	\end{align*}
	
	For the surplus, we first argue about the contribution of the seller, afterwards about the buyer.
	Note that the seller may keep the item initially if $v_s \geq p$. As a consequence, we can extract a surplus of $v_s - p$ if this is non-negative. In other words, we get $\left(v_s -p\right)^+$ as a surplus from seller $s$.
	Buyer $b$ can buy the item if the seller did initially agree selling, i.e. $v_s < p$ and if her value $v_b \geq p$ exceeds the price. Hence, we get $\left(v_b -p\right)^+ \mathds{1}_{v_s < p}$ as a surplus. Observe that $v_s$ and $v_b$ are independent and hence, taking the expectation over $\mathbf{v}$, we can bound the expected surplus of buyer $b$ via \[  \Ex[\mathbf{v}]{\textnormal{surplus}_b (\mathbf{v}) } \geq \Ex[\mathbf{v}]{ \left(v_b -p\right)^+ \mathds{1}_{v_s < p} } =  \Ex[\mathbf{v}]{ \left(v_b -p\right)^+} \cdot \Pr[]{v_s < p}  \enspace. \]
	
	Now, observe that $1 \geq \Pr[]{v_s < p} \geq \Pr[]{v_s < p, v_b < p}$, which allows to bound the sum of the seller's and buyer's surplus as 
	\begin{align*}
		\Ex[\mathbf{v}]{\text{Surplus}(\mathbf{v})} & = \Ex[\mathbf{v}]{\textnormal{surplus}_s (\mathbf{v}) } + \Ex[\mathbf{v}]{\textnormal{surplus}_b (\mathbf{v}) } \\ & \geq \left( \Ex[\mathbf{v}]{ \left(v_s -p\right)^+} + \Ex[\mathbf{v}]{ \left(v_b -p\right)^+} \right) \cdot \Pr[]{v_s < p, v_b < p} \enspace.
	\end{align*}
	
	Now, use that $\left(v_s -p\right)^+  + \left(v_b -p\right)^+  \geq \max_{i \in \{s,b\}} \left(v_i -p\right)^+ \geq \max \{ v_s,v_b \} -p$. Further, by our choice of $p$, we get that \[ \Ex[\mathbf{v}]{\text{Surplus}(\mathbf{v})} \geq \Ex[\mathbf{v}]{\max \{ v_s,v_b \} -p} \cdot \Pr[]{v_s < p, v_b < p} = p \cdot \Pr[]{v_s < p, v_b < p}  \enspace.  \] 
	
	Combining base value and surplus, we get
	\begin{align*}
		& \Ex[\mathbf{v}]{\text{Base Value}(\mathbf{v})} + \Ex[\mathbf{v}]{\text{Surplus}(\mathbf{v})} \\ & \geq p \cdot \left(  \Pr[\mathbf{v}]{\text{There exists } i \in \{ b,s \} \text{ with } v_i \geq p}  + \Pr[]{v_s < p, v_b < p}  \right) \\ &  = p \cdot 1 \\&= \frac{1}{2} \cdot \Ex[]{\max \{ v_s, v_b \}} \enspace.
	\end{align*}
\end{proof}

	\newpage
	\section{Appendix: Matroid Double Auctions and strong budget-balance}
\label{appendix:matroids_sbb}

In this section, we give a complete proof of Theorem \ref{Theorem:Matroid_sbb}. We split the proof into the two following lemmas.

\begin{lemma} \label{Lemma:Matroid_DISC_IR_SBB}
	Mechanism \ref{Matroid_mechanism_sbb} for matroid double auctions satisfies DSBB. Further, it is DSIC and IR for all buyers and sellers.
\end{lemma}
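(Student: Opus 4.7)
The plan has three components: DSBB, IR together with DSIC for buyers, and DSIC for sellers (the main obstacle).

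\emph{DSBB and IR.} Direct-trade strong budget balance is immediate from the algorithm: money changes hands only in the ``$v_j \le p$ and $v_i > p$'' branch, where buyer $i$ pays exactly $p = p_{i,j}(A_B,r)$ to seller $j$ and item $j$ is transferred, after which both are removed from their pending sets; hence each item participates in at most one bilateral trade with matching payments. IR for buyers holds because a buyer accepts only when $v_i > p$, yielding utility $v_i - p > 0$ (otherwise $0$). For sellers, the three possible outcomes (kept in $A_S$, traded at $p \ge v_j$, or returned at the end via $\Msell$) each yield utility at least $v_j(I_j)$.

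\emph{DSIC for buyers.} Each buyer is asked at most once, and the price $p_{i,j}(A_B,r) = \tfrac{1}{3}(p_i(A_B,r) + p_j)$ she then sees is determined by the prior distributions and by the state $(A_B, r, \Msell, \Mbuy)$ at her iteration. This state depends only on other agents' reports and on earlier iteration outcomes; crucially, $v_i$ is not consulted before $i$ is asked, since buyers are only removed from $\Mbuy$ via an infeasibility test (which depends on the matroid and $A_B$, not on $v_i$) or via the ask itself. Hence $i$ effectively faces a single take-it-or-leave-it offer and truthful reporting is utility-maximizing.

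\emph{DSIC for sellers.} I would argue this in two steps. First, I would establish that the sequence of prices offered to a fixed seller $j$ across the iterations in which she is asked is non-increasing. Since $p_{j'} = \Ex[\tilde v_{j'} \sim \mathcal{D}_{j'}]{\tilde v_{j'}}$ is a state-independent constant and ties are broken consistently, sellers are processed in the fixed ascending order of $p_{j'}$: once $j$ becomes the minimum-threshold element of $\Msell$, she is the chosen seller in every subsequent iteration until she is removed. Consider two such iterations $t_1 < t_2$ in which $j$ is asked. Every intermediate iteration either removes an infeasible buyer without asking $j$, or asks $j$ in the only branch that preserves $j \in \Msell$, namely $v_j \le p$ together with $v_i \le p$, which merely discards a buyer from $\Mbuy$. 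In either case $A_B$ and $r$ are unchanged and $\Mbuy$ only shrinks; since $p_i(A_B, r)$ depends only on $A_B$ and $r$, the quantity $\max_{i \in \Mbuy} p_i(A_B, r)$ can only decrease, which combined with the constant $p_j$ yields the desired monotonicity. The delicate check here is that $j$'s own reported value does not perturb the evolution on $[t_1, t_2]$; this holds because her report is consulted only inside her own ask-iterations, and in the branches that keep her pending its sole effect is ``do not keep''. Second, given monotonicity, the sequence $p_1 \ge p_2 \ge \cdots$ that $j$ would see (together with the buyers' accept/reject behaviour at each price) is independent of her own report until she exits, so she faces a monotone stopping problem. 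Truthful reporting implements the threshold policy ``keep at the first $t^\ast$ with $v_j > p_{t^\ast}$''; any over-report only causes her to keep earlier, at the same utility $v_j$, while possibly forgoing a trade at some $p_t > v_j$, and any under-report only delays the keep, exposing her either to a trade at some $p_t < v_j$ (strict loss) or to the same terminal utility $v_j$. Hence truth-telling weakly dominates, completing DSIC for sellers.
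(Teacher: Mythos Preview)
Your proposal is correct and follows essentially the same approach as the paper: DSBB and IR are argued directly from the mechanism's structure, DSIC for buyers follows from a single take-it-or-leave-it offer independent of the buyer's report, and DSIC for sellers rests on the key observation that $A_B$ and $r$ remain frozen while a fixed seller $j$ is repeatedly matched, so the sequence of prices she sees is non-increasing. Your write-up is in fact more careful than the paper's own proof---you explicitly verify that the seller's report does not perturb the state between her ask-iterations and you spell out the over-/under-report case analysis as a monotone stopping problem, whereas the paper simply asserts that selling early dominates and that misreporting cannot help.
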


\begin{proof}
	We first argue that our mechanism is DSBB. Afterwards, concerning DSIC and IR, we consider buyers and sellers separately.
	\begin{itemize}
		\item \textit{DSBB}: By construction, the mechanism consists of bilateral trades where an item is moved from one seller to one buyer and in exchange, money is transfered from this buyer to the corresponding seller. Any time an item is traded between a buyer $i$ and a seller $j$, we ensure that this trade happens for some fixed price $p_{i,j}$. 
		\item \textit{IR - buyers}: Any buyer has the possibility to reject buying an item if her value does not exceed her price. Hence, it is not harmful to participate in the mechanism.
		\item \textit{IR - sellers}: Any seller holding an item is asked if she wants to keep her item if we give her an amount of $p_{i,j}$ for some $i$ in exchange. She could keep her item, so also for sellers, participating is not harmful.
		\item \textit{DSIC - buyers}: Any buyer is asked at most once in our mechanism if she wants to buy an item for some price which only depends on her probability distribution, but not on her private realization. She can either accept the price and buy an item or reject it. In any case, truth-telling is a dominant strategy for any buyer in order to maximize utility.
		\item \textit{DSIC - sellers}: Fix seller $j$. By construction, the prices which we offer to seller $j$ are only non-increasing in the ongoing process. To see this, note that $A_B$ and $r$ do not change as long as we consider seller $j$ for a trade. Therefore, the thresholds $p_i(A_B,r)$ are non-increasing as we ask (possibly) more and more buyers to trade with seller $j$. Hence, as a seller, you want to sell your item as early as possible (if you want to sell it at all). Therefore, reporting a lower valuation might end in a trade at some price lower than your actual value. On the other hand, reporting a higher valuation may block a trade which would be beneficial for the seller. Overall, misreporting does not increase the seller's utility compared to truth-telling. 
	\end{itemize}
\end{proof}

\begin{lemma}\label{Lemma:Matroid_Approximation_SBB}
	Mechanism \ref{Matroid_mechanism_sbb} for matroid double auctions is a $\frac{1}{3}$-approximation of the optimal social welfare.
\end{lemma}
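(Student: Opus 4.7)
My plan is to lift the four-step sketch for the full-information case (splitting, base value, buyer surplus, seller surplus) to the Bayesian setting. The decomposition $\mathbf{v}(A) = \sum_{i \in A} P_i + \sum_{i \in A}(v_i - P_i)$ and the relaxation $\Ex[]{\mathbf{v}(\OPT_B(\mathbf{v}))} + \sum_{j \in S}\Ex[]{v_j} \geq \Ex[]{\mathbf{v}(\OPT(\mathbf{v}))}$ still play the same roles. What changes is that every pointwise bound from the sketch must now be verified in expectation, since the thresholds $p_i(A_B, r) = \Ex[\widetilde{\mathbf{v}} \sim \mathcal{D}]{p_i(A_B, r, \widetilde{\mathbf{v}})}$ and $p_j = \Ex[\widetilde{v}_j \sim \mathcal{D}_j]{\widetilde{v}_j}$ are Bayesian averages rather than realized drops.

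For the base value I would telescope iteration by iteration. Each iteration that irrevocably allocates an item is associated with a pair $(i,j)$ and contributes $\tfrac13 p_i(A_B, r) + \tfrac13 p_j$ to $\sum_{i \in A}P_i$. The seller-specific halves sum deterministically to $\tfrac13\sum_{j \in S \setminus \Msell} \Ex[\widetilde{v}_j]{\widetilde{v}_j}$. For the buyer-specific halves, the pointwise matroid exchange inequality
\[
\widetilde{\mathbf{v}}(\OPT_B(\widetilde{\mathbf{v}}|A_B,r)) - \widetilde{\mathbf{v}}(\OPT_B(\widetilde{\mathbf{v}}|A_B\cup\{i\},r)) \geq \widetilde{\mathbf{v}}(\OPT_B(\widetilde{\mathbf{v}}|A_B,r)) - \widetilde{\mathbf{v}}(\OPT_B(\widetilde{\mathbf{v}}|A_B,r-1))
\]
holds inside the expectation defining $p_i(A_B, r)$, so the telescoping argument goes through verbatim once the expectation over $\widetilde{\mathbf{v}}$ is taken; this yields the Bayesian analogue of \eqref{eq:basevalue}. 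For the buyer surplus I would combine Lemma~\ref{lemma:increasing_prices_sbb_buyers} (prices faced by a fixed buyer are monotone along the execution) with \citet[Proposition~2]{10.1145/2213977.2213991} to upper bound $\sum_{i \in \OPT_B(\mathbf{v}|A_B, r)} \min_{j \in \Msell} p_{i,j}(A_B, r)$, so that the buyer-surplus argument of the sketch carries over pointwise and hence in expectation.

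The main obstacle is the Bayesian analogue of \eqref{eq:sellersurplus}. In full information a seller remaining in $\Msell$ contributes her full $v_j$, and $\sum_{j \in \Msell} v_j \geq 0$ trivially yields $\tfrac23 \sum_{j \in \Msell} v_j - \tfrac13 \mathbf{v}(\OPT_B(\mathbf{v}|A_B,r))$; under uncertainty, however, the event $\{j \in \Msell\}$ is correlated with $v_j$, so $\Ex[]{v_j \mathds{1}[j \in \Msell]}$ does not factor. My plan is a coupling argument. Because the algorithm orders sellers by the deterministic constants $p_{j'} = \Ex[\widetilde{v}_{j'}]{\widetilde{v}_{j'}}$, the execution up to the first iteration at which seller $j$ is considered is a function of $\mathbf{v}_{-j}$ alone; in particular her first candidate buyer $i_j$ and the price $p_{i_j, j}$ offered to her are independent of $v_j$. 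Introducing an independent copy $\widetilde{v}_j \sim \mathcal{D}_j$, I would exchange $v_j$ with $\widetilde{v}_j$ in this prefix and use the single-variable bound $\Ex[\widetilde{v}_j]{(\widetilde{v}_j - p)^+} \geq \Ex[\widetilde{v}_j]{\widetilde{v}_j} - p$ together with the identity $3 p_{i_j, j} = p_{i_j}(A_B, r) + p_j$ to lower bound the total expected seller surplus by $\tfrac23 \Ex[]{\sum_{j \in \Msell} v_j} - \tfrac13 \Ex[]{\mathbf{v}(\OPT_B(\mathbf{v}|A_B, r))}$. Summing the expected base value and the two surplus bounds and invoking the relaxation then closes the argument exactly as in the full-information assembly.
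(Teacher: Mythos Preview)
Your overall architecture matches the paper's proof: base value via telescoping (using Lemma~\ref{lemma:bound_pricing_scenarios_rank} inside the expectation), buyer surplus via Lemma~\ref{lemma:increasing_prices_sbb_buyers} together with \citet[Proposition~2]{10.1145/2213977.2213991}, and seller surplus via a coupling that exploits the independence of the first-offer data from $v_j$. That part is fine.

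The gap is in the seller-surplus step. Two ingredients are missing from your sketch, and without them the claimed bound $\tfrac23\,\Ex[]{\sum_{j\in\Msell}v_j}-\tfrac13\,\Ex[]{\mathbf{v}(\OPT_B(\mathbf{v}\mid A_B,r))}$ does not follow from the coupling and the identity $3p_{i_j,j}=p_{i_j}(A_B,r)+p_j$ alone. First, the event $\{j\in\Msell\}$ depends on $v_j$, so you cannot simply ``exchange $v_j$ with $\widetilde v_j$ in the prefix'' and still land on a term indexed by $\Msell$. The paper handles this by first invoking monotonicity of $p_{i_j,j}$ in the state (Lemma~\ref{lemma:increasing_prices_sbb_sellers}) to replace the first-offer state by the \emph{final} state under the hallucinated profile $(v_j',\mathbf{v}_{-j})$, then multiplying by $\mathds{1}_{j\in\Msell((v_j',\mathbf{v}_{-j}))}$, and only then swapping $v_j\leftrightarrow v_j'$; after the swap the indicator reads $\mathds{1}_{j\in\Msell(\mathbf{v})}$ as desired. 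Second, once you reach $\Ex[]{\sum_{j\in\Msell(\mathbf{v})}p_{i_j,j}(A_B(\mathbf{v}),r(\mathbf{v}))}$, you still need to control $\sum_{j\in\Msell}p_{i_j}(A_B,r)$. This is where a structural property of the mechanism is essential: because sellers are processed in increasing order of the deterministic $p_{j'}$ and a seller only stays in $\Msell$ across iterations when she has already agreed to sell, \emph{at most one} seller in the final $\Msell$ has ever been offered a trade (for all others $i_j=\perp$ and $p_{i_j,j}=0$). For that single seller $j^\ast$ one then uses $p_{i_{j^\ast}}(A_B,r,\widetilde{\mathbf{v}})\le\widetilde{\mathbf{v}}(\OPT_B(\widetilde{\mathbf{v}}\mid A_B,r))$. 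Without this observation the sum of buyer thresholds over $\Msell$ is not bounded by a single $\Ex[]{\widetilde{\mathbf{v}}(\OPT_B(\widetilde{\mathbf{v}}\mid A_B,r))}$, and your target inequality does not go through.
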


\begin{proof}
	We start by a quick reformulation of the prices. Assume, we introduced a counter $t$ starting at zero which increases by $1$ in every iteration of the while-loop as soon as a buyer or a seller accepts a price. Every time the counter increases, one item is allocated irrevocably: Either the sellers decides to keep the item or a trade occurs and the item is allocated to the current buyer. Denote by $A_{B,t}$ the state of set $A_B$ (similarly with $A_{S,t}$ for $A_S$ etc.) as the counter shows $t$ (i.e. $t$ items are already allocated) and as before, if $A_{B,t} \cup \{ i \} \in \mathcal{I}_B$ and $|A_{B,t} \cup \{i\} | \leq r_t$, let \[ p_{i,j}(A_{B,t}, r_t) = \frac{1}{3} \left(  \Ex[\mathbf{\widetilde{v}} \sim \mathcal{D}]{ p_{i}( A_{B,t}, r_t, \mathbf{\widetilde{v}}) } + \Ex[\widetilde{v}_j \sim \mathcal{D}_j]{\widetilde{v}_j} \right) \] be the price for buyer-seller-pair $(i,j)$. Otherwise, as already mentioned, we will not consider buyer $i$ and set any price $p_{i,j}$ for trades offered to buyer $i$ to infinity. Note that this formulation is equivalent to our initial definition of the prices but rather allows to refer to the $t$-th irrevocably allocated item. \\
	The set of agents who receive an item $A$ depends on $\mathbf{v}$, so we denote by $A(\mathbf{v})$ the set $A$ under valuation profile $\mathbf{v}$ (the same for $A_{B,t}(\mathbf{v})$ and $A_{S,t}(\mathbf{v})$ etc.). We want to compare $\Ex[\mathbf{v}]{\mathbf{v}(A(\mathbf{v}))}$ to $\Ex[\mathbf{v}]{\mathbf{v}(\OPT(\mathbf{v}))}$. To this end, we split the welfare of our algorithm into two parts, the base value and the surplus, and bound each quantity separately. (When thinking about one-sided markets, this corresponds to revenue and utility of buyers.) The base value is thereby defined as follows: let agent $i$ receive an item in our mechanism, i.e. $i \in A$. Any buyer who gets an item has paid some price for the item. Any seller who decided to keep her item was asked to keep it for some specific price. The part of agent $i$'s value which is below this price is denoted the base value. The surplus is the part of agent $i$'s value above this threshold if it exists, otherwise it is zero. There might be sellers who are left unconsidered in our mechanism, i.e. we did never ask them if they would like to participate in a trade. These sellers keep their item without any contribution to the base value in our calculations. All their value is considered in the surplus. \\
	
	\textbf{Base Value}: As said, all buyers and sellers who are irrevocably allocated an item (i.e. which are in $A$ before adding the remaining sellers) have a value which exceeds some price. For any agent $i$, denote this price by $P_i$. Further, every time the counter $t$ increases, we are allocating an item irrevocably in our mechanism. \\
	
	As a first step, we need to argue about the two different scenarios which can occur in our mechanism as an item is allocated after offering a trade to buyer $i$ and seller $j$ with counter $t$. On the one hand, a trade may occur and buyer $i$ is allocated seller $j$'s item. In this scenario, the prices in the next iteration(s) with counter $t+1$ are computed with respect to $A_{B,t+1} = A_{B,t} \cup \{i\}$ and $r_{t+1} = r_{t}$. In addition, seller $j$ is not available for a trade anymore. On the other hand, seller $j$ may keep the item, so we compute prices at counter $t+1$ with respect to $A_{B,t+1} = A_{B,t}$ and $r_{t+1} = r_{t} - 1$. Note that our prices are adapted to mirror the first scenario. Taking the expectation over the inequality from Lemma \ref{lemma:bound_pricing_scenarios_rank}, we see that the impact of the second scenario (i.e. a seller keeping the item) can be bounded by the loss of the first one concerning the optimal social welfare. \\
	
	Fixing a valuation profile $\mathbf{v}$ and summing over all agents in $A(\mathbf{v})$ in the order that they were added to $A$ is equivalent to summing over all steps in which we increased the counter $t$. Denote by $i_t$ and $j_t$ the buyer and seller considered in this particular time step.
	
	This allows to compute the following by a telescopic sum argument:
	
	\begin{align*}
	\sum_{i \in A_B(\mathbf{v}) \cup A_S(\mathbf{v})} P_i & = \sum_{t} \frac{1}{3} \left(  \Ex[\mathbf{\widetilde{v}} \sim \mathcal{D}]{ p_{i_t}( A_{B,t}, r_t, \mathbf{\widetilde{v}}) } + \Ex[\widetilde{v}_{j_t} \sim \mathcal{D}_{j_t}]{\widetilde{v}_{j_t}} \right) \\ & = \frac{1}{3} \sum_{t} \left( \Ex[\mathbf{\widetilde{v}}]{ \mathbf{\widetilde{v}} \left( \OPT_B(\mathbf{\widetilde{v}}| A_{B,t}, r_t)  \right) - \mathbf{\widetilde{v}} \left( \OPT_B(\mathbf{\widetilde{v}} |  A_{B,t} \cup \{ i_t \}, r_t ) \right) } + \Ex[\widetilde{v}_{j_t} \sim \mathcal{D}_{j_t}]{\widetilde{v}_{j_t}} \right) \\ & \stackrel{(\star)}{\geq}  \frac{1}{3} \left( \Ex[\mathbf{\widetilde{v}}]{ \mathbf{\widetilde{v}} \left( \OPT_B(\mathbf{\widetilde{v}} \right)} - \Ex[\mathbf{\widetilde{v}}]{ \mathbf{\widetilde{v}} \left( \OPT_B(\mathbf{\widetilde{v}}| A_B(\mathbf{v}), r(\mathbf{v}) ) \right) } \right) + \frac{1}{3}\sum_{t} \Ex[\widetilde{v}_{j_t} \sim \mathcal{D}_{j_t}]{\widetilde{v}_{j_t}}
	\end{align*}
	
	To see why the last inequality $(\star)$ holds, we use Lemma \ref{lemma:bound_pricing_scenarios_rank}. Consider the step with counter $t$. If buyer $i_t$ gets the item, we argued that $A_{B, t+1} = A_{B,t} \cup \{ i_t\}$ and hence, the sum telescopes. On the other hand, if seller $j_t$ decided to keep the item, we note that by Lemma \ref{lemma:bound_pricing_scenarios_rank}, 
	\begin{align*}
	\Ex[\mathbf{\widetilde{v}}]{ \mathbf{\widetilde{v}} \left( \OPT_B(\mathbf{\widetilde{v}}| A_{B,t}, r_t)  \right) - \mathbf{\widetilde{v}} \left( \OPT_B(\mathbf{\widetilde{v}} |  A_{B,t} \cup \{ i_t \}, r_t ) \right) } \\ \geq \Ex[\mathbf{\widetilde{v}}]{ \mathbf{\widetilde{v}} \left( \OPT_B(\mathbf{\widetilde{v}}| A_{B,t}, r_t)  \right) - \mathbf{\widetilde{v}} \left( \OPT_B(\mathbf{\widetilde{v}} |  A_{B,t}, r_t -1 ) \right) }
	\end{align*}and again, the sum telescopes since the prices in the next step are computed with respect to $r_t -1$. \\
	
	Further, denote by $\Msell (\mathbf{v})$ the set $\Msell$ after running our mechanism with valuation profile $\mathbf{v}$. Note that any seller who is not in $\Msell (\mathbf{v})$ either participated in a trade or irrevocably kept the item during our mechanism. Therefore, \[ \sum_{t} \Ex[\widetilde{v}_{j_t} \sim \mathcal{D}_{j_t}]{\widetilde{v}_{j_t}} = \sum_{j \in S \setminus \Msell (\mathbf{v})} \Ex[\widetilde{v}_{j} \sim \mathcal{D}_{j}]{\widetilde{v}_{j}} \enspace. \] Taking the expectation over all valuation profiles $\mathbf{v}$, exploiting linearity of expectation and using that $\mathbf{\widetilde{v}} \sim \mathcal{D}$, we get:
	
	\begin{align*}
	\Ex[\mathbf{v}]{\sum_{i \in A_B(\mathbf{v}) \cup A_S(\mathbf{v})} P_i } \geq \frac{1}{3} \Ex[\mathbf{v}]{ \mathbf{v} \left( \OPT_B(\mathbf{v} ) \right)} - \frac{1}{3} \Ex[\mathbf{v}, \mathbf{\widetilde{v}}]{ \mathbf{\widetilde{v}} \left( \OPT_B(\mathbf{\widetilde{v}}| A_B(\mathbf{v}), r(\mathbf{v}) ) \right) } + \frac{1}{3} \Ex[\mathbf{v}, \widetilde{\mathbf{v}}]{ \sum_{j \in S \setminus \Msell (\mathbf{v})} \widetilde{v}_{j}}
	\end{align*} 
	
	\textbf{Surplus}: The part of the welfare which is not covered by the base value is captured in the surplus. In order to talk about the surplus of any agent who receives an item, we split the set of agents and examine buyers and sellers separately.\\
		
	\textit{Sellers}: Fix seller $j$. Note that any seller who exceeds a price which we offered keeps her item. By construction of our mechanism, seller $j$ is matched to some buyer(s) in the mechanism and asked if she would like to keep or try selling the item for price $p_{i,j}$. Let $i_j$ denote the first buyer to which $j$ is matched in the mechanism. This matching is independent of seller $j$'s actual valuation since it only depends on the prices for seller $j$ and buyer $i_j$. In the case that $i_j$ does not exist (i.e. seller $j$ was never offered a trade), we can simply set $i_j = \perp$ and $p_{i_j,j} = 0$ and apply the same argument. The last price offered to seller $j$ is $P_j$ (maybe 0 if seller $j$ was never offered a trade) and let the counter show $t$ at this point. \\ Note that the prices which we offered to seller $j$ cannot have increased in the process. Hence, the last price $P_j$ which we offered to seller $j$ is clearly upper bounded by the first price $p_{i_j,j}$ which we offered to $j$. Further, by Lemma \ref{lemma:increasing_prices_sbb_sellers}, the price for the trade between $j$ and $i_j$ is only non-decreasing compared to offering a trade between buyer $i_j$ and seller $j$ later in the process again. Therefore, we can bound the surplus of seller $j$ as follows. 
	
	\begin{align*}
	\left( v_j - P_j \right)^+ & \geq \left( v_j - p_{i_j,j}(A_{B,t}(\mathbf{v}),r_t) \right)^+ \geq \left( v_j -  p_{i_j,j}(A_B((v_j',\mathbf{v}_{-j})),r((v_j',\mathbf{v}_{-j}))) \right)^+ \\ & \geq \left( v_j - p_{i_j,j}(A_B((v_j',\mathbf{v}_{-j})),r((v_j',\mathbf{v}_{-j}))) \right)^+ \cdot \mathds{1}_{j \in \Msell(v_j', \mathbf{v}_{-j})} 
	\end{align*}
	
	Taking expectations on both sides and exploiting that $\mathbf{v}$ and $\mathbf{v}'$ are independent and identically distributed allows the following:
	
	\begin{align*}
	\Ex[\mathbf{v}]{  \left( v_j - P_j \right)^+} & \geq \Ex[\mathbf{v}, \mathbf{v}']{\left( v_j - p_{i_j,j}(A_B((v_j',\mathbf{v}_{-j})),r((v_j',\mathbf{v}_{-j}))) \right)^+ \cdot \mathds{1}_{j \in \Msell(v_j', \mathbf{v}_{-j})} } \\ & = \Ex[\mathbf{v}, \mathbf{v}']{\left( v_j' - p_{i_j,j}(A_B(\mathbf{v}),r(\mathbf{v})) \right)^+ \cdot \mathds{1}_{j \in \Msell(\mathbf{v})} }
	\end{align*}
	
	Next, we can sum over all sellers and use linearity of expectation to obtain the following:
	
	\begin{align*}
	\sum_{j \in S} \Ex[\mathbf{v}]{  \left( v_j - P_j \right)^+} & \geq  \Ex[\mathbf{v}, \mathbf{v}']{ \sum_{j \in S}  \left( v_j' - p_{i_j,j}(A_B(\mathbf{v}),r(\mathbf{v})) \right)^+ \cdot \mathds{1}_{j \in \Msell(\mathbf{v})} } \\ & = \Ex[\mathbf{v}, \mathbf{v}']{ \sum_{j \in \Msell(\mathbf{v}) }  \left( v_j' - p_{i_j,j}(A_B(\mathbf{v}),r(\mathbf{v})) \right)^+ } \\ & \geq \Ex[\mathbf{v}, \mathbf{v}']{ \sum_{j \in \Msell(\mathbf{v}) }  v_j' } -  \Ex[\mathbf{v}]{ \sum_{j \in \Msell(\mathbf{v}) }  p_{i_j,j}(A_B(\mathbf{v}),r(\mathbf{v}))  } 
	\end{align*}
	
	Let us pause for a moment and consider the sum over the prices. First of all, note that by construction of our mechanism, at most one seller $j^\ast \in \Msell(\mathbf{v})$ is offered (maybe multiple times) a trade at all. Therefore, any other seller $j$ satisfies that $i_j = \perp$ and hence for all sellers except $j^\ast$, we can set $p_{i_j,j} = 0$. \\ Having a look at the seller $j^\ast \in \Msell(\mathbf{v})$ who is offered a trade (if $j^\ast$ exists), the price for a trade between $j^\ast$ and $i_{j^\ast}$ was well-defined in the iteration that $j^\ast$ and $i_{j^\ast}$ were considered for a trade. Note that $A_B$ and $r$ did not change after this iteration anymore, so if $i_{j^\ast}$ could be feasibly added to $A_B$ at the step we offered a trade, she also can be feasibly added to $A_B$ after the mechanism. Therefore, combining the price given by \[ p_{i_{j^\ast},j^\ast}(A_B(\mathbf{v}), r(\mathbf{v})) = \frac{1}{3} \left(  \Ex[\mathbf{\widetilde{v}} \sim \mathcal{D}]{ p_{i_{j^\ast}}( A_B(\mathbf{v}), r(\mathbf{v}), \mathbf{\widetilde{v}}) } + \Ex[\widetilde{v}_{j^\ast} \sim \mathcal{D}_{j^\ast}]{\widetilde{v}_{j^\ast}} \right) \] with 
	\begin{align*}
	p_{i_{j^\ast}}( A_B(\mathbf{v}), r(\mathbf{v}), \mathbf{\widetilde{v}}) & = \mathbf{\widetilde{v}} \left( \OPT_B(\mathbf{\widetilde{v}} | A_B(\mathbf{v}), r(\mathbf{v})) \right) - \mathbf{\widetilde{v}} \left( \OPT_B(\mathbf{\widetilde{v}} | A_B(\mathbf{v}) \cup \{ i_{j^\ast} \}, r(\mathbf{v}) ) \right) \\ & \leq \mathbf{\widetilde{v}} \left( \OPT_B(\mathbf{\widetilde{v}} | A_B(\mathbf{v}), r(\mathbf{v})) \right)
	\end{align*}
	allows to bound the sum of prices as follows:
	
	\begin{align*}
	\Ex[\mathbf{v}]{ \sum_{j \in \Msell(\mathbf{v}) }  p_{i_j,j}(A_B(\mathbf{v}),r(\mathbf{v}))  } & \leq \frac{1}{3} \Ex[\mathbf{v}, \widetilde{\mathbf{v}}]{ \mathbf{\widetilde{v}} \left( \OPT_B(\mathbf{\widetilde{v}} | A_B(\mathbf{v}), r(\mathbf{v})) \right) } + \frac{1}{3} \Ex[\mathbf{v}]{  \Ex[\widetilde{v}_{j^\ast} \sim \mathcal{D}_{j^\ast}]{\widetilde{v}_{j^\ast}} }  \\ & \leq \frac{1}{3} \Ex[\mathbf{v}, \widetilde{\mathbf{v}}]{ \mathbf{\widetilde{v}} \left( \OPT_B(\mathbf{\widetilde{v}} | A_B(\mathbf{v}), r(\mathbf{v})) \right) } + \frac{1}{3} \Ex[\mathbf{v}, \widetilde{\mathbf{v}}]{ \sum_{j \in \Msell(\mathbf{v}) }  \widetilde{v}_{j} }
	\end{align*}
	
	Now, we use that $\mathbf{v}$, $\mathbf{v}'$ and $\widetilde{\mathbf{v}}$ are independent and identically distributed. Therefore, we can bound the surplus of all sellers by the following expression:
	
	\begin{equation}
	\begin{aligned} \label{equation:utility_sellers}
	\sum_{j \in S} \Ex[\mathbf{v}]{  \left( v_j - P_j \right)^+} & \geq \Ex[\mathbf{v}, \mathbf{v}']{ \sum_{j \in \Msell(\mathbf{v}) }  v_j' } -  \Ex[\mathbf{v}]{ \sum_{j \in \Msell(\mathbf{v}) }  p_{i_j,j}(A_B(\mathbf{v}),r(\mathbf{v}))  } \\ & \geq \frac{2}{3} \Ex[\mathbf{v}, \mathbf{v}']{ \sum_{j \in \Msell(\mathbf{v}) }  v_j' } - \frac{1}{3} \Ex[\mathbf{v}, \widetilde{\mathbf{v}}]{ \mathbf{\widetilde{v}} \left( \OPT_B(\mathbf{\widetilde{v}} | A_B(\mathbf{v}), r(\mathbf{v})) \right) }
	\end{aligned}
	\end{equation}
	
	\textit{Buyers}: First, observe that initially, all buyers can be feasibly added to $A_B$. During the mechanism, buyers may become infeasible at some point in time. Once a buyer cannot be feasibly added anymore, this buyer will remain infeasible for the remainder of the mechanism. On the other hand, if a buyer can be feasibly added at some point in time, she could also be feasibly added at any time before. During our mechanism, we offer trades to all buyers except of those who did become infeasible on the way. Any of the buyers to which we offer a trade for a finite price gets an item if her value exceeds the offered price. As a consequence, we are allowed to consider $\left( v_i - P_i\right)^+$ as the contribution to the surplus for all buyers. Define $P_i$ for buyer $i$ to be infinity if buyer $i$ was not offered a trade in our mechanism due to the fact that buyer $i$ became infeasible. In the same way, if $p_{i,j}(A_B{,r})$ is not well-defined for a buyer due to the fact that this buyer did become infeasible, we defined the price to be infinity. This directly implies a zero contribution to the surplus, so we do not need to focus on these buyers anymore in our considerations. Otherwise, as before, $P_i$ denotes the price which we offered to buyer $i$. \\ Note that by Lemma \ref{lemma:increasing_prices_sbb_buyers}, the prices which are propose to buyer $i$ are non-decreasing as the allocation process proceeds. As said, any buyer who is offered a trade and exceeds her price gets an item in our mechanism. Note that the price which we offered to buyer $i$ only depends on the sellers and all buyers which did arrive before $i$. In particular, being offered a trade and its price are independent of buyer $i$'s value. As a consequence, for all buyers which are offered trades, we are allowed to calculate the following, where $j_t$ denotes the seller which is matched to buyer $i$ in round $t$, i.e. in the round in which buyer $i$ receives an item (if she does). 
	
	\begin{align*}
	\left( v_i - P_i \right)^+ & = \left( v_i - p_{i,j_t}(A_{B,t}(\mathbf{v}),r_t) \right)^+  \geq \left( v_i - \min_{j \in \Msell((v_i',\mathbf{v}_{-i}))} p_{i,j}(A_B((v_i',\mathbf{v}_{-i})),r((v_i',\mathbf{v}_{-i}))) \right)^+ \\ & \geq \left( v_i - \min_{j \in \Msell((v_i',\mathbf{v}_{-i}))} p_{i,j}(A_B((v_i',\mathbf{v}_{-i})),r((v_i',\mathbf{v}_{-i}))) \right)^+ \cdot \mathds{1}_{i \in \OPT_B\left( (v_i, \mathbf{v}_{-i}') | A_B((v_i',\mathbf{v}_{-i})), r((v_i', \mathbf{v}_{-i})) \right)} 
	\end{align*}
	
	Note that if $ \Msell((v_i',\mathbf{v}_{-i}))$ is empty, then the minimum is taken over the empty set and we do not consider buyer $i$ anymore as in this case buyer $i$ cannot be feasibly added to $A_B$. Taking expectations on both sides and exploiting that $\mathbf{v}$ and $\mathbf{v}'$ are independent and identically distributed allows the following:
	
	\begin{align*}
	\Ex[\mathbf{v}]{  \left( v_i - P_i \right)^+} & \geq \mbox{\rm\bf E}_{\mathbf{v}, \mathbf{v}'}\left[ \left( v_i - \min_{j \in \Msell((v_i',\mathbf{v}_{-i}))} p_{i,j}(A_B((v_i',\mathbf{v}_{-i})),r((v_i',\mathbf{v}_{-i}))) \right)^+ \right. \\ & \textcolor{white}{te} \left. \textcolor{white}{.............} \cdot \mathds{1}_{i \in \OPT_B\left( (v_i, \mathbf{v}_{-i}') | A_B((v_i',\mathbf{v}_{-i})), r((v_i', \mathbf{v}_{-i})) \right)} \right] \\ &  =  \Ex[\mathbf{v}, \mathbf{v}']{\left( v_i' - \min_{j \in \Msell(\mathbf{v})} p_{i,j}(A_B(\mathbf{v}),r(\mathbf{v})) \right)^+ \cdot \mathds{1}_{i \in \OPT_B\left(  \mathbf{v}' | A_B(\mathbf{v}), r(\mathbf{v}) \right)} } 
	\end{align*}
	Now, taking the sum over all buyers, we get
	\begin{align*}
	& \textcolor{white}{.....} \sum_{i \in B} \Ex[\mathbf{v}]{  \left( v_i - P_i \right)^+} \\ &\geq \Ex[\mathbf{v}, \mathbf{v}']{ \sum_{i \in B} \left( v_i' - \min_{j \in \Msell(\mathbf{v})} p_{i,j}(A_B(\mathbf{v}),r(\mathbf{v})) \right)^+ \cdot \mathds{1}_{i \in \OPT_B\left(  \mathbf{v}' | A_B(\mathbf{v}), r(\mathbf{v}) \right)} }  \\ & = \Ex[\mathbf{v}, \mathbf{v}']{ \sum_{i \in \OPT_B\left(  \mathbf{v}' | A_B(\mathbf{v}), r(\mathbf{v}) \right)} \left( v_i' - \min_{j \in \Msell(\mathbf{v})} p_{i,j}(A_B(\mathbf{v}),r(\mathbf{v})) \right)^+ } \\ & \geq \Ex[\mathbf{v}, \mathbf{v}']{ \sum_{i \in \OPT_B\left(  \mathbf{v}' | A_B(\mathbf{v}), r(\mathbf{v}) \right)}  v_i'} - \Ex[\mathbf{v}, \mathbf{v}']{ \sum_{i \in \OPT_B\left(  \mathbf{v}' | A_B(\mathbf{v}), r(\mathbf{v}) \right)}  \min_{j \in \Msell(\mathbf{v})} p_{i,j}(A_B(\mathbf{v}),r(\mathbf{v}))} \\ & = \Ex[\mathbf{v}, \mathbf{v}']{ \mathbf{v}' \left( \OPT_B\left( \mathbf{v}' | A_B(\mathbf{v}), r(\mathbf{v})  \right) \right)} - \Ex[\mathbf{v}, \mathbf{v}']{ \sum_{i \in \OPT_B\left(  \mathbf{v}' | A_B(\mathbf{v}), r(\mathbf{v}) \right)}  \min_{j \in \Msell(\mathbf{v})} p_{i,j}(A_B(\mathbf{v}),r(\mathbf{v}))}
	\end{align*}
	
	Let us take a closer look at the sum over the prices. Using Lemma \ref{lemma:matroide_sbb_prices_upper_bound}, we can upper bound the prices as follows:
	
	\begin{align*}
	& \textcolor{white}{.....} \Ex[\mathbf{v}, \mathbf{v}']{ \sum_{i \in \OPT_B\left(  \mathbf{v}' | A_B(\mathbf{v}), r(\mathbf{v}) \right)}  \min_{j \in \Msell(\mathbf{v})} p_{i,j}(A_B(\mathbf{v}),r(\mathbf{v}))} \\ & =  \frac{1}{3}  \Ex[\mathbf{v}, \mathbf{v}']{ \sum_{i \in \OPT_B\left(  \mathbf{v}' | A_B(\mathbf{v}), r(\mathbf{v}) \right)}  \Ex[\mathbf{\widetilde{v}}]{ p_{i}( A_B(\mathbf{v}), r(\mathbf{v}), \mathbf{\widetilde{v}}) }   } +  \frac{1}{3} \Ex[\mathbf{v}]{\left| \Msell(\mathbf{v}) \right| \cdot \min_{j \in \Msell(\mathbf{v})} \Ex[\widetilde{\mathbf{v}}]{\widetilde{v}_{j} } } \\ & \leq \frac{1}{3} \Ex[\mathbf{v}, \mathbf{v}']{ \mathbf{v}' \left( \OPT_B\left( \mathbf{v}' | A_B(\mathbf{v}), r(\mathbf{v})  \right) \right)} + \frac{1}{3} \Ex[\mathbf{v}, \widetilde{\mathbf{v}}]{\sum_{j \in \Msell(\mathbf{v})} \widetilde{v}_{j}}
	\end{align*}
	
	Overall, the surplus of all buyers can be bounded as follows: 
	
	\begin{align} \label{equation:utility_buyers}
	\sum_{i \in B} \Ex[\mathbf{v}]{  \left( v_i - P_i \right)^+} & \geq \frac{2}{3} \Ex[\mathbf{v}, \mathbf{v}']{ \mathbf{v}' \left( \OPT_B\left( \mathbf{v}' | A_B(\mathbf{v}), r(\mathbf{v})  \right) \right)} - \frac{1}{3} \Ex[\mathbf{v}, \widetilde{\mathbf{v}}]{\sum_{j \in \Msell(\mathbf{v})} \widetilde{v}_{j}}
	\end{align}
	
	\textit{Combination}: Having discussed the surplus of buyers and sellers separately, we combine the two bounds in order to bound the total surplus of our mechanism by summing over all buyers and sellers. Therefore, we sum inequalities (\ref{equation:utility_sellers}) and (\ref{equation:utility_buyers}) and use that $\mathbf{v}$, $\mathbf{v}'$ and $\widetilde{\mathbf{v}}$ are independent and identically distributed.
	
	\begin{align*}
	\Ex[\mathbf{v}]{\sum_{i \in B \cup S} \left( v_i - P_i \right)^+} & \geq \frac{2}{3} \Ex[\mathbf{v}, \mathbf{v}']{ \sum_{j \in \Msell(\mathbf{v}) }  v_j' } - \frac{1}{3} \Ex[\mathbf{v}, \mathbf{v}']{ \mathbf{v}' \left( \OPT_B(\mathbf{v}' | A_B(\mathbf{v}), r(\mathbf{v})) \right) }  \\ & + \frac{2}{3} \Ex[\mathbf{v}, \mathbf{v}']{ \mathbf{v}' \left( \OPT_B\left( \mathbf{v}' | A_B(\mathbf{v}), r(\mathbf{v})  \right) \right)} - \frac{1}{3} \Ex[\mathbf{v}, \mathbf{v}']{\sum_{j \in \Msell(\mathbf{v})} v_{j}'} \\ & = \frac{1}{3} \Ex[\mathbf{v}, \mathbf{v}']{ \mathbf{v}' \left( \OPT_B\left( \mathbf{v}' | A_B(\mathbf{v}), r(\mathbf{v})  \right) \right)} + \frac{1}{3} \Ex[\mathbf{v}, \mathbf{v}']{\sum_{j \in \Msell(\mathbf{v})} v_{j}'}
	\end{align*}
	
	\textbf{Combining Base Value and Surplus}:
	
	Adding base value and surplus and again, using that $\mathbf{v}$, $\mathbf{v}'$ and $\widetilde{\mathbf{v}}$ are independent and identically distributed, we can lower bound the social welfare of our mechanism by
	
	\begin{align*}
	\Ex[\mathbf{v}]{\textnormal{Base Value}} + \Ex[\mathbf{v}]{\textnormal{Surplus}} & \geq  \frac{1}{3} \Ex[\mathbf{v}]{ \mathbf{v} \left( \OPT_B(\mathbf{v} ) \right)} - \frac{1}{3} \Ex[\mathbf{v}, \mathbf{\widetilde{v}}]{ \mathbf{\widetilde{v}} \left( \OPT_B(\mathbf{\widetilde{v}}| A_B(\mathbf{v}), r(\mathbf{v}) ) \right) } \\ & \textcolor{white}{te} + \frac{1}{3} \Ex[\mathbf{v}, \widetilde{\mathbf{v}}]{ \sum_{j \in S \setminus \Msell (\mathbf{v})} \widetilde{v}_{j}} + \frac{1}{3} \Ex[\mathbf{v}, \mathbf{v}']{ \mathbf{v}' \left( \OPT_B\left( \mathbf{v}' | A_B(\mathbf{v}), r(\mathbf{v})  \right) \right)} \\  & \textcolor{white}{te} + \frac{1}{3} \Ex[\mathbf{v}, \mathbf{v}']{\sum_{j \in \Msell(\mathbf{v})} v_{j}'} \\  & = \frac{1}{3} \Ex[\mathbf{v}]{ \mathbf{v} \left( \OPT_B(\mathbf{v} ) \right)} + \frac{1}{3} \Ex[\mathbf{v}]{ \sum_{j \in S } v_{j}}
	\end{align*} 
	We can conclude as $ \Ex[\mathbf{v}]{ \mathbf{v} \left( \OPT_B(\mathbf{v} ) \right)} + \Ex[\mathbf{v}]{ \sum_{j \in S } v_{j}} = \Ex[\mathbf{v}]{ \mathbf{v} \left( \OPT_B(\mathbf{v} ) \right) + \sum_{j \in S } v_{j}}$ and for each valuation profile $\mathbf{v}$, we have that $\mathbf{v} \left( \OPT_B(\mathbf{v} ) \right) + \sum_{j \in S } v_{j}$ is an upper bound on the optimal social welfare which can be achieved by allocating the items among all agents. 
	
\end{proof}

In order to conclude the proof of Theorem \ref{Theorem:Matroid_sbb}, we show the remaining lemmas. First, we aim for a bound of $p_{i}( A_B, r,\mathbf{v}) = \mathbf{v} \left( \OPT_B(\mathbf{v} | A_B, r) \right) - \mathbf{v} \left( \OPT_B(\mathbf{v} | A_B \cup \{ i \}, r ) \right)$ with respect to a change in $r$ instead of adding $i$ to the set $A_B$. 

\begin{lemma}\label{lemma:bound_pricing_scenarios_rank}
	Fix any buyer $i$ and a valuation profile $\mathbf{v}$. Let $A_B$ and $r$ be such that $A_B \cup \{i \} \in \mathcal{I}_B$ and $|A_B \cup \{i \}| \leq r$. Then \[  \mathbf{v} \left( \OPT_B(\mathbf{v} | A_B, r) \right) - \mathbf{v} \left( \OPT_B(\mathbf{v} | A_B \cup \{ i \}, r ) \right) \geq  \mathbf{v} \left( \OPT_B(\mathbf{v} | A_B, r) \right) - \mathbf{v} \left( \OPT_B(\mathbf{v} | A_B , r-1 ) \right) \enspace. \]
\end{lemma}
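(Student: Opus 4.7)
The plan is to cancel the common term $\mathbf{v}(\OPT_B(\mathbf{v} | A_B, r))$ from both sides of the inequality and reduce it to the equivalent statement
\[
\mathbf{v}(\OPT_B(\mathbf{v} | A_B \cup \{i\}, r)) \;\leq\; \mathbf{v}(\OPT_B(\mathbf{v} | A_B, r-1)).
\]
Intuitively, forcing the specific buyer $i$ into the committed set while retaining a total budget of $r$ is never better than just shrinking the budget by one slot without committing $i$: in the latter problem we still have the freedom to choose the best buyer to occupy the remaining slot, so the optimum can only increase (or stay equal).

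To formalize this I would pick any optimal witness $B^\star$ for the left-hand side, i.e.\ a set $B^\star \subseteq B \setminus (A_B \cup \{i\})$ with $B^\star \cup A_B \cup \{i\} \in \mathcal{I}_B$ and $|B^\star \cup A_B \cup \{i\}| \leq r$, so that $\sum_{j \in B^\star} v_j = \mathbf{v}(\OPT_B(\mathbf{v} | A_B \cup \{i\}, r))$. Then I would verify that the same set $B^\star$ is feasible for the optimization defining $\mathbf{v}(\OPT_B(\mathbf{v} | A_B, r-1))$ by three checks: (i) $B^\star \subseteq B \setminus A_B$, which is immediate since $B^\star \subseteq B \setminus (A_B \cup \{i\})$; (ii) $B^\star \cup A_B \in \mathcal{I}_B$, which follows from downward closure of the matroid applied to $B^\star \cup A_B \cup \{i\} \in \mathcal{I}_B$; and (iii) $|B^\star \cup A_B| \leq r-1$, which follows from $i \notin B^\star \cup A_B$ combined with $|B^\star \cup A_B \cup \{i\}| \leq r$.

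Since $B^\star$ is feasible for the $(A_B, r-1)$-problem and contributes value $\sum_{j \in B^\star} v_j$, it is dominated by the optimum of that problem, yielding the desired inequality. I do not expect any genuine obstacle here; the argument is essentially just matroid downward closure together with a counting step. The only point of care is the implicit assumption $i \notin A_B$ (needed in step (iii) to turn the cardinality bound into $|B^\star \cup A_B| \leq r-1$); this is harmless because in the mechanism any buyer $i$ considered in this threshold computation is taken from $\Mbuy$, which is disjoint from $A_B$ by construction.
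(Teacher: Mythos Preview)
Your proof is correct and follows exactly the same approach as the paper: cancel the common term, reduce to $\mathbf{v}(\OPT_B(\mathbf{v}\mid A_B\cup\{i\},r))\leq \mathbf{v}(\OPT_B(\mathbf{v}\mid A_B,r-1))$, and observe that every feasible set for the left-hand problem is feasible for the right-hand one. Your three checks (i)--(iii) are precisely the verification the paper compresses into the single sentence ``any possible choice of agents for $\OPT_B(\mathbf{v}\mid A_B\cup\{i\},r)$ is also a feasible choice for $\OPT_B(\mathbf{v}\mid A_B,r-1)$,'' and your remark that $i\notin A_B$ (guaranteed since $i\in\Mbuy$) is needed for the cardinality step is a worthwhile clarification.
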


\begin{proof}
	We argue that $\mathbf{v} \left( \OPT_B(\mathbf{v} | A_B \cup \{ i \}, r ) \right) \leq \mathbf{v} \left( \OPT_B(\mathbf{v} | A_B , r-1 ) \right)  $ which immediately proves the claim. Note that any possible choice of agents for $\OPT_B(\mathbf{v} | A_B \cup \{ i \}, r )$ is also a feasible choice for $\OPT_B(\mathbf{v} | A_B, r-1 )$ and hence, the claim follows.
\end{proof}

Second, we show that prices are only non-decreasing in the ongoing process.

\begin{lemma}\label{lemma:increasing_prices_sbb_sellers}
	Fix buyer $i$ and seller $j$. Let the price for trading between buyer $i$ and $j$ be $p_{i,j}(X, r)$ for some set of remaining sellers $\Msell$. Then we have \[ p_{i,j}(X,r) \leq p_{i,j}(X', r') \] for any superset of allocated agents $X' \supseteq X$ and $r' \leq r$.
\end{lemma}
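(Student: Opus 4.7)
The plan is to reduce the claim to pointwise monotonicity of $p_i(X,r,\widetilde{\mathbf{v}})$ for every realization $\widetilde{\mathbf{v}}$ and establish it by standard matroid arguments. Since $p_{i,j}(X,r) = \tfrac{1}{3}\bigl(p_i(X,r) + p_j\bigr)$ and $p_j = \Ex[\widetilde{v}_j \sim \mathcal{D}_j]{\widetilde{v}_j}$ depends on neither $X$ nor $r$, it suffices to show $p_i(X,r) \leq p_i(X',r')$, and by linearity of expectation it further suffices to show the pointwise inequality $p_i(X,r,\widetilde{\mathbf{v}}) \leq p_i(X',r',\widetilde{\mathbf{v}})$ for every $\widetilde{\mathbf{v}}$. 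The infinite case $p_i(X,r,\widetilde{\mathbf{v}}) = \infty$ is immediate from matroid downward closure together with $r' \leq r$: if $X \cup \{i\} \notin \mathcal{I}_B$ then $X' \cup \{i\} \notin \mathcal{I}_B$, and if $|X \cup \{i\}| > r$ then $|X' \cup \{i\}| > r'$, so $p_i(X',r',\widetilde{\mathbf{v}}) = \infty$ as well.

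For finite values, I decompose the transition $(X,r) \to (X',r')$ into atomic steps, each either appending a single element $k$ to $X$ or decrementing $r$ by one. Writing $g(Y,r) = \widetilde{\mathbf{v}}\bigl(\OPT_B(\widetilde{\mathbf{v}}|Y,r)\bigr)$, so that $p_i(Y,r,\widetilde{\mathbf{v}}) = g(Y,r) - g(Y \cup \{i\},r)$, the step $X \to X \cup \{k\}$ is equivalent to $g(X,r) + g(X \cup \{i,k\},r) \leq g(X \cup \{i\},r) + g(X \cup \{k\},r)$, which is the submodularity of the set function $Y \mapsto g(Y,r)$. This is a classical matroid fact, since $g(Y,r)$ is the maximum weight of an independent set of size at most $r - |Y|$ in the contraction matroid $\mathcal{M}_B / Y$; alternatively, it follows directly from applying the matroid augmentation axiom to the optimal augmentations of $X$ and of $X \cup \{i,k\}$, swapping elements to produce feasible augmentations of $X \cup \{i\}$ and $X \cup \{k\}$ whose combined weight is at least as large.

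For the step $r \to r-1$, the claim rearranges to $g(X \cup \{i\},r) - g(X \cup \{i\},r-1) \geq g(X,r) - g(X,r-1)$, that is, the marginal value of an extra slot is larger under a larger commitment. I expect this to be the technical core of the proof, and I would handle it by a direct exchange argument: in the nontrivial case $g(X,r) > g(X,r-1)$, the optimal augmentation $J^*$ of $X$ has size exactly $r - |X|$, and its minimum-weight element $e$ satisfies $g(X,r) - g(X,r-1) \leq v_e$. Combining $X \cup J^*$ (of size $r$) with $I^* \cup X \cup \{i\}$ (of size at most $r-1$, where $I^*$ is the optimal augmentation of $X \cup \{i\}$ for $r-1$) via the matroid augmentation axiom produces some $f \in J^* \setminus (I^* \cup \{i\})$ such that $I^* \cup \{f\}$ is a feasible augmentation of $X \cup \{i\}$ of size at most $r$; since $v_f \geq v_e$ by the minimality of $e$ in $J^*$, this forces $g(X \cup \{i\},r) \geq g(X \cup \{i\},r-1) + v_e \geq g(X \cup \{i\},r-1) + g(X,r) - g(X,r-1)$, as desired. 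Combining the atomic inequalities along the decomposition of $(X,r) \to (X',r')$ and taking expectations then yields $p_i(X,r) \leq p_i(X',r')$, proving the lemma in the same spirit as \citet[Lemma 3]{10.1145/2213977.2213991}.
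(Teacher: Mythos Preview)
Your proposal is correct and follows the same overall architecture as the paper: reduce to pointwise monotonicity of the buyer threshold $p_i(X,r,\widetilde{\mathbf{v}})$, dispose of the infinite case by downward closure, and separate the finite case into an ``$X$-step'' handled by submodularity of $Y \mapsto \widetilde{\mathbf{v}}(\OPT_B(\widetilde{\mathbf{v}}\mid Y,r))$ and an ``$r$-step''. Where you genuinely differ is in the $r$-step. The paper argues it in one shot $r \to r'$: it first shows that $\OPT_B(\widetilde{\mathbf{v}}\mid X,r')$ can be taken as the top-$m'$ prefix (by weight) of $\OPT_B(\widetilde{\mathbf{v}}\mid X,r)$, then runs two parallel greedy executions with and without $i$ to identify the unique element dropped by adding $i$, and finally uses a circuit argument to show the dropped element for $r'$ lies no later in the weight order than the one for $r$. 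Your route instead decrements $r$ one unit at a time and, for each unit, applies the augmentation axiom to $X\cup J^*$ (size $r$) and $X\cup\{i\}\cup I^*$ (size at most $r-1$) to extract an element $f\in J^*$ with $v_f \geq v_e$, yielding $g(X\cup\{i\},r)-g(X\cup\{i\},r-1) \geq v_e \geq g(X,r)-g(X,r-1)$. This exchange argument is shorter and more self-contained than the paper's greedy/circuit reasoning, at the cost of requiring the atomic decomposition (which is harmless here since, as you implicitly rely on, whenever both endpoints are finite every intermediate state $(X'',r'')$ with $X\subseteq X''\subseteq X'$ and $r'\leq r''\leq r$ also has $X''\cup\{i\}\in\mathcal{I}_B$ and $|X''\cup\{i\}|\leq r''$).
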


Before proving the lemma, note that this means that for a fixed buyer-seller-pair $(i,j)$, the prices which we consider in our mechanism are only non-decreasing as the process evolves.

\begin{proof}
	First, if a buyer is infeasible with respect to $X$ and $r$, she also is with respect to $X'$ and $r'$, trivially implying the claim. Also if she could be feasibly added with respect to $X$ and $r$, but not to $X'$ with $r'$, the claim is trivial.
	Therefore, it remains to consider the case where both sides of the inequality are finite. So let us consider $X, X'$ and $r, r'$ such that $i$ can feasibly be added and let $j \in \Msell$. 
	By definition, \[ p_{i,j}(X,r) = \frac{1}{3} \left(  \Ex[\mathbf{\widetilde{v}} \sim \mathcal{D}]{ p_{i}( X, r, \mathbf{\widetilde{v}}) } + \Ex[\widetilde{v}_j \sim \mathcal{D}_j]{\widetilde{v}_j} \right) \enspace. \] Note that $X$, $r$ and $i$ only occur in the first summand whereas $j$ only appears in the second one. Since the second summand is equal for both, $p_{i,j}(X,r)$ and $p_{i,j}(X',r')$, we can reduce the problem to showing that the inequality holds for the first summand. We show the inequality pointwise for any $\mathbf{v}$ and conclude by taking the expectation. Therefore, fix a valuation profile $\mathbf{v}$. We show that
	\begin{align*}
		p_{i}( X, r,\mathbf{v}) & = \mathbf{v} \left( \OPT_B(\mathbf{v} | X, r) \right) - \mathbf{v} \left( \OPT_B(\mathbf{v} | X \cup \{ i \}, r ) \right) \\ &  \stackrel{(1)}{\leq} \mathbf{v} \left( \OPT_B(\mathbf{v} | X, r') \right) - \mathbf{v} \left( \OPT_B(\mathbf{v} | X \cup \{ i \}, r' ) \right) \\ &  \stackrel{(2)}{\leq} \mathbf{v} \left( \OPT_B(\mathbf{v} | X', r') \right) - \mathbf{v} \left( \OPT_B(\mathbf{v} | X' \cup \{ i \}, r' ) \right) =  p_{i}( X', r',\mathbf{v})  \enspace.
	\end{align*}
	
	To show inequality $(1)$, we first use that the basis $\OPT_B(\mathbf{v} | X, r')$ can be chosen to be a subset of $\OPT_B(\mathbf{v} | X, r)$. To see this, denote by $\{ b_1,\dots, b_m \}$ the basis $ \OPT_B(\mathbf{v} | X, r)$ in decreasing order of weights. We show that there is an $m'$ such that $\{ b_1,\dots,b_{m'}\}$ is equal to $\OPT_B(\mathbf{v} | X, r')$, where $m'$ is chosen in a way that $|X \cup \{ b_1,\dots,b_{m'} \}| \leq r'$ and that $X \cup \{ b_1,\dots,b_{m'} \}$ has maximum size with respect to this property (i.e. either we have equality or $r'$ is larger than the cardinality of any independent set - in the latter case, we can just choose a basis without considering $r'$). \\ Assume there is a set $\{ b_1',\dots,b_{m'}' \}$ such that $\sum_{k = 1}^{m'} v_{b_k'} > \sum_{k = 1}^{m'} v_{b_k}$, so $\{ b_1,\dots, b_{m'} \}$ would not be a maximum weight basis with respect to $X$ and $r'$. We know that $\{ b_1',\dots,b_{m'}' \}$ also needs to be independent with respect to $X$ and $r$ and further $m' \leq m$. Therefore, there are $m-m'$ elements in $\{ b_1,\dots,b_{m} \}$ which we can add to $\{ b_1',\dots,b_{m'}' \}$ in order to get a basis in the matroid with respect to $X$ and $r$. Denote these $m-m'$ elements with $b_{\pi_1}, \dots, b_{\pi_{m-m'}}$. Note that $\sum_{k=1}^{m-m'} v_{b_{\pi_k}} \geq \sum_{k=m'+1}^{m} v_{b_k} $. Combining this with the sum from above leads to \[ \sum_{k = 1}^{m'} v_{b_k'} + \sum_{k=1}^{m-m'} v_{b_{\pi_k}} \geq \sum_{k = 1}^{m'} v_{b_k'} + \sum_{k = m'+1}^{m} v_{b_k} > \sum_{k = 1}^{m'} v_{b_k} + \sum_{k = m'+1}^{m} v_{b_k} = \sum_{k = 1}^{m} v_{b_k} \enspace, \] which is a contradiction to the fact that $\{b_1,\dots,b_m\}$ is a maximum weight basis in the matroid given $X$ truncated by $r$. \\
	
	Having this, we can argue about the impact of adding $i$ to $X$ on $\{b_1,\dots,b_{m}\}$ and $\{b_1,\dots,b_{m'}\}$ respectively. Consider two parallel executions of the Greedy algorithm computing $ \OPT_B(\mathbf{v} | X, r)$ and $ \OPT_B(\mathbf{v} | X \cup \{i\}, r)$. The first Greedy will compute $\{ b_1 , \dots, b_m \}$ whereas the second Greedy will choose exactly the same elements except for an element $b_i$ for which $\{b_1,\dots,b_i\} \cup \{i\}$ contains a circuit. Therefore, the difference on the left-hand side of inequality $(1)$ is equal to $v_{b_i}$. \\ Applying the same argument for the difference on the right-hand side of inequality $(1)$, there is an element $b_{i'}$ which is chosen in the first Greedy execution but not in the second one as $\{b_1,\dots,b_{i'}\} \cup \{i\}$ contains a circuit in the matroid contracted with $X$ and truncated with $r'$. Therefore, the difference on the right-hand side is equal to $v_{b_{i'}}$. We argue that $b_{i'}$ cannot be later than $b_i$ in the basis $\{b_1,\dots,b_{m}\}$ which allows us to conclude as elements in $b_1,\dots,b_{m}$ are sorted by weight in decreasing order. \\ We show the claim by contradiction, so assume that $b_{i'}$ is an element after $b_{i}$ and $b_{i'}$ is the first element such that $X \cup \{ b_1,\dots,b_{i'} \} \cup \{i\}$ contains a circuit in the matroid truncated with $r'$. Now, $b_{i'}$ is later than $b_i$, so $X \cup \{ b_1,\dots,b_{i'} \} \cup \{i\}$ is a superset of $Y \coloneqq X \cup \{ b_1,\dots,b_{i} \} \cup \{i\}$. Note that $|Y| \leq |X \cup \{ b_1,\dots,b_{i'} \} \cup \{i\}| \leq r'$. By assumption on $b_i$, $X \cup \{ b_1,\dots,b_{i} \} \cup \{i\}$ contains a circuit in the matroid truncated with $r$ and hence also needs to contain a circuit in the matroid truncated with $r'$, so either $b_{i'}$ is not the first element in $\{ b_1, \dots, b_{m'} \}$ which leads to a circuit with $i$ or $b_{i'}$ is before $b_i$ in the order of the basis. In the first case, apply the same argument again, in the second case, we showed the desired contradiction. Since there are only finitely many elements, the iterative application of the argument will terminate and hence, we proved the first inequality.  \\
	
	To see that inequality $(2)$ holds, we consider the matroid $\mathcal{M}$ truncated to rank $r'$. Denote this matroid by $\mathcal{M}_{r'}$. Expressed differently, this is the intersection of the matroid $\mathcal{M}$ with the $r'$-uniform matroid defined on the same ground set. Using Lemma 3 from \citet{10.1145/2213977.2213991}, the function $f_{r'}(Y) = \mathbf{v}(\OPT_B(\mathbf{v} | Y, r'))$ is submodular in $Y$ where now $\OPT_B(\mathbf{v} | Y, r')$ is a maximum weight basis in the matroid $\mathcal{M}_{r'}$. This implies inequality $(2)$. 	
\end{proof}

Next, we consider a fixed buyer $i$. Note that by the order in which we approach the sellers, pricing buyer $i$ is equivalent to choosing the cheapest current seller out of all available ones and compute the price with respect to the current $A_B$ and $r$. In other words, as buyer $i$ arrives, the price we offer is $\min_{j \in T} p_{i,j}(A_B, r)$, where $T$ denotes the set of available sellers.

\begin{lemma}\label{lemma:increasing_prices_sbb_buyers} \textnormal{(Non-decreasing prices for buyers)}
	Fix any buyer $i$. Then for any $T' \subseteq T \subseteq S $, $A_B' \supseteq A_B$ and $r' \leq r$, we have \[ \min_{j \in T} p_{i,j}(A_B, r)  \leq  \min_{j \in T'} p_{i,j}(A_B', r') \enspace. \] 
\end{lemma}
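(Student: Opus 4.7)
The plan is to deduce this lemma from Lemma~\ref{lemma:increasing_prices_sbb_sellers} by a very short two-step monotonicity argument; no new idea is required beyond combining pointwise price monotonicity with the elementary fact that the minimum over a smaller set of sellers is at least the minimum over a larger one. The heavy lifting (the matroid-exchange and submodularity arguments) is already done inside Lemma~\ref{lemma:increasing_prices_sbb_sellers}, so this statement is essentially a bookkeeping corollary.

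First I would fix an arbitrary seller $j \in T'$. Since $A_B' \supseteq A_B$ and $r' \leq r$, Lemma~\ref{lemma:increasing_prices_sbb_sellers} yields the pointwise bound $p_{i,j}(A_B, r) \leq p_{i,j}(A_B', r')$. Taking the minimum over $j \in T'$ on both sides preserves the inequality, giving
$$\min_{j \in T'} p_{i,j}(A_B, r) \;\leq\; \min_{j \in T'} p_{i,j}(A_B', r').$$
Next, since $T' \subseteq T$, the minimum of any function over the larger index set $T$ is at most the minimum over the smaller subset $T'$, so
$$\min_{j \in T} p_{i,j}(A_B, r) \;\leq\; \min_{j \in T'} p_{i,j}(A_B, r).$$
Chaining these two inequalities gives exactly the desired conclusion.

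The only thing worth double-checking is behaviour at the degenerate values. If $A_B \cup \{i\} \notin \mathcal{I}_B$ or $|A_B \cup \{i\}| > r$, then $p_i(A_B, r, \mathbf{v}) = \infty$ by the convention fixed in Section~\ref{section:matroide_sbb}; in that case, because matroids are downward-closed under taking subsets of independent sets and $r' \leq r$ combined with $A_B' \supseteq A_B$ only makes things worse, the infinity is inherited by the right-hand side, so the bound holds trivially. The case $T' = \emptyset$ is handled by the convention $\min_\emptyset = +\infty$. I do not anticipate any genuine obstacle here; the statement is essentially a repackaging of Lemma~\ref{lemma:increasing_prices_sbb_sellers} combined with monotonicity of the minimum under inclusion.
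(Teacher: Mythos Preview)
Your proposal is correct and matches the paper's own proof essentially line for line: the paper also reduces the claim to Lemma~\ref{lemma:increasing_prices_sbb_sellers} by first observing that $T' \subseteq T$ only increases the minimum, and then applying the pointwise monotonicity of $p_{i,j}$ in $A_B$ and $r$. Your version is slightly more explicit about edge cases, but the argument is identical.
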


As a short remark, we never delete agents from the set $A_B$ in our mechanism. Further, the number $r$ never increases and sellers are only removed from $\Msell$ and never added. Therefore, in other words, Lemma \ref{lemma:increasing_prices_sbb_buyers} states that for any fixed buyer $i$, the prices are non-decreasing as the allocation process proceeds.

\begin{proof}
	Observe that the minimum over $T$ contains at least any possible seller $j \in T'$. Hence the minimum on the left is taken over a superset of $T'$. Therefore, the claim follows by applying Lemma \ref{lemma:increasing_prices_sbb_sellers}, i.e. $p_{i,j}(A_B, r)$ is non-decreasing with respect to adding agents to $A_B$ and decreasing the number $r$.
\end{proof}

In order to show that \[ \Ex[\mathbf{v}, \mathbf{v}']{ \sum_{i \in \OPT_B\left(  \mathbf{v}' | A_B(\mathbf{v}), r(\mathbf{v}) \right)}  \Ex[\mathbf{\widetilde{v}}]{ p_{i}( A_B(\mathbf{v}), r(\mathbf{v}), \mathbf{\widetilde{v}}) }   } \leq \Ex[\mathbf{v}, \mathbf{v}']{ \mathbf{v}' \left( \OPT_B\left( \mathbf{v}' | A_B(\mathbf{v}), r(\mathbf{v})  \right) \right)} \] we make use of a proposition from \citet{10.1145/2213977.2213991}. Adapted to our setting, we consider the matroid $\mathcal{M}_r$ which is the matroid over the set of buyers $\mathcal{M}_B$ truncated to rank $r$ (recall the construction by intersecting $\mathcal{M}_B$ with the $r$-uniform matroid over the same ground set which is again a matroid). Denote by $\mathcal{I}_r$ the independent sets in $\mathcal{M}_r$. We apply Proposition 2 from \citet{10.1145/2213977.2213991} to our setting.

\begin{lemma}\label{lemma:matroide_sbb_prices_upper_bound} \citep[adapted version of][Proposition 2]{10.1145/2213977.2213991}
	Fix valuation profile $\widetilde{\mathbf{v}}$ and $r$ and let $A_B \in \mathcal{I}_r$. For any from $A_B$ disjoint set $V \in \mathcal{I}_r$ with $A_B \cup V \in \mathcal{I}_r$, it holds \[ \sum_{i \in V} p_i(A_B, r, \widetilde{\mathbf{v}}) \leq \widetilde{\mathbf{v}}\left( \OPT_B (\widetilde{\mathbf{v}} | A_B, r \right)  \enspace. \] 
\end{lemma}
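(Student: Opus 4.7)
The plan is to reduce the claim to a single matroid exchange argument carried out inside the contracted/truncated matroid $\mathcal{M}_r / A_B$. Let $O := \OPT_B(\widetilde{\mathbf{v}} \mid A_B, r)$. Because $O$ is weight-maximal among all subsets $O'$ of $B \setminus A_B$ with $A_B \cup O' \in \mathcal{I}_r$, the set $A_B \cup O$ is a basis of $\mathcal{M}_r$, so equivalently $O$ is a basis of the contracted matroid $\mathcal{M}_r / A_B$. By hypothesis $A_B \cup V \in \mathcal{I}_r$, so $V$ is independent in $\mathcal{M}_r / A_B$; extend it arbitrarily to a basis $V'$ of $\mathcal{M}_r / A_B$.

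Next I invoke the bijective matroid basis exchange theorem (Brualdi) for the two bases $V'$ and $O$ of $\mathcal{M}_r / A_B$: there exists a bijection $\phi: V' \to O$ such that $(O \setminus \{\phi(x)\}) \cup \{x\}$ is a basis of $\mathcal{M}_r / A_B$ for every $x \in V'$. Restricting $\phi$ to $V$ gives an injection $V \hookrightarrow O$ with the property that $A_B \cup \{i\} \cup (O \setminus \{\phi(i)\}) \in \mathcal{I}_r$ for every $i \in V$.

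Hence $O \setminus \{\phi(i)\}$ is a feasible extension of $A_B \cup \{i\}$ inside $\mathcal{M}_r$, which gives
\[
\widetilde{\mathbf{v}}\bigl(\OPT_B(\widetilde{\mathbf{v}} \mid A_B \cup \{i\}, r)\bigr) \;\geq\; \widetilde{\mathbf{v}}(O \setminus \{\phi(i)\}) \;=\; \widetilde{\mathbf{v}}(O) - \widetilde{v}_{\phi(i)},
\]
and therefore $p_i(A_B, r, \widetilde{\mathbf{v}}) \leq \widetilde{v}_{\phi(i)}$. Summing over $i \in V$ and using injectivity of $\phi$ produces
\[
\sum_{i \in V} p_i(A_B, r, \widetilde{\mathbf{v}}) \;\leq\; \sum_{i \in V} \widetilde{v}_{\phi(i)} \;\leq\; \sum_{o \in O} \widetilde{v}_o \;=\; \widetilde{\mathbf{v}}(\OPT_B(\widetilde{\mathbf{v}} \mid A_B, r)),
\]
which is the desired inequality.

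The main obstacle is really just verifying the two structural facts about $\mathcal{M}_r / A_B$: first, that $A_B \cup O$ is genuinely a basis of $\mathcal{M}_r$ (handling the edge case where the rank of $\mathcal{M}_r$ is strictly less than $r$, which is harmless since then $\mathcal{M}_r$ equals the untruncated matroid and the argument is unchanged); and second, that Brualdi's bijective exchange applies cleanly after truncation and contraction, which it does because both operations preserve the matroid property. Once these standard facts are in hand, everything else is bookkeeping, and the proof is essentially a one-line consequence of the exchange injection.
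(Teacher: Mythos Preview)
Your argument is correct. The paper itself does not give a proof of this lemma at all: it simply states it as an adapted version of Proposition~2 in \citet{10.1145/2213977.2213991}, applied inside the truncated matroid $\mathcal{M}_r$, and then immediately instantiates it with $V = \OPT_B(\mathbf{v}' \mid A_B(\mathbf{v}), r(\mathbf{v}))$. Your Brualdi-exchange argument is precisely the standard proof of that cited proposition, so you have supplied exactly the details the paper defers to the reference. The only cosmetic point is that ``$O$ is weight-maximal'' does not by itself force $A_B \cup O$ to be a basis of $\mathcal{M}_r$; one needs non-negativity of the $\widetilde v_i$ (which holds here) so that $O$ can without loss of generality be taken maximal, and you already flag this edge case.
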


Setting $V = \OPT_B\left( \mathbf{v}' | A_B (\mathbf{v}), r(\mathbf{v}) \right)$ as well as $A_B = A_B(\mathbf{v})$, we get the desired inequality pointwise for any fixed $\mathbf{v}$ and $\mathbf{v}'$. Hence, we can conclude by taking the expectation on both sides, using linearity and the fact that $\widetilde{\mathbf{v}}$ and $\mathbf{v}'$ are independent and identically distributed.

	\newpage
	\section{Appendix: Matroid Double Auctions with weak budget-balance and online arrival}
\label{appendix:matroid_wbb}

We split the proof of Theorem \ref{Theorem:Matroid} in the two following lemmas. Note that we did not make any assumption on the order in which we process the agents within the set of buyers and sellers. In particular, the order in which we process agents in any of the two classes (i.e. buyers or sellers) could be chosen adversarially.

\begin{lemma} \label{Lemma:Matroid_DISC_IR_WBB}
	Mechanism \ref{Matroid_mechanism} for matroid double auctions satisfies DWBB. Further, it is DSIC and IR for all buyers and sellers for any online adversarial order in which buyers and sellers are processed.
\end{lemma}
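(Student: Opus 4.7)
The plan is to verify the four desiderata in order of increasing subtlety: DWBB and the buyer-side properties first, then IR for sellers, and finally DSIC for sellers, which I expect to be the main obstacle. For DWBB I would inspect Algorithm~\ref{Matroid_mechanism} and observe that every monetary transfer occurs inside an explicit bilateral trade in which buyer $i$ pays $p_i$ to the mechanism while seller $j$ receives $T_j$. The two branches that trigger such a transfer both guarantee $p_i \geq T_j$, either by the explicit comparison $p_i \geq T_j$ or because $T_j$ has just been set to $p_i$. Hence the mechanism decomposes into bilateral trades in which the buyer's payment weakly dominates the seller's receipt, and each item changes hands at most once, giving direct-trade WBB.

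For buyer-side IR and DSIC, I would note that each buyer $i$ is faced with a single take-it-or-leave-it offer at price $p_i$, which depends only on the sellers' realizations and on the prefix $A'_{i-1}$, hence is independent of $i$'s report. Truthful acceptance when $v_i \geq p_i$ yields utility $v_i - p_i \geq 0$ and declining yields $0$; both IR and DSIC follow immediately.

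For seller-side IR I plan to maintain the invariant that whenever $j \in \Msell$, the current threshold satisfies $T_j > v_j$. This holds on entry (which requires $v_j < p_j$ and sets $T_j \leftarrow p_j$) and is preserved under the only update $T_j \leftarrow p_i$, which is guarded by the condition $v_j < p_i$. Consequently, any sale pays seller $j$ an amount $T_j > v_j$, and keeping her item yields $v_j$, so her realized utility is at least $v_j$ in every execution, independently of the adversary.

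The main difficulty is DSIC for sellers. From seller $j$'s viewpoint, her reported value enters Algorithm~\ref{Matroid_mechanism} only through a sequence of comparisons of the form ``is $v_j \geq \textnormal{(current threshold)}$?''; the threshold starts at $p_j$ and can only decrease, via updates $T_j \leftarrow p_i$ triggered on the branch where $j$ has previously declined. In parallel, buyers with $p_i \geq T_j$ may convert $j$ into a sale at the current $T_j$ without querying her. I would reduce DSIC to a single-agent stopping problem against a non-increasing offer sequence: seller $j$'s effective strategies are parameterized by the first price at which she commits to keep, her utility equals $v_j$ upon commitment and equals the prevailing $T_j$ upon a sale before commitment. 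A standard exchange argument then shows that the reservation price $v_j$ is weakly optimal: reporting any higher value commits her to keep while positive-surplus sales at some $T_j > v_j$ are still reachable, whereas reporting any lower value exposes her to forced sales at $T_j \leq v_j$. Hence truthful reporting is a dominant strategy regardless of the order chosen by the adversary.
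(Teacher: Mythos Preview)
Your proposal is correct and follows essentially the same approach as the paper's proof: DWBB via the observation that every trade has $p_i \geq T_j$, buyer-side DSIC/IR from the single take-it-or-leave-it structure, and seller-side DSIC from the fact that the thresholds offered to a fixed seller are non-increasing. Your treatment is considerably more explicit than the paper's (which dispatches seller DSIC in one sentence by noting that $T_j$ is non-increasing), in particular your invariant $T_j > v_j$ for IR and the stopping-problem reduction; one small point worth tightening is that the sequence of future offers to seller $j$ is not entirely exogenous (her decision at a threshold can affect whether the current buyer enters $A'$ and hence future prices), but since any deviation terminates her participation on one branch, the first-deviation comparison you sketch still goes through cleanly.
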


\begin{proof}
	It is easy to see that the mechanism is IR for buyers and sellers. Further, it is also DSIC for buyers as any buyer is offered a trade at most once. In addition, for sellers, the maximum amount of money $T_j$ we might give to seller $j$ if we sell her item is non-increasing as the process continues. Hence, the mechanism is also DSIC for sellers. 
	
	Concerning DWBB, observe that any time we trade between a buyer and a seller, we ensure that $p_i \geq T_j$ i.e. the money which is put into the market by buyer $i$ is sufficient to pay $T_j$ to seller $j$. As the difference in money $p_i - T_j$ (possibly 0) is never used in our mechanism again, we ensure DWBB. 
\end{proof}

\begin{lemma}\label{Lemma:Matroid_Approximation}
	Mechanism \ref{Matroid_mechanism} for matroid double auctions is $\frac{1}{2}$-competitive with respect to the optimal social welfare for any online adversarial order of buyers.
\end{lemma}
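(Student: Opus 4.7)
I plan to mimic the balanced-price analysis of Lemma~\ref{Lemma:Matroid_Approximation_SBB}, with the priced set $A'$ playing the telescoping role. For every $i \in A$ let $P_i$ be the threshold $i$'s value exceeded upon entering $A$: namely $p_j$ when $j$ is a seller accepted in the first for-loop, and $p_i$ in every sub-branch of the second for-loop. Decompose the mechanism's welfare as
\[
\sum_{i \in A} v_i \;=\; \sum_{i \in A} P_i \;+\; \sum_{i \in A}(v_i - P_i),
\]
and target the bounds $\Ex[\mathbf{v}]{\sum_{i \in A} P_i} \geq \tfrac{1}{2}\Ex[\mathbf{v}]{\mathbf{v}(\OPT(\mathbf{v}))} - \tfrac{1}{2}\Ex[\mathbf{v},\tilde{\mathbf{v}}]{\tilde{\mathbf{v}}(\OPT(\tilde{\mathbf{v}} \mid A'(\mathbf{v})))}$ for the base value and $\Ex[\mathbf{v}]{\sum_{i \in A}(v_i - P_i)} \geq \tfrac{1}{2}\Ex[\mathbf{v},\tilde{\mathbf{v}}]{\tilde{\mathbf{v}}(\OPT(\tilde{\mathbf{v}} \mid A'(\mathbf{v})))}$ for the surplus. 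Their sum is $\tfrac{1}{2}\Ex[\mathbf{v}]{\mathbf{v}(\OPT(\mathbf{v}))}$ by the i.i.d.\ property, giving the $1/2$-approximation.

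The base value reduces to a telescoping sum over the prices used to build $A'$. Inspection of the pseudocode establishes the invariants $|A| = |A'|$ and $|A| + |\Msell| = k$: every iteration that enlarges $A$ simultaneously enlarges $A'$, and the price charged to $A'$ equals the corresponding $P_i$. The only non-obvious matching is the case $p_i < T_j$, $v_j \geq p_i$, where seller $j$ is appended to $A$ while buyer $i$ is appended to $A'$, both at the common price $p_i$. Consequently $\sum_{i \in A} P_i = \sum_{i' \in A'} p_{i'}(A'_{i'-1})$, which by the balanced-price definition equals $\tfrac{1}{2}\Ex[\tilde{\mathbf{v}}]{\tilde{\mathbf{v}}(\OPT(\tilde{\mathbf{v}})) - \tilde{\mathbf{v}}(\OPT(\tilde{\mathbf{v}} \mid A'))}$ after telescoping. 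Telescoping is legitimate because $A'$ stays independent in the joint matroid on $B \cup S$ at every step: sellers enter as free elements, and a buyer $i$ enters only when $A'_B \cup \{i\} \in \mathcal{I}_B$. Taking $\Ex[\mathbf{v}]{\cdot}$ closes this half.

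The surplus step uses a hallucinated independent copy $\mathbf{v}'$ of $\mathbf{v}$. For every seller $j \in \OPT(\mathbf{v}' \mid A'(\mathbf{v}))$, the first for-loop presents her with $p_j(A'_{j-1}(\mathbf{v}))$, a quantity independent of $v_j$; the standard $v_j \leftrightarrow v_j'$ swap together with the monotonicity-in-$A'$ analog of Lemma~\ref{lemma:increasing_prices_sbb_sellers} (submodularity of $\mathbf{v}(\OPT(\mathbf{v} \mid \cdot))$ on the joint matroid) yields an expected contribution of at least $(v_j' - p_j(A'(\mathbf{v})))^+$. For every buyer $i \in \OPT(\mathbf{v}' \mid A'(\mathbf{v}))$, matroid feasibility forces $|A'(\mathbf{v})| < k$, which combined with the invariant $|A|+|\Msell| = k$ guarantees that $\Msell$ is non-empty when $i$ is processed, so the mechanism offers her a finite price; the same monotonicity-plus-swap argument produces a contribution of at least $(v_i' - p_i(A'(\mathbf{v})))^+$. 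Summing over $i \in \OPT(\mathbf{v}' \mid A'(\mathbf{v}))$ and invoking the analog of Lemma~\ref{lemma:matroide_sbb_prices_upper_bound} on the joint matroid yields $\Ex{\sum_i p_i(A'(\mathbf{v}))} \leq \tfrac{1}{2}\Ex{\tilde{\mathbf{v}}(\OPT(\tilde{\mathbf{v}} \mid A'(\mathbf{v})))}$, whence the surplus bound by i.i.d.\ of $\mathbf{v}, \mathbf{v}', \tilde{\mathbf{v}}$.

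The main obstacle will be the surplus accounting in the subcase $p_i < T_j$, $v_j \geq p_i$: buyer $i$ enters $A'$ but not $A$, so her surplus cannot be read from the $A$-sum directly and must be charged to the seller $j$ who took the slot in her place. The bookkeeping works because $j$ contributes base value $P_j = p_i$ and surplus $v_j - p_i \geq 0$ to $A$, so after the swap the inequality $(v_i' - p_i)^+ \leq (v_j - p_i)^+ + \text{(residual)}$ can still be absorbed into the sum over $\OPT(\mathbf{v}' \mid A'(\mathbf{v}))$. Verifying this bookkeeping pointwise -- together with confirming that the joint feasibility $\{T \subseteq B \cup S : T \cap B \in \mathcal{I}_B, |T| \leq k\}$ really is a matroid (it is the rank-$k$ truncation of $\mathcal{M}_B$ freely extended to $B \cup S$, hence a matroid) so that Lemmas~\ref{lemma:increasing_prices_sbb_sellers} and~\ref{lemma:matroide_sbb_prices_upper_bound} transfer -- constitutes the bulk of the work.
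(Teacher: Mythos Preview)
Your overall architecture---base value plus surplus, with $A'$ carrying the telescoping role and the joint matroid $\{T \subseteq B \cup S : T \cap B \in \mathcal{I}_B,\ |T| \leq k\}$ supplying monotonicity and the price-sum bound---is exactly the paper's approach, and your base-value argument is essentially identical to theirs.

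The gap is in your treatment of the ``main obstacle.'' Your proposed fix, charging buyer $i$'s hallucinated surplus to the seller $j$ who took her slot via an inequality of the form $(v_i' - p_i)^+ \leq (v_j - p_i)^+ + \text{(residual)}$, does not work: $v_i'$ is an independent resample and can be arbitrarily larger than $v_j$, so no such pointwise inequality holds, and there is no evident ``residual'' term to absorb the slack. The paper resolves this differently and without any charging. The key observation you are missing is that the event $\{i \in A'(\mathbf{v}) \setminus A(\mathbf{v})\}$ is determined solely by $p_i$, $T_j$, and $v_j$ for the seller $j$ matched to $i$---none of which depend on $v_i$. Hence the indicator $\mathds{1}_{i \notin A' \setminus A}$ is invariant under the swap $v_i \leftrightarrow v_i'$. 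Now combine this with the fact (which you implicitly use elsewhere) that $i \in \OPT(\mathbf{v}' \mid A'(\mathbf{v}))$ forces $i \notin A'(\mathbf{v})$, hence $i \notin A'(\mathbf{v}) \setminus A(\mathbf{v})$. After the swap the indicator therefore becomes identically $1$ on the event $\{i \in \OPT(\mathbf{v}' \mid A'(\mathbf{v}))\}$ and can simply be dropped. No seller-side bookkeeping is needed; the surplus bound then follows exactly as in your sketch via monotonicity and the analog of Lemma~\ref{lemma:matroide_sbb_prices_upper_bound}.
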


\begin{proof}
	The set of agents who receive an item $A$ depends on $\mathbf{v}$, so we denote by $A(\mathbf{v})$ and $A'(\mathbf{v})$ the sets $A$ and $A'$ under valuation profile $\mathbf{v}$. We want to compare $\Ex[\mathbf{v}]{\mathbf{v}(A(\mathbf{v}))}$ to $\Ex[\mathbf{v}]{\mathbf{v}(\OPT(\mathbf{v}))}$. To this end, again, we split the welfare of our algorithm into two parts, the base value and the surplus, and bound each quantity separately. The base value is thereby defined as follows: let agent $i$ receive an item in our mechanism, i.e. $i \in A$. Any buyer who gets an item has paid her agent-specific price for an item. Any seller who decided to keep her item was asked to keep it for her agent-specific price or for the buyer-specific price she was matched to. The part of any agent $i$'s value which is below this price is denoted the base value. The surplus is the part of any agent $i$'s value above this threshold if it exists, otherwise it is zero. \\
	
	\textbf{Base Value}: Our base part of the social welfare is defined via the prices. Note that all agents who are irrevocably allocated an item (i.e. which are in $A$ before adding the remaining sellers) have a value which exceeds her agent-specific price, except for sellers $j \in A \setminus A'$ whose value for an item exceeds the price of the corresponding buyer. Denote this final price by $P_i$ for agent $i$. For any seller $j \in A \setminus A'$, the corresponding buyer is stored in $A'\setminus A$, so we can replace the seller and the buyer when summing the base value of all agents who get an item. Fixing a valuation profile $\mathbf{v}$ and summing over all agents in $A(\mathbf{v})$ in the order that they were added to $A$, we can compute the following by a telescopic sum argument:
	\begin{align*}
	\sum_{i \in A(\mathbf{v})} P_i & = \sum_{i \in A'(\mathbf{v})} p_i(A_{i-1}'(\mathbf{v})) = \frac{1}{2} \sum_{i \in A'(\mathbf{v})} \Ex[\mathbf{\widetilde{v}}]{ p_i(A_{i-1}'(\mathbf{v}),\mathbf{\widetilde{v}}) } \\ & = \frac{1}{2} \sum_{i \in A'(\mathbf{v})} \Ex[\mathbf{\widetilde{v}}]{ \mathbf{\widetilde{v}} \left( \OPT(\mathbf{\widetilde{v}}| A_{i-1}'(\mathbf{v})) \right) - \mathbf{\widetilde{v}} \left( \OPT(\mathbf{\widetilde{v}} | A_{i-1}'(\mathbf{v}) \cup \{ i \} ) \right) } \\ & = \frac{1}{2} \left( \Ex[\mathbf{\widetilde{v}}]{ \mathbf{\widetilde{v}} \left( \OPT(\mathbf{\widetilde{v}} \right)} - \Ex[\mathbf{\widetilde{v}}]{ \mathbf{\widetilde{v}} \left( \OPT(\mathbf{\widetilde{v}}| A'(\mathbf{v})) \right) } \right)
	\end{align*}
	Taking the expectation over all valuation profiles $\mathbf{v}$, exploiting linearity of expectation and using that $\mathbf{\widetilde{v}} \sim \mathcal{D}$, we get:
	
	\begin{align*}
	\Ex[\mathbf{v}]{\sum_{i \in A(\mathbf{v})} P_i } = \frac{1}{2} \Ex[\mathbf{v}]{ \mathbf{v} \left( \OPT(\mathbf{v} \right)} - \frac{1}{2} \Ex[\mathbf{v}, \mathbf{\widetilde{v}}]{ \mathbf{\widetilde{v}} \left( \OPT(\mathbf{\widetilde{v}}| A'(\mathbf{v})) \right) }
	\end{align*} 
	
	\textbf{Surplus}: We start with two observations which will be helpful later. \\ 
	First of all, agent-specific prices are non-decreasing: For a fixed agent $i$, it holds that $p_i(A_{i-1}') \leq p_i(A_{i'}')$ for any step $i' > i$, so in particular it holds \[ p_i(A_{i-1}') \leq p_i(A') \enspace. \] To see this we use a reduction of our setting to the one-sided case and apply a lemma from \citet{DBLP:conf/focs/DuettingFKL17} which refers to \citet{10.1145/2213977.2213991}. More on this below. \\
	
	Second of all, we will interrupt for a moment and focus on the set $A'(\mathbf{v}) \setminus A(\mathbf{v})$. This set contains all buyers whose agent-specific price was paid by a seller, i.e. the seller decided to keep the item for price $p_i$. Hence, any buyer $i \in A'(\mathbf{v}) \setminus A(\mathbf{v})$ does not get an item in the end, so their surplus is necessarily zero. Note that by construction, any agent $i \in A'(\mathbf{v}) \setminus A(\mathbf{v})$ is a buyer. We observe that any other agent $i \notin A'(\mathbf{v}) \setminus A(\mathbf{v})$ whose value $v_i$ exceeds her corresponding price gets an item in our mechanism. Additionally, there might be some sellers keeping their items in the end and some sellers keeping items for buyer-specific prices later in the process, but we will not take these contributions to the surplus into account. Overall, any agent $i \notin A'(\mathbf{v}) \setminus A(\mathbf{v})$ had the chance to obtain an item in our process if her value exceeded her price. \\ Next, we want to observe why a buyer $i$ is in $A'(\mathbf{v}) \setminus A(\mathbf{v})$. Having a closer look at our algorithm, we see that buyer $i$ is in $A'(\mathbf{v}) \setminus A(\mathbf{v})$ if and only if her buyer-specific price $p_i$, the value $v_j$ of seller $j$ (the seller who is matched to $i$ once she entered the market) and the lowest price $T_j$ offered to seller $j$ satisfy \[ p_i < T_j \textnormal{\quad and \quad} p_i \leq v_j \enspace. \] In particular, the decision whether buyer $i$ is in $A'(\mathbf{v}) \setminus A(\mathbf{v})$ does not depend on her value $v_i$ at all. So, for any deviation of buyer $i$ to $v_i'$, buyer $i$ would end up in $A'(\mathbf{v}) \setminus A(\mathbf{v})$ in the same cases as she would with valuation $v_i$. Therefore, $i \in A'(\mathbf{v}) \setminus A(\mathbf{v})$ holds if and only if $i \in A'((v_i', \mathbf{v}_{-i})) \setminus A((v_i', \mathbf{v}_{-i}))$. \\
	
	Further, any agent who is already contained in the set $A'(\mathbf{v})$ cannot be added to the set $A'(\mathbf{v})$ afterwards once more. Hence, $i \in A'(\mathbf{v})$ implies $i \notin \OPT(\mathbf{w} | A'(\mathbf{v})) $ for any valuation profile $\mathbf{w}$. Expressed as an implication in the other direction, $i \in \OPT(\mathbf{w} | A'(\mathbf{v}) )$ implies $i \notin A'(\mathbf{v})$ and in particular, $i \in \OPT(\mathbf{w} | A'(\mathbf{v}) )$ implies $i \notin A'(\mathbf{v}) \setminus A(\mathbf{v})$. Combining this with the above observation, any agent $i \in \OPT(\mathbf{w} | A'(\mathbf{v}))$ has a considerable surplus if she exceeds her price. \\
	
	We can now consider the surplus of an agent $i$. Let $\mathbf{v}' \sim \mathcal{D}$ be an independently sampled valuation profile. Now, the price for agent $i$ depends on $A_{i-1}'(\mathbf{v})$. But $A_{i-1}'$ only depends on agents $1,\dots,i-1$, so in particular we could replace $v_i$ by $v_i'$ and use that prices are non-decreasing for any fixed agent. Combining this with the above observations, we can bound the surplus of agent $i$ from below as follows: 
	
	\begin{align*}
	\textnormal{surplus}_i & \geq \left( v_i - p_i(A_{i-1}'(\mathbf{v})) \right)^+  \mathds{1}_{i \notin A'(\mathbf{v}) \setminus A(\mathbf{v})} \\ & = \left( v_i - p_i(A_{i-1}'(\mathbf{v})) \right)^+  \mathds{1}_{i \notin A'((v_i', \mathbf{v}_{-i})) \setminus A((v_i', \mathbf{v}_{-i}))} \\ & \geq \left( v_i - p_i(A'( (v_i', \mathbf{v}_{-i})) \right)^+ \mathds{1}_{i \notin A'((v_i', \mathbf{v}_{-i})) \setminus A((v_i', \mathbf{v}_{-i}))} \\&  \geq \left( v_i - p_i(A'( (v_i', \mathbf{v}_{-i})) \right)^+ \mathds{1}_{i \notin A'((v_i', \mathbf{v}_{-i})) \setminus A((v_i', \mathbf{v}_{-i}))} \mathds{1}_{i \in \OPT\left( (v_i, \mathbf{v}_{-i}') | A' ((v_i', \mathbf{v}_{-i})) \right)} \\ & = \left( v_i - p_i(A'( (v_i', \mathbf{v}_{-i})) \right)^+ \mathds{1}_{i \in \OPT\left( (v_i, \mathbf{v}_{-i}') | A' ((v_i', \mathbf{v}_{-i})) \right)}
	\end{align*}
	
	Taking expectations on both sides and exploiting that $\mathbf{v}$ and $\mathbf{v}'$ are independent and identically distributed leads to the following:
	
	\begin{align*}
	\Ex[\mathbf{v}]{\left( v_i - p_i(A_{i-1}'(\mathbf{v})) \right)^+  \mathds{1}_{i \notin A'(\mathbf{v}) \setminus A(\mathbf{v})}} & \geq \Ex[\mathbf{v}, \mathbf{v}']{\left( v_i - p_i(A'( (v_i', \mathbf{v}_{-i})) \right)^+ \mathds{1}_{i \in \OPT\left( (v_i, \mathbf{v}_{-i}') | A' ((v_i', \mathbf{v}_{-i})) \right)}} \\ & = \Ex[\mathbf{v}, \mathbf{v}']{\left( v_i' - p_i(A'(\mathbf{v})) \right)^+ \mathds{1}_{i \in \OPT\left( \mathbf{v}' | A' (\mathbf{v}) \right)}}
	\end{align*}
	
	Summing over all agents (i.e. buyers and sellers), we can lower bound the overall surplus:
	
	\begin{align*}
	\Ex[\mathbf{v}]{\sum_{i \in A(\mathbf{v})} \left( v_i - P_i \right)^+ } & \geq \Ex[\mathbf{v}]{\sum_{i \in B \cup S} \left( v_i - p_i(A_{i-1}'(\mathbf{v})) \right)^+  \mathds{1}_{i \notin A'(\mathbf{v}) \setminus A(\mathbf{v})}} \\ & \geq  \Ex[\mathbf{v}, \mathbf{v}']{ \sum_{i \in \OPT\left( \mathbf{v}' | A' (\mathbf{v}) \right)} \left( v_i' - p_i(A'(\mathbf{v})) \right)^+} \\ & \geq \Ex[\mathbf{v}, \mathbf{v}']{ \sum_{i \in \OPT\left( \mathbf{v}' | A' (\mathbf{v}) \right)} v_i' } - \Ex[\mathbf{v}, \mathbf{v}']{ \sum_{i \in \OPT\left( \mathbf{v}' | A' (\mathbf{v}) \right)} p_i(A'(\mathbf{v})) } \\ & = \Ex[\mathbf{v}, \mathbf{v}']{ \mathbf{v}' \left( \OPT\left( \mathbf{v}' | A' (\mathbf{v}) \right) \right)} - \Ex[\mathbf{v}, \mathbf{v}']{ \sum_{i \in \OPT\left( \mathbf{v}' | A' (\mathbf{v}) \right)} p_i(A'(\mathbf{v})) } \\ & \geq \frac{1}{2} \Ex[\mathbf{v}, \mathbf{v}']{ \mathbf{v}' \left( \OPT\left( \mathbf{v}' | A' (\mathbf{v}) \right) \right)} 
	\end{align*}
	
	The last inequality follows by bounding $\Ex[\mathbf{v}, \mathbf{v}']{ \sum_{i \in \OPT\left( \mathbf{v}' | A' (\mathbf{v}) \right)} p_i(A'(\mathbf{v})) } \leq \frac{1}{2} \Ex[\mathbf{v}, \mathbf{v}']{ \mathbf{v}' \left( \OPT\left( \mathbf{v}' | A' (\mathbf{v}) \right) \right)}$ which we prove below. \\ 
	
	Summing the base value and the surplus proves our claim as we can exploit that $\mathbf{v}'$ and $\widetilde{\mathbf{v}}$ are independent and identically distributed.
\end{proof}

In order to conclude, we need to prove two remaining facts: first, agent-specific prices are non-decreasing, second, we need to show that \[ \Ex[\mathbf{v}, \mathbf{v}']{ \sum_{i \in \OPT\left( \mathbf{v}' | A' (\mathbf{v}) \right)} p_i(A'(\mathbf{v})) } \leq \frac{1}{2} \Ex[\mathbf{v}, \mathbf{v}']{ \mathbf{v}' \left( \OPT\left( \mathbf{v}' | A' (\mathbf{v}) \right) \right)} \enspace. \]

\subsection*{A different view on our prices}

Our prices ensure that the set of buyers $A_B$ who receive an item in our mechanism is an independent set in the matroid, i.e. $A_B \in \mathcal{I}_B$. Additionally, we ensure that we do not promise items to agents once all items are allocated irrevocably. As we are optimizing over a set of agents which is partially (on the buyers' side) equipped with a matroid constraint, we start by extending this to an equivalent setting with a matroid over the whole set of agents, i.e. the ground set of this extended matroid is $B \cup S$. Afterwards, we show a correspondence of our prices to the ones in \citet{10.1145/2213977.2213991} and \citet{DBLP:conf/focs/DuettingFKL17} respectively. This allows to exploit the properties for the prices in one sided-markets. 

There is the matroid $\mathcal{M}_B = \left( B, \mathcal{I}_B \right)$ over the set of buyers. On the sellers' side we construct an artificial matroid by considering the $|S|$-uniform matroid over the set of sellers, denoted by $\mathcal{M}_S = \left( S, \mathcal{I}_S \right)$. Afterwards, we consider the union of the two matroids $\widehat{\mathcal{M}} = (B \cup S, \mathcal{J})$, where a set $I = I_B \cup I_S$ is now independent, if $I_B \in \mathcal{I}_B$ and $I_S \in \mathcal{I}_S$. In order to mirror the feasibility constraint of having only $|S|$ items, we intersect $\widehat{\mathcal{M}}$ with the $|S|$-uniform matroid over $B \cup S$ and denote this matroid by $\mathcal{M}$. Observe that by construction, $\mathcal{M}$ is again a matroid. As a consequence, we can relate all feasible allocations with respect to $\mathcal{M}_B$ to independent sets in the extended matroid $\mathcal{M}$.

Concerning our pricing scheme, first, observe that we calculated prices with respect to the set $A'$ by setting $p_i = \infty$ if $A_B' \cup \{ i \} \notin \mathcal{I}_B$ or if all items are irrevocably allocated. This corresponds to sets which are not independent in the extended matroid $\mathcal{M}$ over the ground set $B \cup S$. A finite price for $i$ (in case $i$ can feasibly be added to $A'$) can also be interpreted in the extended matroid $\mathcal{M}$: If $A' \cup \{ i \} \in \mathcal{I}$, the price for agent $i$ is computed to be $p_i(A_{i-1}')$. Hence, we ensure that $\OPT(\mathbf{v}) \in \mathcal{I}$ and also $X \cup \OPT(\mathbf{v} | X) \in \mathcal{I}$ for any $X \in \mathcal{I}$. Note that in particular, the matroid $\mathcal{M}$ combines the feasibility constraints for buyers and the constraint of having $|S|$ items. As a consequence, computing prices with respect to $\mathcal{M}$ is equivalent to our pricing strategy from Section \ref{Subsection:Matroid_Pricing}.

\subsubsection*{Properties of Prices}

Looking at our optimization problem and in particular on the prices from the viewpoint of the matroid $\mathcal{M}$, we can use Lemma E.2 in \citet{DBLP:conf/focs/DuettingFKL17} to show that agent-specific prices are non-decreasing for any fixed agent $i$.

\begin{lemma} \citep[Lemma E.2]{DBLP:conf/focs/DuettingFKL17}
	Consider any independent sets $X, Y \in \mathcal{I}$ with $X \subseteq Y$. Then, for any agent $i$, we have \[ p_i(X) \leq p_i(Y)  \enspace.  \] 
\end{lemma}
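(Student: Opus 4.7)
By linearity of expectation it suffices to prove the pointwise inequality $p_i(X,\widetilde v) \leq p_i(Y,\widetilde v)$ for each realized valuation profile $\widetilde v$; the lemma then follows by taking expectations. The infinite cases are easy to dispatch: $p_i(X,\widetilde v) = \infty$ occurs precisely when $X \cup \{i\} \notin \mathcal I$ or when $X$ is already spanning (no items remain), and both conditions are preserved when passing to supersets, so $p_i(Y,\widetilde v) = \infty$ as well.

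In the finite case, I would argue by induction on $|Y \setminus X|$ and reduce to the single-step version $p_i(X,\widetilde v) \leq p_i(X \cup \{j\},\widetilde v)$ with $j \in Y \setminus X$. Writing $f(Z) := \widetilde v(\OPT(\widetilde v \mid Z))$ for the maximum-weight basis extension value of $Z$ in $\mathcal{M}$, this is equivalent to the two-element submodularity
\[
f(X) + f(X \cup \{i,j\}) \;\leq\; f(X \cup \{i\}) + f(X \cup \{j\}).
\]
To prove this I would use the classical integral representation
\[
f(Z) \;=\; \int_0^{\infty} \bigl( r(Z \cup E_{>t}) - r(Z) \bigr)\, dt, \qquad E_{>t} := \{e \in B \cup S : \widetilde v_e > t\},
\]
where $r$ is the rank function of $\mathcal{M}$; the identity follows from Fubini and the correctness of the greedy algorithm for max-weight bases in the contracted matroid $\mathcal{M}/Z$. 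Substituting this into both sides of the target inequality and noting that the constant terms $r(X), r(X \cup \{i\}), r(X \cup \{j\}), r(X \cup \{i,j\})$ equal $|X|, |X|+1, |X|+1, |X|+2$ (all four sets being independent) and therefore cancel in the relevant combination, the desired inequality reduces to the pointwise claim
\[
r(X \cup \{i\} \cup E_{>t}) + r(X \cup \{j\} \cup E_{>t}) \;\geq\; r(X \cup E_{>t}) + r(X \cup \{i,j\} \cup E_{>t}),
\]
which is immediate from submodularity of the matroid rank function applied to $A = X \cup \{i\} \cup E_{>t}$ and $B = X \cup \{j\} \cup E_{>t}$ (so that $A \cap B = X \cup E_{>t}$ and $A \cup B = X \cup \{i,j\} \cup E_{>t}$).

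The step I expect to require the most care is verifying the integral representation of $f$: one writes $\widetilde v_e = \int_0^{\infty} \mathds{1}[\widetilde v_e > t]\, dt$ and uses Fubini, together with the observation that the greedy algorithm on $\mathcal{M}/Z$ restricted to elements of weight above $t$ selects a maximum independent extension of $Z$ of size $r(Z \cup E_{>t}) - r(Z)$. Once this standard matroid identity is in hand, the rest of the proof is a direct invocation of submodularity of matroid rank, which is a defining axiom. As a small remark, the lemma is applied in the paper only in contexts where $i \notin Y$ is ensured (via the indicator $\mathds{1}_{i \in \OPT(\cdot \mid A'(\cdot))}$ appearing in the surplus calculation), which is also the natural regime in which the stated monotonicity holds without further caveats.
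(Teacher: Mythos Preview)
Your argument is correct. The paper does not actually prove this lemma: it is quoted as Lemma~E.2 of D\"utting, Feldman, Kesselheim and Lucier and used as a black box, so there is no ``paper's own proof'' to compare against. What you have supplied is in fact a self-contained proof that the function $Z \mapsto \widetilde{\mathbf{v}}(\OPT(\widetilde{\mathbf{v}}\mid Z))$ is submodular on independent sets of $\mathcal{M}$; this is exactly the content of Lemma~3 in Kleinberg--Weinberg, which the present paper invokes elsewhere (in the proof that buyer thresholds are monotone in the strong-budget-balance section) again without reproving it. Your route---the layer-cake identity $f(Z)=\int_0^\infty (r(Z\cup E_{>t})-r(Z))\,dt$ followed by pointwise rank submodularity---is the standard way to establish that lemma, and your bookkeeping with the constant terms $r(X),r(X\cup\{i\}),r(X\cup\{j\}),r(X\cup\{i,j\})$ is correct precisely because all four sets are independent in the finite case (the last one since $X\cup\{i,j\}\subseteq Y\cup\{i\}\in\mathcal{I}$ along the inductive chain). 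One small point: your identification $A\cap B = X\cup E_{>t}$ tacitly uses $i\neq j$ and $i,j\notin X$, but these hold in the relevant regime and in any case $A\cap B\supseteq X\cup E_{>t}$ together with monotonicity of rank suffices.

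Your closing caveat is well taken and worth stating explicitly: the inequality as written can fail when $i\in Y\setminus X$ (then $p_i(Y)=0$ while $p_i(X)$ may be strictly positive), so the lemma should be read under the natural hypothesis $i\notin Y$. As you observe, this is exactly how it is used in the surplus analysis, where the indicator $\mathds{1}_{i\in\OPT(\cdot\mid A'(\cdot))}$ forces $i\notin A'$.
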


Having that agent-specific prices are only non-decreasing, it remains to show that \[ \Ex[\mathbf{v}, \mathbf{v}']{ \sum_{i \in \OPT\left( \mathbf{v}' | A' (\mathbf{v}) \right)} p_i(A'(\mathbf{v})) } \leq \frac{1}{2} \Ex[\mathbf{v}, \mathbf{v}']{ \mathbf{v}' \left( \OPT\left( \mathbf{v}' | A' (\mathbf{v}) \right) \right)} \] in order to conclude. Again, we use the matroid $\mathcal{M}$ as constructed above and apply a proposition from \citet{10.1145/2213977.2213991}.

\begin{lemma} \citep[Proposition 2]{10.1145/2213977.2213991}
	Fix valuation profile $\widetilde{\mathbf{v}}$ and let $A' \in \mathcal{I}$. For any disjoint set $V \in \mathcal{I}$ with $A' \cup V \in \mathcal{I}$, it holds \[ \sum_{i \in V} p_i(A', \widetilde{\mathbf{v}}) \leq \widetilde{\mathbf{v}}\left( \OPT (\widetilde{\mathbf{v}} | A' \right)  \enspace. \] 
\end{lemma}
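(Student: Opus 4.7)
Let $O = \OPT(\widetilde{\mathbf{v}} \mid A')$. The plan is to exhibit an injection $\phi : V \to O$ such that for every $x \in V$ the swap $(A' \cup O \setminus \{\phi(x)\}) \cup \{x\}$ is independent in the extended matroid $\mathcal{M}$. Once $\phi$ is in hand, the claim reduces to a short charging argument: because $O \setminus \{\phi(x)\}$ is disjoint from $A' \cup \{x\}$ and $A' \cup \{x\} \cup (O \setminus \{\phi(x)\}) \in \mathcal{I}$, the set $O \setminus \{\phi(x)\}$ is a feasible candidate for $\OPT(\widetilde{\mathbf{v}} \mid A' \cup \{x\})$, so
\[
\widetilde{\mathbf{v}}\bigl(\OPT(\widetilde{\mathbf{v}} \mid A' \cup \{x\})\bigr) \;\geq\; \widetilde{\mathbf{v}}(O) - \widetilde v_{\phi(x)}.
\]
Unfolding the definition $p_x(A', \widetilde{\mathbf{v}}) = \widetilde{\mathbf{v}}(O) - \widetilde{\mathbf{v}}(\OPT(\widetilde{\mathbf{v}} \mid A' \cup \{x\}))$ gives $p_x(A', \widetilde{\mathbf{v}}) \leq \widetilde v_{\phi(x)}$, and then summing over $x \in V$ together with injectivity of $\phi$ and non-negativity of weights yields
\[
\sum_{x \in V} p_x(A', \widetilde{\mathbf{v}}) \;\leq\; \sum_{x \in V} \widetilde v_{\phi(x)} \;\leq\; \sum_{y \in O} \widetilde v_y \;=\; \widetilde{\mathbf{v}}\bigl(\OPT(\widetilde{\mathbf{v}} \mid A')\bigr),
\]
which is the desired inequality.

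To construct $\phi$, I would pass to the contraction $\mathcal{M}/A'$. By hypothesis both $V$ and $O$ are independent in $\mathcal{M}/A'$, and $O$ is a maximum-weight basis of $\mathcal{M}/A'$ (after, if necessary, padding with zero-weight elements, which affects neither the welfare nor the ultimate bound). By the matroid augmentation axiom, extend $V$ to a basis $V^*$ of $\mathcal{M}/A'$ with $V \subseteq V^*$ and $|V^*| = |O|$. Now apply the classical basis-exchange theorem (Brualdi) to the two bases $V^*$ and $O$: there exists a bijection $\psi : V^* \setminus O \to O \setminus V^*$ such that $(O \setminus \{\psi(x)\}) \cup \{x\}$ is a basis of $\mathcal{M}/A'$ for every $x \in V^* \setminus O$. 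Extending $\psi$ by the identity on $V^* \cap O$ gives a bijection $V^* \to O$, whose restriction to $V$ is the required injection $\phi$: for $x \in V \cap O$ the swap $(O \setminus \{x\}) \cup \{x\} = O$ is trivially independent, and for $x \in V \setminus O$ independence follows from the exchange property.

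The main obstacle is invoking the correct flavor of basis exchange — the version I need swaps an element $x$ from the ``outer'' basis $V^*$ into the ``inner'' basis $O$, rather than the more commonly stated converse. I would either cite Brualdi's bijection theorem directly or derive the needed bijection from the fact that the bipartite exchange graph on $V^* \triangle O$ admits a perfect matching (a standard consequence of Hall's condition on matroid bases). Everything else is bookkeeping; in effect the proof is exactly the Kleinberg--Weinberg argument for their Proposition 2 translated from $\mathcal{M}_B$ to the extended matroid $\mathcal{M}$ on $B \cup S$ that was constructed just above the lemma.
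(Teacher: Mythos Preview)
Your argument is correct and is precisely the standard proof of Kleinberg--Weinberg's Proposition~2: pass to the contraction $\mathcal{M}/A'$, extend $V$ to a basis, and use Brualdi's bijective basis exchange to produce the injection $\phi$ that charges each $p_x(A',\widetilde{\mathbf{v}})$ to a distinct $\widetilde v_{\phi(x)}$ in $O$. The one point you flag --- that you need the exchange in the direction $(O \setminus \{\psi(x)\}) \cup \{x\} \in \mathcal{I}$ --- is indeed handled by taking the inverse of the Brualdi bijection (apply the theorem with $B_1 = O$, $B_2 = V^*$ and invert), so there is no gap there.

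As for the comparison: the paper does not prove this lemma at all. It is stated as a direct citation of \citet[Proposition~2]{10.1145/2213977.2213991}, adapted to the extended matroid $\mathcal{M}$ on $B \cup S$, and then immediately applied with $V = \OPT(\mathbf{v}' \mid A'(\mathbf{v}))$. So your write-up is not an alternative to the paper's proof but a faithful reconstruction of the cited result, which the authors simply import.
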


Setting $V = \OPT\left( \mathbf{v}' | A' (\mathbf{v}) \right)$ as well as $A' = A'(\mathbf{v})$, we get the desired inequality pointwise for any fixed $\mathbf{v}$ and $\mathbf{v}'$. Hence, we can conclude by taking the expectation and using that $\widetilde{\mathbf{v}}$ and $\mathbf{v}'$ are independent and identically distributed.

	\newpage
	\section{Appendix: Combinatorial Double Auctions with strong budget-balance}
\label{appendix:xos_sbb}

In order to prove Theorems \ref{Theorem:unit_supply_xos_sbb} and \ref{Theorem:additive_sbb}, we provide the following lemmas. We start with the DSBB, DSIC and IR properties of our mechanism and conclude by proving the competitive ratio.

\begin{lemma} \label{Lemma:unit_supply_xos_DISC_IR_SBB}
	The mechanism for combinatorial double auctions with unit-supply sellers and buyers having XOS-valuation functions is DSBB, DSIC and IR for all buyers and sellers. The agents' arrival order can be chosen adversarially. 
\end{lemma}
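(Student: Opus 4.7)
The plan is to establish the four properties (DSBB, IR, DSIC for buyers, DSIC for sellers) in turn, and then observe that nothing in the argument relies on the buyers' arrival order. The conceptual hook is that the prices $p_j$ are static and anonymous, and are computed from $\mathcal{D}$ alone, so no agent's report or arrival position influences any other agent's price or available choice set beyond which items remain in $\Msell$.

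The DSBB claim would follow directly from inspecting Algorithm~\ref{Xos_mechanism_sbb}: every time an item $j$ changes hands, buyer $i$ pays exactly $p_j$ and the seller holding $j$ in $I_l$ receives exactly $p_j$, and each item is traded at most once. Thus the run of the mechanism decomposes into bilateral trades in the sense of \citet{10.1145/3381523}. IR is likewise immediate on both sides: a buyer can always request $X_i = \emptyset$ and obtain utility $0$, while a seller can always declare that she wishes to keep every item in $I_l$, preserving her initial utility $v_l(I_l)$. For DSIC on the buyer side, I would note that each buyer is queried exactly once, the prices she faces are fixed before her turn and do not depend on her reported valuation, and she is essentially given a take-it-or-leave-it offer on all of $2^{\Msell}$; reinterpreting her selection of a bundle as the mechanism choosing the $\arg\max$ over $T \subseteq \Msell$ of $v_i(T) - \sum_{j \in T} p_j$ under her reported $v_i$, truthful reporting is weakly dominant by the standard single-agent posted-pricing argument.

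The main obstacle is DSIC for sellers. Here the extra subtlety is that an item placed in $\Msell$ which no buyer purchases is returned to its seller in the final \textbf{for}-loop, so the seller's outcome from ``try to sell'' is either $p_j$ (if the item is eventually bought) or $v_l(\{j\})$ (if it is returned). I would argue by case analysis on a unit-supply seller $l$ holding item $j$: if $v_l(\{j\}) \geq p_j$, the truthful strategy ``keep'' yields $v_l(\{j\})$ while any deviation puts $j$ into $\Msell$ and yields $p_j$ or $v_l(\{j\})$, both $\leq v_l(\{j\})$; conversely, if $v_l(\{j\}) < p_j$, the truthful strategy ``try to sell'' yields $\max(p_j, v_l(\{j\})) = p_j$ or $v_l(\{j\})$ depending on buyer behavior, and in particular is always $\geq v_l(\{j\})$, which equals the utility of deviating to ``keep''. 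The crucial supporting fact I would emphasize is that the seller's report only controls whether $j$ enters $\Msell$; it does not affect $p_j$ (static and computed from $\mathcal{D}$) nor the set of available items faced by any buyer beyond adding or removing $j$, so she cannot manipulate the downstream process to create a better outcome.

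Finally, to handle the adversarial online arrival of buyers, I would observe that the entire preceding argument is pointwise in the order: the sellers' phase runs first and is independent of any buyer ordering; during the buyers' phase, each buyer's choice set and utility depend only on $\Msell$ at her turn and the fixed prices $\{p_j\}_{j \in \Msell}$, neither of which reveals any strategic information to exploit by deviating. Consequently, DSBB, IR, and DSIC for both sides are preserved under an adaptive adversarial order, which is the content of the lemma.
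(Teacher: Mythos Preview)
Your proposal is correct and follows essentially the same approach as the paper's own proof: DSBB from the bilateral-trade structure with identical buyer-pay and seller-receive prices, IR from the ability to opt out, DSIC for buyers from the single take-it-or-leave-it offer, and DSIC for sellers from the unit-supply structure combined with static prices. Your treatment is more detailed than the paper's (which dispatches seller DSIC in a single sentence), and in particular your explicit handling of the item-return case when no buyer purchases is a useful clarification that the paper leaves implicit.
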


\begin{proof}
	By construction, the mechanism consists of bilateral trades where an item is moved from one seller to one buyer and in exchange, a static and anonymous item price is transfered from this buyer to the corresponding seller. Hence, we satisfy DSBB. In addition, IR is also satisfied as any agent can withdraw. The mechanism is further DSIC for buyers as any buyer is asked once in our mechanism which bundle she wants to purchase. As seller $l$ only has one item and we offer her a price of $p_j$ for the item, the mechanism is also DSIC for sellers.
\end{proof}

\begin{lemma} \label{Lemma:gs_DISC_IR_SBB}
	The mechanism for combinatorial double auctions with buyers and sellers having additive valuation functions is DSBB, DSIC and IR for all buyers and sellers, where the agents' arrival order can be chosen adversarially. 
\end{lemma}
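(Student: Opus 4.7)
The plan is to verify DSBB, IR, and DSIC separately for buyers and sellers, closely following the structure of Lemma~\ref{Lemma:unit_supply_xos_DISC_IR_SBB}. The key novelty is that additivity on the sellers' side replaces the ``one item per seller'' argument used there, so each seller's strategic problem still decouples into simple per-item take-it-or-leave-it decisions.

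First, I would establish DSBB by direct inspection of Algorithm~\ref{Xos_mechanism_sbb}: whenever item $j$ is transferred from seller $l$ to buyer $i$, buyer $i$ pays exactly $p_j$ and seller $l$ receives exactly $p_j$, and no further payments occur, so the outcome decomposes into bilateral trades with balanced transfers. IR for buyers is immediate because a buyer may always choose the empty bundle for utility $0$, and IR for sellers holds because a seller can always report ``keep everything in $I_l$'', securing the outside utility $v_l(I_l)$. DSIC for buyers is similarly short: each buyer is queried at most once, faces static anonymous prices $p_j$ on the currently available items, and the mechanism simply hands her the utility-maximizing bundle under her reported valuation; since prices do not depend on her report, truthful revelation directly implements her optimal bundle.

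The step needing care is DSIC for sellers, and the plan is to exploit additivity to reduce it to per-item choices. Fix seller $l$ with additive $v_l$ and fix any item $j \in I_l$. Because $v_l$ is additive, the contribution of $j$ to $l$'s utility is $v_l(\{j\})$ if she keeps $j$, $p_j$ if she places $j$ in $\Msell$ and it is eventually sold, and $v_l(\{j\})$ if she places $j$ in $\Msell$ but it is returned at the end. Crucially, whether $j$ is eventually sold depends only on the buyers' reports and on the other items in $\Msell$, never on $l$'s reported value of $j$. Therefore, for each $j$ independently, placing $j$ in $\Msell$ weakly dominates keeping it iff $p_j \geq v_l(\{j\})$, and vice versa. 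Truthful reporting implements this optimal choice for every $j \in I_l$ at once; any misreport of $v_l(\{j\})$ can only flip the comparison on item $j$ and cause the mechanism to take a weakly worse per-item decision. Summing over $j$ via additivity, truth-telling is dominant for the whole bundle.

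The main conceptual obstacle is exactly this decoupling: for general non-additive valuations, the contribution of $j$ to the value of the kept bundle depends on which other items are kept, so changing the decision on $j$ can shift the contribution of other items and the per-item dominant-strategy argument breaks. This is precisely why the lemma restricts both buyers and sellers to additive valuations in this regime (complementing Lemma~\ref{Lemma:unit_supply_xos_DISC_IR_SBB}, which handles the incomparable unit-supply XOS case where the per-item issue does not arise because each $I_l$ is a singleton). Once the additivity-based decoupling is in place, no further subtlety remains.
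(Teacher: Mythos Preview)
Your approach is essentially the paper's, and the conclusion is correct, but there is one genuine gap in the seller-DSIC step. You argue that for each item $j \in I_l$ the truthful keep/sell decision weakly maximizes the \emph{contribution of $j$}, and then write ``Summing over $j$ via additivity, truth-telling is dominant for the whole bundle.'' Additivity of $v_l$ lets you decompose the seller's utility as a sum of per-item contributions, but it does \emph{not} by itself guarantee that flipping the decision on $j$ leaves the contributions of the other items $j' \in I_l$ unchanged: placing or withholding $j$ changes $\Msell$, which can change whether $j'$ is bought or returned. Concretely, with a unit-demand (hence XOS) buyer one can have $p_j > v_l(\{j\})$ yet keeping $j$ is strictly better for seller $l$ because it redirects the buyer to purchase a higher-priced $j'$; so your per-item dominance does not sum to global dominance in general.

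What closes the gap is exactly the buyer additivity hypothesis, and this is the point the paper's proof makes explicit: with additive buyers, buyer $i$ purchases an available item $j'$ if and only if $v_i(\{j'\}) > p_{j'}$, irrespective of which other items are in $\Msell$. Hence the sale event of $j'$ is unaffected by seller $l$'s decision on $j$, the contributions truly decouple, and your summation step becomes valid. You do note that the lemma restricts both sides to additive valuations, but your stated reason (``the contribution of $j$ to the value of the kept bundle depends on which other items are kept'') concerns only seller non-additivity; the missing piece is that buyer additivity is what makes the sale events separable. Adding one sentence to that effect fixes the argument and makes it coincide with the paper's.
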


\begin{proof}
	Concerning DSBB, IR and DSIC for buyers, we can copy the arguments from Lemma \ref{Lemma:unit_supply_xos_DISC_IR_SBB}. Also, the mechanism is DSIC for sellers: By additivity, any seller has a value $v_l(\{j\})$ for any $j \in I_l$ and hence, we can rewrite the utility as $\sum_{j \in X_l} v_l(\{j\}) + \sum_{j \in I_l \setminus X_l} p_j$. Since all buyers also have additive valuations, some buyer $i$ will buy an available item $j$ if and only if $v_i(\{j\}) > p_j$. In the case that for all buyers $v_i(\{j\}) < p_j$, the item is returned to the seller anyway. Hence, it is a dominant strategy to try selling all item for which $v_l(\{j\}) \leq p_j$ and keeping the items with $v_l(\{j\}) > p_j$ in order to maximize utility. 
\end{proof}

In order to conclude, we show a lemma bounding the competitive ratio of our mechanism with respect to the social welfare of the algorithm $\ALG$. As said, $\ALG$ can either be an optimal mechanism, leading to the desired $\frac{1}{2}$-competitive mechanism with respect to the optimal welfare, or $\ALG$ can be chosen to be any other approximation algorithm for the optimal social welfare which allocates all items. In the latter case, an $\alpha$-approximation algorithm $\ALG$ leads to an $\frac{\alpha}{2}$-competitive mechanism.

\begin{lemma}\label{Lemma:XOS_Approximation_SBB}
	The mechanism for combinatorial double auctions is $\frac{1}{2}$-competitive with respect to the social welfare of $\ALG$ for any (possibly adversarial) arrival order of buyers and sellers when buyers have valuation functions which can be represented by fractionally subadditive functions and sellers are unit-supply or both have additive valuation functions over item bundles. 
\end{lemma}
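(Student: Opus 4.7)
The plan is to reduce the two-sided mechanism to a one-sided sequential posted-prices mechanism and then invoke the prophet inequality of \citet{DBLP:conf/soda/FeldmanGL15} for static anonymous item prices. Consider the one-sided market in which every seller $l$ is first processed as a ``buyer'' whose feasible bundles are subsets of her initial holding $I_l$, and then the real buyers arrive in the adversarial order with the remaining items available. With prices $p_j = \tfrac{1}{2}\Ex[\widetilde{\mathbf v}\sim\mathcal D]{\SW_j(\widetilde{\mathbf v})}$ this is exactly the setup of \citet{DBLP:conf/soda/FeldmanGL15}. I would first argue that the outcome of our mechanism coincides with the outcome of this one-sided mechanism under truthful play. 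In the unit-supply + XOS case, seller $l$ has a single item $j$ and keeps it iff $v_l(\{j\}) \ge p_j$, which is precisely the best-response behavior of a unit-demand buyer facing the single item $j$ at price $p_j$. In the additive + additive case, each seller's utility decomposes across items and her utility-maximizing subset of $I_l$ to keep is exactly $\{j\in I_l : v_l(\{j\})\ge p_j\}$, which again matches the demand of an additive buyer restricted to $I_l$. Hence in both cases the allocation $\mathbf X$ produced by our mechanism equals the allocation of the one-sided posted-prices mechanism.

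Next I would import the FGL analysis. Let $\mathbf Y = (Y_i)_{i\in B\cup S}$ be $\ALG$'s allocation on an independently sampled profile $\widetilde{\mathbf v}$, with XOS supporting additive functions $a_i$ satisfying $a_i(Y_i) = \widetilde{v}_i(Y_i)$ and $a_i(T)\le \widetilde{v}_i(T)$ for all $T$, so that $\SW_j(\widetilde{\mathbf v}) = a_i(\{j\})$ for $j\in Y_i$. Split the mechanism's welfare into base value and surplus:
\[
\sum_{i\in B\cup S} v_i(X_i) \;=\; \sum_{j\in Q} p_j \;+\; \sum_{i\in B\cup S}\Bigl(v_i(X_i) - \sum_{j\in X_i\cap Q} p_j\Bigr),
\]
where $Q$ is the set of items that were either purchased by a buyer or explicitly kept by a seller at the offered price $p_j$ (items silently returned to sellers contribute to the surplus). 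The base value is $\sum_{j\in Q}p_j$, and because any agent who accepted an item did so voluntarily, each $v_i(X_i)\ge \sum_{j\in X_i\cap Q}p_j$ so the surplus is non-negative termwise.

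For the surplus, I would use the standard charging argument. Condition on the realization $\mathbf v$ of all agents and on the fresh sample $\widetilde{\mathbf v}$, and for each agent $i$ consider the deviation of purchasing the bundle $Y_i \cap (M \setminus Q_i)$, where $Q_i$ is the set of items in $Q$ at the moment $i$ is processed. Since $i$ picks a utility-maximizing bundle among items still available, her realized utility dominates this deviation's utility, which by the XOS property is at least $\sum_{j\in Y_i\setminus Q_i}(a_i(\{j\}) - p_j) = \sum_{j\in Y_i\setminus Q_i}(\SW_j(\widetilde{\mathbf v}) - p_j)$. Swapping $\mathbf v$ and $\widetilde{\mathbf v}$ (which are i.i.d.) and using that $Q_i \subseteq Q$ together with $\SW_j(\widetilde{\mathbf v})\ge p_j$ when summed in expectation, one obtains
\[
\Ex[]{\text{Surplus}} \;\ge\; \sum_{j\in M}\Bigl(\Ex{\SW_j(\widetilde{\mathbf v})} - p_j\Bigr)\Pr[]{j\notin Q} \;=\; \sum_{j\in M} p_j\cdot \Pr[]{j\notin Q},
\]
since $p_j = \tfrac12\Ex{\SW_j(\widetilde{\mathbf v})}$. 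Combined with $\Ex[]{\text{Base}} = \sum_{j\in M} p_j\cdot\Pr[]{j\in Q}$, this yields $\Ex[]{\mathbf v(\mathbf X)} \ge \sum_{j\in M} p_j = \tfrac12 \Ex{\widetilde{\mathbf v}(\mathbf Y)}$, which is the claimed $\tfrac12$-competitiveness against $\ALG$.

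The main obstacle I expect is the bookkeeping in the surplus step to ensure the ``deviation'' bundle $Y_i\setminus Q_i$ is genuinely available to agent $i$ at the time she is processed in the two-sided mechanism; this requires observing that an item leaves the available pool only when it enters $Q$, and that for a seller $l$ the relevant items are still the subset of $I_l$, so the deviation to ``keep $Y_l\cap I_l$'' is feasible in the mechanism. Once this feasibility is established, the rest of the analysis is a direct transcription of the FGL prophet inequality proof, which is why no separate welfare bound is needed beyond the two cases handled uniformly by the XOS representation used to define the prices.
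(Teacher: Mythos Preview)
Your overall approach is exactly the paper's: reduce to a one-sided posted-prices mechanism, split welfare into base value and surplus, and run the \citet{DBLP:conf/soda/FeldmanGL15} analysis. The set you call $Q$ is the paper's $\mathcal{A}$, and the base-value computation matches.

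There is, however, a genuine gap in your surplus step. You define $\mathbf{Y}$ as $\ALG$'s allocation on the fresh sample $\widetilde{\mathbf{v}}$, and then write that agent $i$'s deviation utility $v_i(Y_i\setminus Q_i)-\sum_{j\in Y_i\setminus Q_i}p_j$ is, ``by the XOS property,'' at least $\sum_{j\in Y_i\setminus Q_i}(a_i(\{j\})-p_j)$. But $a_i$ is the additive function supporting $\widetilde{v}_i$ on $Y_i(\widetilde{\mathbf v})$; it satisfies $a_i(T)\le \widetilde{v}_i(T)$ for all $T$, not $a_i(T)\le v_i(T)$. So the inequality $v_i(Y_i\setminus Q_i)\ge a_i(Y_i\setminus Q_i)$ is unjustified, and a global swap of $\mathbf v$ and $\widetilde{\mathbf v}$ does not repair it: after such a swap you would be bounding the surplus under $\widetilde{\mathbf v}$ by a deviation valued via $v_i$'s supporting function on $Y_i(\mathbf v)$, which has the same mismatch.

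The paper avoids this by computing $Y_i$ on the \emph{hybrid} profile $(v_i,\mathbf v_{-i}')$, so that the supporting additive function $a_i$ is for the agent's actual valuation $v_i$; the needed inequality $v_i(T)\ge a_i(T)$ then holds by definition of XOS. Since $Q_i$ (the allocated set just before $i$ is processed) does not depend on $v_i$, one can write $Q_i(\mathbf v)=Q_i((v_i',\mathbf v_{-i}))$, and only then swap $v_i\leftrightarrow v_i'$ coordinate-wise in expectation to obtain $\Ex{\text{surplus}_i}\ge \Ex[\mathbf v,\mathbf v']{\sum_{j\in Y_i(\mathbf v')\setminus Q_i(\mathbf v)}(\SW_j(\mathbf v')-p_j)^+}$. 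From there the step $Q_i\to Q$ goes through because the summands are nonnegative and $\mathds{1}_{j\notin Q_i}\ge \mathds{1}_{j\notin Q}$; your phrase ``$\SW_j(\widetilde{\mathbf v})\ge p_j$ when summed in expectation'' is not the right justification for that replacement. Once you make the hybrid-profile correction, the rest of your sketch aligns with the paper's proof.
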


\begin{proof}
	In order to show the desired competitive ratio, we mimic the techniques from \citet{DBLP:conf/soda/FeldmanGL15} and \citet{DBLP:conf/focs/DuettingFKL17}. Therefore, we split the contribution to social welfare into base value and surplus and bound each quantity separately. \\ Before we start, note that our mechanism consists of three phases. First, we ask all sellers which items should be sold and which they would like to keep. Afterwards, in the second phase, we ask all buyers which items they would like to buy. In the last phase, all unsold items are returned to the corresponding sellers. Note that the last phase only increases the welfare of our mechanism compared to a mechanism which would stop after the second phase and dispose all unallocated items. We do not consider the increase in welfare in the third phase and argue about the welfare which we have already achieved after the second phase. This is a lower bound on the overall social welfare of our mechanism. \\
	
	\textbf{Base Value}: Note that any item which is irrevocably allocated in the first two phases of the mechanism is allocated to an agent who has an item-specific value at least as high as the item price. Denote by $\A$ the set of irrevocably allocated items after the second phase, i.e. all items which are allocated before the last for-loop in the mechanism where we return unallocated items to their corresponding sellers. Note that $\A$ depends on the valuation profile $\mathbf{v}$. We write $\A(\mathbf{v})$ in order to specify this dependence here. Therefore, we can state the base value as
	\begin{align*}
	\Ex[\mathbf{v}]{\text{Base Value}(\mathbf{v})} = \sum_{j \in M} \Pr[\mathbf{v}]{j \in \A(\mathbf{v})} \cdot p_j \enspace.
	\end{align*}
	
	\textbf{Surplus}: For the surplus, we split the set of agents in buyers and sellers and consider them separately. Note that the sets $(Y_i)_{i \in B \cup S}$ depend on the valuation profile $\mathbf{v}$. Hence, we write $Y_i(\mathbf{v})$ for the bundle of items which are allocated to agent $i$ under valuation profile $\mathbf{v}$. Note that $\ALG$ can only allocate items to seller $l$ which are in $I_l$, i.e. only items which seller $l$ holds at the beginning of the mechanism.  \\
	
	\textit{Sellers}: Fix seller $l$. If $l$ is holding one item $j$ initially, then seller $l$ irrevocably keeps the item if $v_l(\{j\}) \geq p_j$. Therefore, seller $l$ has a considerable surplus if $\left(v_l(\{j\}) - p_j \right)^+ \geq 0$. The same argument extends to the case of additive valuation functions, as seller $l$ will initially keep all items for which $v_l(\{j\}) \geq p_j$ in order to maximize utility (see Lemma \ref{Lemma:gs_DISC_IR_SBB}). Counting the surplus of seller $l$ only for items in $Y_l\left((v_l,\mathbf{v}_{-l}')\right)$ is a feasible lower bound for the surplus of seller $l$ in our mechanism. Here, $\mathbf{v}' \sim \mathcal{D}$ denotes an independent sample.
	Hence, for seller $l$, we can bound the surplus via
	\begin{align*}
	\Ex[\mathbf{v}]{\textnormal{surplus}_l (\mathbf{v}) } &  \geq \Ex[\mathbf{v}, \mathbf{v}']{ \sum_{j \in Y_l\left((v_l,\mathbf{v}_{-l}')\right)} \left( \SW_j\left((v_l,\mathbf{v}_{-l}')\right) -  p_j \right)^+ } = \Ex[\mathbf{v}']{ \sum_{j \in Y_l\left(\mathbf{v}'\right)} \left( \SW_j\left(\mathbf{v}'\right) -  p_j \right)^+ } \enspace.
	\end{align*}
	
	We used that $\mathbf{v}$ and $\mathbf{v}'$ are independent and identically distributed. Additionally, we are able to rewrite $v_l\left( Y_l\left(\mathbf{v}\right) \right)$ as described in Section \ref{section:xos_sbb} via the additive set function $a_l$ and the contribution to social welfare $\SW_j\left(\mathbf{v}\right)$. \\
	
	\textit{Buyers}: Fix buyer $i$. Extending the notation from above, denote by $\A_i(\mathbf{v})$ the set of irrevocably allocated items as agent $i$ is considered in the mechanism. Note that the set $\A_i$ does not depend on $v_i$ but only on the agents which were considered before $i$. Hence, $\A_i(\mathbf{v}) = \A_i\left( (v_i', \mathbf{v}_{-i}) \right)$ for any other valuation $v_i'$ of buyer $i$. Buyer $i$ could purchase the set $Y_i\left((v_i,\mathbf{v}_{-i}')\right) \setminus \A_i\left( (v_i', \mathbf{v}_{-i}) \right)$. As buyer $i$ maximizes utility, the utility which buyer $i$ obtains must be at least as high as the utility when purchasing $Y_i\left((v_i,\mathbf{v}_{-i}')\right) \setminus \A_i\left( (v_i', \mathbf{v}_{-i}) \right)$. As the utility of buyer $i$ is captured in the surplus, we can bound the surplus of buyer $i$ as follows:
	
	\begin{align*}
	\Ex[\mathbf{v}]{\textnormal{surplus}_i (\mathbf{v}) } & \geq \Ex[\mathbf{v}, \mathbf{v}']{ \sum_{j \in Y_i\left((v_i, \mathbf{v}_{-i}')\right) \setminus \A_i\left((v_i', \mathbf{v}_{-i})\right) } \left( \SW_j\left((v_i, \mathbf{v}_{-i}')\right) - p_j \right)^+ } \\ & = \Ex[\mathbf{v}, \mathbf{v}']{ \sum_{j \in Y_i\left(\mathbf{v}'\right) \setminus \A_i\left( \mathbf{v}\right) } \left( \SW_j(\mathbf{v}') - p_j \right)^+ }
	\end{align*}
	
	\textit{Combination}: Next, we sum over all buyers and sellers. Further, we use that once an item is irrevocably allocated, it remains so until the end of the mechanism, hence $\A_i(\mathbf{v}) \subseteq \A(\mathbf{v})$ for any agent $i$ and any valuation profile $\mathbf{v}$. In order to simplify notation, note that $\A_l(\mathbf{v}) \cap I_l = \emptyset$ as we ask seller $l$ which items she wants to keep or try selling. As $Y_l \subseteq I_l$, we know that $Y_l \setminus \A_l(\mathbf{v}) = Y_l$ as seller $l$ arrives.
	
	\begin{align*}
	\Ex[\mathbf{v}]{\sum_{i \in B \cup S} \textnormal{surplus}_i (\mathbf{v})  } & \geq \Ex[\mathbf{v}, \mathbf{v}']{ \sum_{i \in B \cup S} \sum_{j \in M}  \left( \SW_j(\mathbf{v}') - p_j \right)^+ \cdot \mathds{1}_{j \in Y_i\left(\mathbf{v}'\right) } \cdot \mathds{1}_{j \notin \A_i\left( \mathbf{v}\right)} } \\ & \geq \sum_{j \in M} \sum_{i \in B \cup S} \Ex[\mathbf{v}, \mathbf{v}']{ \left( \SW_j(\mathbf{v}') - p_j \right)^+ \cdot \mathds{1}_{j \in Y_i\left(\mathbf{v}'\right) } \cdot \mathds{1}_{j \notin \A\left( \mathbf{v}\right)} } \\ & \geq \sum_{j \in M} \Pr[\mathbf{v}]{j \notin \A\left( \mathbf{v}\right)} \cdot \Ex[\mathbf{v}']{ \sum_{i \in B \cup S}  \left( \SW_j(\mathbf{v}') - p_j \right) \cdot \mathds{1}_{j \in Y_i\left(\mathbf{v}'\right) } } \\ & = \sum_{j \in M} \Pr[\mathbf{v}]{j \notin \A\left( \mathbf{v}\right)} \cdot p_j
	\end{align*}
	
	\textbf{Combining Base Value and Surplus}: Adding base value and surplus together, we get the desired bound:
	
	\begin{align*}
	\Ex[\mathbf{v}]{\text{Base Value}(\mathbf{v})} + \Ex[\mathbf{v}]{\sum_{i \in B \cup S} \textnormal{surplus}_i (\mathbf{v})  } & \geq \sum_{j \in M} \left( \Pr[\mathbf{v}]{j \in \A\left( \mathbf{v}\right)}  + \Pr[\mathbf{v}]{j \notin \A\left( \mathbf{v}\right)}  \right) \cdot p_j \\ & = \sum_{j \in M} p_j = \frac{1}{2} \Ex[\mathbf{v}]{\sum_{i \in B \cup S} v_i(Y_i)}
	\end{align*}
	
\end{proof}

As a consequence, by using an optimal algorithm for $\ALG$, our mechanism is $\frac{1}{2}$-competitive with respect to the optimal social welfare.
 
	\newpage
	\section{Appendix: Knapsack Double Auctions with weak budget-balance}
\label{appendix:knapsack_wbb}

We split the proof of Theorem \ref{Theorem:knapsack_wbb} in the two following lemmas.

\begin{lemma} \label{Lemma:Knapsack_DISC_IR_WBB}
	Mechanism \ref{mechanism_knapsack_wbb} for knapsack double auctions where no buyers demands more than half of the total capacity satisfies DWBB. Further, it is DSIC and IR for all buyers and sellers for any online adversarial order in which buyers and sellers are processed.
\end{lemma}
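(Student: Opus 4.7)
The plan is to verify each of the three properties separately, and in each case the argument is short because the mechanism is essentially a collection of independent take-it-or-leave-it offers at prices that do not depend on the querying agent's reported value.

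\textbf{DWBB.} I would first observe that every transfer of money occurs when a buyer $i$ purchases an item from some seller $j\in M_{\textnormal{SELL}}$: the buyer pays $p_i$ into the mechanism and the seller receives $p_j$. By definition, $p_i=\tfrac{2}{5}\,w_i^{\ast}\,\Ex[\widetilde{\mathbf{v}}]{\widetilde{\mathbf{v}}(\OPT(\widetilde{\mathbf{v}}))}$ and $p_j=\tfrac{2}{5}\,\tfrac{1}{k}\,\Ex[\widetilde{\mathbf{v}}]{\widetilde{\mathbf{v}}(\OPT(\widetilde{\mathbf{v}}))}$. Because $w_i^{\ast}=\max(w_i,1/k)\geq 1/k=w_j^{\ast}$, we get $p_i\geq p_j$, so the difference $p_i-p_j\geq 0$ is extracted by the mechanism and never redistributed. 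Hence the sum of buyers' payments weakly exceeds the sum of sellers' receipts, yielding WBB (in fact DWBB, since each trade is a bilateral reallocation).

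\textbf{IR.} For every buyer $i$, the mechanism either never offers a trade, or offers the bundle ``one item for price $p_i$'', which buyer $i$ only accepts when $v_i\geq p_i$. In all cases the buyer's utility is at least $0$. For every seller $j$, the mechanism either lets $j$ keep the item when $v_j\geq p_j$, or, if $v_j<p_j$, possibly sells the item later and pays $j$ an amount of $p_j>v_j$. Either outcome leaves seller $j$ with utility at least $v_j$.

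\textbf{DSIC.} Fix a buyer $i$. The set of prior agents processed before $i$ in the second \texttt{for}-loop is fixed (adversarially, but independently of $v_i$), and the feasibility check $\sum_{i'\in A}w_{i'}^{\ast}\leq 1-w_i^{\ast}$ together with the price $p_i$ depends only on $w_i$ and on prior agents' reports. Thus $i$ faces a single take-it-or-leave-it offer whose terms are independent of her reported value; reporting truthfully maximises her utility. Fix a seller $j$. She is considered exactly once in the first \texttt{for}-loop and is effectively offered to trade her item for a guaranteed payment of $p_j$, where $p_j$ is independent of $v_j$. If $v_j\geq p_j$ she optimally keeps the item (utility $v_j$ versus $p_j$); if $v_j<p_j$ she prefers the trade (utility at least $p_j>v_j$). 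In both situations, truth-telling is optimal, because misreporting her value can only flip the outcome from the utility-maximising choice to the other one. Note that, unlike in Section~\ref{section:matroide_sbb}, we do not need a monotonicity-of-prices argument, because each seller is queried only once and at a value-independent price.

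The main thing to be careful about is that the seller's utility really is $\max(v_j,p_j)$ under truth-telling: the seller who initially refuses the item may later be selected from $M_{\textnormal{SELL}}$ to fulfil a trade with some buyer and then earns exactly $p_j$, while if no such buyer exists the item is returned to her (line $A\longleftarrow A\cup M_{\textnormal{SELL}}$) and she obtains $v_j$, which is at least $0$; in either case her utility is at least $p_j$ when $v_j<p_j$, hence the IR/DSIC argument for sellers is valid.
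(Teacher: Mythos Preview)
Your proof is correct and follows essentially the same approach as the paper: the paper's own argument is a terse three-liner noting that $w_j^\ast\le w_i^\ast$ for every buyer--seller pair (hence $p_j\le p_i$, giving DWBB), that IR is immediate, and that DSIC holds because every agent is queried exactly once at a value-independent price. Your expansion of these points is sound.

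One small slip to fix: in your final paragraph you assert that a truthful seller with $v_j<p_j$ has utility ``at least $p_j$ in either case'' and that her utility is $\max(v_j,p_j)$. That is not true---if the item is returned unsold she obtains only $v_j<p_j$. Fortunately this does not damage the DSIC argument: the correct comparison is that placing the item in $M_{\textnormal{SELL}}$ yields utility in $\{v_j,p_j\}$, while keeping it yields $v_j$, so when $v_j<p_j$ truth-telling weakly dominates (and symmetrically when $v_j\ge p_j$). Just drop the incorrect ``$\max(v_j,p_j)$'' and ``at least $p_j$'' claims and the argument is clean.
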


\begin{proof}
	Our mechanism is DWBB, as by construction, the mechanism consists of bilateral trades where an item is traded from one seller to one buyer. For any seller $j$ we have that $w_j^\ast \leq w_i^\ast$ for all buyers $i$, and hence $p_j \leq p_i$ for any buyer-seller pair $i,j$. The buyer pays $p_i$ to the mechanism and the seller receives $p_j$, so we get DWBB. IR follows naturally, DSIC from the fact that any agent is asked at most once in our mechanism.
\end{proof}

\begin{lemma}\label{Lemma:Knapsack_wbb_approximation}
	Mechanism \ref{mechanism_knapsack_wbb} for knapsack double auctions where no buyers demands more than half of the total capacity is $\frac{1}{5}$-competitive with respect to the optimal social welfare for any online adversarial order of buyers and sellers.
\end{lemma}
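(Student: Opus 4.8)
The plan is to mirror the structure of the proof of Lemma~\ref{Lemma:Knapsack_sbb_approximation} for the strongly budget-balanced mechanism, since Algorithm~\ref{mechanism_knapsack_wbb} is conceptually the two-phase (sellers first, then buyers) analogue with prices scaled by $\frac{2}{5}$ instead of $\frac{2}{7}$. I would write $A(\mathbf{v})$ and $W(\mathbf{v}) \coloneqq \sum_{i \in A(\mathbf{v})} w_i^\ast$ for the realized outcome, and split the welfare $\mathbf{v}(A(\mathbf{v}))$ of each agent that keeps or buys an item into a \emph{base value} (the part below her price $p_i$) and a \emph{surplus} (the part above). The goal is to show $\Ex[\mathbf{v}]{\mathbf{v}(A(\mathbf{v}))} \geq \frac{1}{5}\Ex[\widetilde{\mathbf{v}}]{\widetilde{\mathbf{v}}(\OPT(\widetilde{\mathbf{v}}))}$, and I would bound base value and surplus separately, then add them.

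For the \textbf{base value}, I sum $p_i$ over all agents in $A(\mathbf{v})$ — here, unlike the SBB case, the sum includes both the sellers who keep their items (asked at price $p_j = \frac{2}{5} w_j^\ast \Ex{\OPT}$ with $w_j^\ast = \frac{1}{k}$) and the buyers who bought, which together contributes exactly $\frac{2}{5}\Ex{\OPT} \cdot \Ex[\mathbf{v}]{W(\mathbf{v})}$. I lower-bound $\Ex[\mathbf{v}]{W(\mathbf{v})} \geq \frac{1}{2}\Pr[\mathbf{v}]{W(\mathbf{v}) \geq \frac{1}{2}}$. For the \textbf{surplus}, I treat buyers and sellers separately using a hallucination argument with an independent copy $\mathbf{v}'$: an agent $i$ (buyer or seller) has surplus at least $(v_i - p_i)^+ \cdot \mathds{1}_{W((v_i',\mathbf{v}_{-i})) \leq \frac12}$, because the event $W \leq \frac12$ (on the hallucinated profile) certifies that $i$ could have been feasibly added, and for sellers a seller always keeps her item once $v_j \geq p_j$. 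Crucially, since the pricing here does not depend on the partial allocation (prices are static, determined only by $w_i^\ast$ and $\Ex{\OPT}$), there is no "first matched buyer" subtlety as in the SBB proof — every seller's price is simply $p_j$, which simplifies the seller surplus term. After multiplying by the indicator $\mathds{1}_{i \in \OPT(\mathbf{v}')}$ and swapping $v_i$ for $v_i'$, I sum over $i \in B \cup S$, pull out $q_{\mathbf{v}} \coloneqq \Pr[\mathbf{v}]{W(\mathbf{v}) \leq \frac12}$, and get a lower bound of $q_{\mathbf{v}} \cdot \left( \Ex[\mathbf{v}']{\mathbf{v}'(\OPT(\mathbf{v}'))} - \Ex[\mathbf{v}']{\sum_{i \in \OPT(\mathbf{v}') \cap B} p_i} - \Ex[\mathbf{v}']{\sum_{j \in \OPT(\mathbf{v}') \cap S} p_j} \right)$.

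To finish, I bound the two price sums: $\sum_{i \in \OPT(\mathbf{v}') \cap B} w_i^\ast \leq 2$ (capacity one plus the $\frac1k$-rounding over at most $k$ buyers) exactly as in Lemma~\ref{Lemma:Knapsack_sbb_approximation}, and $\sum_{j \in \OPT(\mathbf{v}') \cap S} w_j^\ast = \sum_{j} \frac1k \leq 1$ since at most $k$ sellers appear. Hence the two price sums total at most $3 \cdot \frac{2}{5}\Ex{\OPT}$, giving surplus $\geq q_{\mathbf{v}} \cdot (1 - \frac{6}{5}) \Ex{\OPT}$ — which is \emph{negative}, so this naive split fails. The main obstacle, therefore, is the same one resolved in the SBB proof by the $\frac{2}{7}$ choice: I must instead case on whether $q_{\mathbf{v}}$ is large or small. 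When $q_{\mathbf{v}} = \Pr{W \leq \frac12}$ is small, the base value term $\frac{2}{5}\Ex{\OPT}\cdot\frac12\cdot(1-q_{\mathbf{v}})$ is already close to $\frac15\Ex{\OPT}$; when $q_{\mathbf{v}}$ is large, one needs a tighter accounting of the surplus — in particular observing that when $W \leq \frac12$, the bound $\sum_{i \in \OPT(\mathbf{v}')\cap B} w_i^\ast \leq 2$ can be sharpened, or that the base-value weight sum and the surplus weight sum overlap so that not all of the $\frac{6}{5}$ is lost simultaneously. I expect the correct balance to come out to exactly $\frac15$ with the factor $\frac25$, via the identity (base value weight) $+$ (surplus correction) telescoping against the full $\Ex{\OPT}$, and the bulk of the work is in carefully tracking which capacity budget is charged where; the per-agent hallucination inequalities themselves are routine once set up.
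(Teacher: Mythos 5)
There is a genuine gap, and it is precisely at the point where you declare the ``naive split'' to fail. Your bound on the price sum over $\OPT(\mathbf{v}')$ is too loose because you double-count the $\frac{1}{k}$-rounding terms: you charge $\sum_{i \in \OPT(\mathbf{v}')\cap B} w_i^\ast \leq 1 + \lvert \OPT(\mathbf{v}')\cap B\rvert\cdot\frac{1}{k} \leq 1+1 = 2$ and, \emph{separately}, $\sum_{j \in \OPT(\mathbf{v}')\cap S} w_j^\ast \leq \lvert \OPT(\mathbf{v}')\cap S\rvert\cdot\frac{1}{k}\leq 1$, implicitly allowing both $\lvert \OPT(\mathbf{v}')\cap B\rvert$ and $\lvert \OPT(\mathbf{v}')\cap S\rvert$ to be as large as $k$. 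But any feasible allocation assigns at most $k$ items in total, so $\lvert \OPT(\mathbf{v}')\cap B\rvert + \lvert \OPT(\mathbf{v}')\cap S\rvert = \lvert\OPT(\mathbf{v}')\rvert \leq k$, and the correct accounting is
\[
\sum_{i \in \OPT(\mathbf{v}')} w_i^\ast \;\leq\; \sum_{i \in \OPT(\mathbf{v}')\cap B} w_i \;+\; \sum_{i \in \OPT(\mathbf{v}')} \frac{1}{k} \;\leq\; 1 + 1 \;=\; 2,
\]
not $3$. With this, the price sum is at most $2\cdot\frac{2}{5}\Ex[\widetilde{\mathbf{v}}]{\widetilde{\mathbf{v}}(\OPT(\widetilde{\mathbf{v}}))}$, the surplus is at least $q_{\mathbf{v}}\cdot\bigl(1-\frac{4}{5}\bigr)\Ex[\widetilde{\mathbf{v}}]{\widetilde{\mathbf{v}}(\OPT(\widetilde{\mathbf{v}}))} = q_{\mathbf{v}}\cdot\frac{1}{5}\Ex[\widetilde{\mathbf{v}}]{\widetilde{\mathbf{v}}(\OPT(\widetilde{\mathbf{v}}))}$, and adding the base value $\geq \frac{1}{5}(1-q_{\mathbf{v}})\Ex[\widetilde{\mathbf{v}}]{\widetilde{\mathbf{v}}(\OPT(\widetilde{\mathbf{v}}))}$ gives exactly the claimed $\frac{1}{5}$ with no further work. (Note the contrast with the SBB proof you are mirroring: there the three separate charges of $\frac{2}{7}$ are genuinely needed because the seller term involves the \emph{matched buyer's} price $p_{i_j}$, whose artificial weight can be as large as $w_{i_j}$; here sellers carry their own weight $\frac{1}{k}$, which folds into the single $\lvert\OPT\rvert/k\leq 1$ budget.)

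Everything up to that point in your plan is sound and matches the paper's argument: the base value $\frac{2}{5}\Ex{\OPT}\cdot\Ex[\mathbf{v}]{W(\mathbf{v})}\geq \frac{1}{5}\Ex{\OPT}\cdot\Pr{W\geq\frac{1}{2}}$, the hallucination argument with an independent copy $\mathbf{v}'$, the observation that static prices remove the ``first matched buyer'' subtlety, and the use of $w_i\leq\frac{1}{2}$ and $k\geq 2$ to certify feasibility via the event $W\leq\frac{1}{2}$. However, the escape route you sketch after the ``failure'' --- casing on whether $q_{\mathbf{v}}$ is large or small --- is not carried out, is not needed, and would not obviously close the gap on its own (when $q_{\mathbf{v}}$ is close to $1$ your surplus bound is strictly negative and the base value contributes essentially nothing, so no convex combination of your two bounds rescues the constant). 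The missing idea is simply the joint cardinality bound $\lvert\OPT(\mathbf{v}')\rvert\leq k$.
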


\begin{proof}
	The set of agents who receive an item $A$ depends on $\mathbf{v}$, so we denote by $A(\mathbf{v})$ the set $A$ under valuation profile $\mathbf{v}$. We want to compare $\Ex[\mathbf{v}]{\mathbf{v}(A(\mathbf{v}))}$ to $\Ex[\mathbf{v}]{\mathbf{v}(\OPT(\mathbf{v}))}$. To this end, again, we split the welfare of our algorithm into two parts, the base value and the surplus, and bound each quantity separately. The base value is thereby defined as follows: let agent $i$ receive an item in our mechanism, i.e. $i \in A$. Any buyer who gets an item has paid her agent-specific price for an item. Any seller who decided to keep her item was asked to keep it for her seller-specific price. The part of any agent $i$'s value which is below this price is denoted the base value. The surplus is the part of any agent $i$'s value above this threshold if it exists, otherwise it is zero. \\
	
	\textbf{Base Value}: Our base part of the social welfare is defined via the prices. Summing over all agents in $A(\mathbf{v})$, we can compute the following: 
	\begin{align*}
	\Ex[\mathbf{v}]{ \sum_{i \in A(\mathbf{v})} p_i   } = \frac{2}{5} \Ex[\widetilde{\mathbf{v}}]{\widetilde{\mathbf{v}}\left( \OPT (\widetilde{\mathbf{v}}) \right)} \cdot \Ex[\mathbf{v}]{\sum_{i \in A(\mathbf{v})} w_i^\ast} \geq \frac{2}{5} \Ex[\widetilde{\mathbf{v}}]{\widetilde{\mathbf{v}}\left( \OPT (\widetilde{\mathbf{v}}) \right)} \cdot \frac{1}{2} \Pr[\mathbf{v}]{\sum_{i \in A(\mathbf{v})} w_i^\ast \geq \frac{1}{2}} \enspace.
	\end{align*}
	
	\textbf{Surplus}: 
	We consider buyers and sellers separately and combine their respective contributions to the surplus afterwards. \\
	
	\textit{Sellers}: Note that any seller whose value exceeds her corresponding price can keep the item, so \[ \text{surplus}_j \geq \left( v_j - p_j \right)^+ \geq \left( v_j - p_j \right)^+ \cdot \mathds{1}_{\sum_{i' \in A\left((v_j', \mathbf{v}_{-j})\right)} w_{i'}^\ast \leq \frac{1}{2} } \cdot \mathds{1}_{j \in \OPT\left( (v_j , \mathbf{v}_{-j}') \right)} \enspace. \] Now, we use that $\mathbf{v}$ and $\mathbf{v}'$ are independent and identically distributed combined with linearity of expectation to get \[ \Ex[\mathbf{v}]{\text{surplus}_j } \geq \Ex[\mathbf{v}]{ \left( v_j - p_j \right)^+ } \geq \Ex[\mathbf{v}, \mathbf{v}']{ \left( v_j' - p_j \right)^+ \cdot \mathds{1}_{\sum_{i' \in A\left(\mathbf{v}\right)} w_{i'}^\ast \leq \frac{1}{2} } \cdot \mathds{1}_{j \in \OPT\left( \mathbf{v}' \right)} } \enspace. \]
	
	\textit{Buyers}: Concerning the buyers, note that buyer $i$ gets an item if buyer $i$'s value exceeds her price and if the sum of the weights of agents in $A$ does allow $i$ to be added. That is, denote by $A_{i-1}$ the set of accepted agents $A$ after processing buyer $i-1$. Then, we ensure $\sum_{i' \in A_{i-1} (\mathbf{v})} w_{i'}^\ast \leq 1 - w_i^\ast$. Note that $A_{i-1}$ does not depend on buyer $i$, so in particular $A_{i-1} (\mathbf{v}) = A_{i-1} \left((v_i',\mathbf{v}_{-i})\right)$. Further, a even stronger condition is that $\sum_{i' \in A \left((v_i',\mathbf{v}_{-i})\right)} w_{i'}^\ast \leq \frac{1}{2}$ as we did assume that $k \geq 2$ and $w_i \leq \frac{1}{2}$. Therefore, we can bound \[ \text{surplus}_i \geq \left( v_i - p_i \right)^+ \cdot \mathds{1}_{\sum_{i' \in A\left((v_i', \mathbf{v}_{-i})\right)} w_{i'}^\ast \leq \frac{1}{2} } \geq \left( v_i - p_i \right)^+ \cdot \mathds{1}_{\sum_{i' \in A\left((v_i', \mathbf{v}_{-i})\right)} w_{i'}^\ast \leq \frac{1}{2} } \cdot \mathds{1}_{i \in \OPT\left( (v_i , \mathbf{v}_{-i}') \right)}  \enspace. \] Again, using linearity of expectation as well as choosing $v_i'$ and $v_i$ to be independent and identically distributed, we get \[ \Ex[\mathbf{v}]{\text{surplus}_i} \geq \Ex[\mathbf{v}, \mathbf{v}']{\left( v_i' - p_i \right)^+ \cdot \mathds{1}_{\sum_{i' \in A\left(\mathbf{v}\right)} w_{i'}^\ast \leq \frac{1}{2} } \cdot \mathds{1}_{i \in \OPT\left( \mathbf{v}'\right)} } \enspace. \]
	
	\textit{Combination}: Summing over all buyers and sellers, we can combine the two bounds:
	\begin{align*}
	\Ex[\mathbf{v}]{\sum_{i \in B \cup S} \text{surplus}_i } & \geq \sum_{i \in B \cup S} \Ex[\mathbf{v}, \mathbf{v}']{\left( v_i' - p_i \right)^+ \cdot \mathds{1}_{\sum_{i' \in A\left(\mathbf{v}\right)} w_{i'}^\ast \leq \frac{1}{2} } \cdot \mathds{1}_{i \in \OPT\left( \mathbf{v}'\right)} } \\ & = \Pr[\mathbf{v}]{\sum_{i' \in A\left(\mathbf{v}\right)} w_{i'}^\ast \leq \frac{1}{2}} \cdot \Ex[\mathbf{v}']{ \sum_{i \in \OPT\left( \mathbf{v}'\right)} \left( v_i' - p_i \right)^+   } \\ & \geq \Pr[\mathbf{v}]{\sum_{i' \in A\left(\mathbf{v}\right)} w_{i'}^\ast \leq \frac{1}{2}} \cdot \left( \Ex[\mathbf{v}']{ \mathbf{v}' \left( \OPT(\mathbf{v}') \right)} - \Ex[\mathbf{v}']{ \sum_{i \in \OPT(\mathbf{v}')} p_i }  \right)  
	\end{align*}	
	
	To get the equality, note that $\mathbf{v}$ and $\mathbf{v}'$ are independent and the respective terms each only depend on one of the two. Now, in order to bound the sum of prices, we calculate \[ \Ex[\mathbf{v}']{ \sum_{i \in \OPT(\mathbf{v}')} p_i } = \frac{2}{5} \Ex[\widetilde{\mathbf{v}}]{ \widetilde{\mathbf{v}} \left( \OPT(\widetilde{\mathbf{v}}) \right)} \cdot \Ex[\mathbf{v}']{ \sum_{i \in \OPT(\mathbf{v}')} w_i^\ast  }  \enspace.  \] We use that we can bound $w_i^\ast = \max(w_i, \frac{1}{k}) \leq w_i + \frac{1}{k}$ on the buyers' side as well as $w_i^\ast = \frac{1}{k}$ for all sellers to get \[ \sum_{i \in \OPT(\mathbf{v}')} w_i^\ast \leq  \sum_{i \in \OPT(\mathbf{v}') \cap B} w_i + \sum_{i \in \OPT(\mathbf{v}')} \frac{1}{k} \leq 1+1 = 2 \] as the sum over the weights of all buyers in any feasible allocation is upper bounded by $1$ and further, we cannot allocate more than $k$ items in any feasible allocation, so $| \OPT(\mathbf{v}')| \leq k$. \\ Therefore, we can bound the overall surplus by \[ \Ex[\mathbf{v}]{\sum_{i \in B \cup S} \text{surplus}_i } \geq  \Pr[\mathbf{v}]{\sum_{i' \in A\left(\mathbf{v}\right)} w_{i'}^\ast \leq \frac{1}{2}} \cdot \left( 1 - \frac{4}{5} \right) \Ex[\widetilde{\mathbf{v}}]{ \widetilde{\mathbf{v}} \left( \OPT(\widetilde{\mathbf{v}}) \right)} \enspace. \]
	
	Summing the base value and the surplus proves our claim as we can exploit that $\mathbf{v}$, $\mathbf{v}'$ and $\widetilde{\mathbf{v}}$ are independent and identically distributed.
\end{proof}

In order to extend this to the general case when $w_i \in [0,1]$ instead of $w_i \leq 1/2$, we can run the following procedure: Split the set of buyers in those with $w_i \leq \frac{1}{2}$ and those with $w_i > \frac{1}{2}$. If we only consider the former buyers for trades, we know that our mechanism gives a $\frac{1}{5}$-fraction of the optimal social welfare. When restricting to the case of the latter buyers, the weights ensure that we can only allow at most one trade. Therefore, the setting simplifies to the matroid setting with the $1$-uniform matroid over the set of buyers for which we can extract half of the optimal social welfare using our mechanism from Section \ref{section:matroide_wbb}. Overall, estimating the expected welfare of each of the two options and selecting the better one will lead to a mechanism which always obtains at least a $\frac{1}{7}$-fraction of the optimal social welfare. Therefore, we can formulate Theorem \ref{Theorem:knapsack_wbb_unrestricted}.

\end{document}